\definecolor{fxnote}{rgb}{0.8000,0.0000,0.0000}
\colorlet{fxnotebg}{yellow}
\newtheorem{lemma}{Lemma}
\newtheorem{theorem}[lemma]{Theorem}
\newtheorem{proposition}[lemma]{Proposition}
\newtheorem{corollary}[lemma]{Corollary}
\newtheorem{remark}[lemma]{Remark}
\newcommand{\R}{\mathbb{R}}
\newcommand{\C}{\mathbb{C}}
\newcommand{\ii}{\infty}
\newcommand\1{{\ensuremath {\mathds 1} }}
\renewcommand\phi{\varphi}
\newcommand{\gH}{\mathfrak{H}}
\newcommand{\gS}{\mathfrak{S}}
\newcommand{\cE}{\mathcal{E}}
\newcommand{\alp}{\boldsymbol{\alpha}}
\newcommand\pscal[1]{{\ensuremath{\left\langle #1 \right\rangle}}}
\newcommand{\norm}[1]{ \left\| #1 \right\|}
\newcommand{\tr}{{\rm Tr}\,}
\renewcommand{\geq}{\geqslant}
\renewcommand{\leq}{\leqslant}
\newcommand{\eps}{\varepsilon}
\newcommand{\nn}{\nonumber}
\newcommand{\Dk}{D_{\kappa}}
\newcommand{\s}{\mathfrak{S}}
\newcommand{\dd}{\mathrm{d}}
\title{The Scott correction in Dirac-Fock theory}
\author[S. Fournais]{S{\o}ren Fournais}
\address{Department of Mathematics, Aarhus University, Ny Munkegade 118, DK-8000 Aarhus C, Denmark} 
\email{fournais@math.au.dk}
\author[M. Lewin]{Mathieu Lewin}
\address{CNRS \& CEREMADE, Paris-Dauphine University, PSL University, 75016 Paris, France} 
\email{mathieu.lewin@math.cnrs.fr}
\author[A. Triay]{Arnaud Triay}
\address{CEREMADE, Paris-Dauphine University, PSL University, 75016 Paris, France} 
\email{triay@ceremade.dauphine.fr}
\date{\today}
\begin{document}
 
 \begin{abstract}
We give the first derivation of the Scott correction in the large-$Z$ expansion of the energy of an atom in Dirac-Fock theory without projections. 

\bigskip

\noindent \sl \copyright~2019 by the authors. This paper may be reproduced, in its entirety, for non-commercial purposes.
 \end{abstract}

 \maketitle

 \tableofcontents

%%%%%%%%%%%%%%%%%%%%%%%%%%%%%%%%%%%%%%%%%%%%%%%%%%%%%%%%%
%%%%%%%%%%%%%%%%%%%%%%%%%%%%%%%%%%%%%%%%%%%%%%%%%%%%%%%%% 
\section{Introduction} 
%%%%%%%%%%%%%%%%%%%%%%%%%%%%%%%%%%%%%%%%%%%%%%%%%%%%%%%%%
%%%%%%%%%%%%%%%%%%%%%%%%%%%%%%%%%%%%%%%%%%%%%%%%%%%%%%%%%

An impressive success of the many-particle Schr\"odinger equation is its theoretical ability to describe all the atoms of the periodic table. This model has no other parameter than the integer $N=Z$, where $N$ is the number of electrons which is equal to the number of protons $Z$ in a neutral atom. Unfortunately, the exponentially increasing complexity of the problem in $N$ makes any precise computation of the $N$-particle wavefunction impossible in practice. It is therefore important to rely on approximate models and to know whether the true equation simplifies in some limits. 

The most famous result in this direction is the Lieb-Simon proof~\cite{LieSim-73,LieSim-77b,LieSim-77,Lieb-81b} of the leading asymptotics of the ground state energy of a non-relativistic atom with $N$ quantum electrons and a pointwise nucleus of charge $Z=N$,
\begin{equation}
E_{\rm NR}(N,Z=N)=Z^{\frac73}e_{\rm TF}+o(Z^{\frac73})
\label{eq:TF_NR}
\end{equation}
where
\begin{multline*}
e_{\rm TF}=\min_{\substack{\rho\geq0\\ \int_{\R^3}\rho=1}}\bigg\{\frac3{10}(3\pi^2)^{\frac23} \int_{\R^3}\rho(x)^{\frac53}\,dx-\int_{\R^3}\frac{\rho(x)}{|x|}\,dx\\
+\frac12\iint_{\R^3\times\R^3}\frac{\rho(x)\rho(y)}{|x-y|}\,dx\,dy\bigg\}
\label{eq:TF_energy}
\end{multline*}
is the minimum Thomas-Fermi energy~\cite{Thomas-27,Fermi-27}. Thomas-Fermi theory does not only provide the leading order of the energy. It also describes the precise behavior of the density of electrons at the distance $Z^{-1/3}$ to the nucleus and it is believed to provide a surprisingly accurate estimate on the size of alkali atoms~\cite{Solovej-16}. 

The expansion~\eqref{eq:TF_NR} has been continued in many works. The best result known at the moment is
\begin{equation}
E_{\rm NR}(N,Z=N)=Z^{\frac73}e_{\rm TF}+\frac{Z^2}{2}+Z^{\frac53}c_{\rm DS}+o(Z^{\frac53}).
\label{eq:TFSDS_NR}
\end{equation}
The $Z^2$ term is the \emph{Scott correction}~\cite{Scott-52,Schwinger-80} which is the main subject of this article and was rigorously derived in~\cite{SieWei-87,SieWei-87b,SieWei-89,Hugues-90,IvrSig-93,IanLieSie-95}. This was then generalized in several directions~\cite{Bach-89,SolSpi-03}, including for magnetic fields~\cite{Ivrii-96,Ivrii-97,Sobolev-96}. The next order $Z^{5/3}$ contains both an exchange term predicted by Dirac~\cite{Dirac-28b} and a semi-classical correction derived by Schwinger~\cite{Schwinger-81,EngSch-84a,EngSch-84b,EngSch-84c}. It was rigorously established in an impressive series of works by Fefferman and Seco~\cite{FefSec-89,FefSec-90,FefSec-92,FefSec-93,FefSec-94,FefSec-94b,FefSec-94c,FefSec-94d}. It should be mentioned that although the leading $Z^{7/3}$ Thomas-Fermi term and the $Z^{5/3}$ Dirac exchange term are somewhat universal (that is, arise for other types of interactions in mean-field limits~\cite{FouLewSol-18,Bach-92,GraSol-94,BenNamPorSchSei-18}), the $Z^2$ Scott correction and the $Z^{5/3}$ Schwinger term are specific to the Coulomb potential. More precisely, these are semi-classical corrections due to the singularity of the Coulomb potential at the origin. It should also be noted that the three leading terms in~\eqref{eq:TF_NR} are already correctly described by Hartree-Fock theory~\cite{Bach-92,Bach-93,GraSol-94}. The exchange term only participates to the $Z^{5/3}$ term and it can be dropped for the first two terms~\cite{LieSim-77}, leading to the so-called \emph{reduced} Hartree-Fock model~\cite{Solovej-91}. 

It is well known in Physics and Chemistry that, in heavy atoms, relativistic effects start to play an important role, even for not so large values of $Z$. Without relativity, gold would have the same color as silver~\cite{GlaAmb-10}, mercury would not be liquid at room temperature~\cite{CalPahWorSch-13} and cars would not start~\cite{ZalPyy-11}. The reason why relativistic effects become important is because, in an atom, most of the electrons live at a distance $Z^{-1/3}$ to the nucleus, hence they experiment very strong Coulomb forces leading to very high velocities, of the order of the speed of light. This is even more dramatic for the Scott correction which is due to the few electrons living at the very short distance $Z^{-1}$ to the nucleus. Indeed, Schwinger has predicted in~\cite{Schwinger-80} that small relativistic effects should not affect the leading Thomas-Fermi energy in the large-$Z$ expansion, but should modify the Scott correction. 

A truly relativistic model should involve the Dirac operator~\cite{Thaller,EstLewSer-08}. Unfortunately, there is no well-defined $N$-particle Dirac Hamiltonian at the moment~\cite{Derezinski-12}, except for $N=2$~\cite{DecOel-19}, and even if there was one it would probably have no bound state. In the very unlikely case of the existence of bound states, it would anyway be impossible to identify a ground state. The one-particle Dirac operator is unbounded both from above and below and any $N$-particle Dirac Hamiltonian would have the whole line as its spectrum. A better theory should involve bound states in Quantum Electrodynamics~\cite{Shabaev-02}, but this is far from being understood mathematically. 

Several authors have instead studied the expansion of the ground state for simplified relativistic models. S\o{}rensen studied a pseudo-relativistic Hamiltonian where the Laplacian is replaced by a non-local fractional Laplacian (the ``Chandrasekhar'' operator), and proved that the leading Thomas-Fermi term is unchanged in this case~\cite{Sorensen-05}. The Scott correction for this model was then derived in~\cite{Sorensen-98,SolSorSpi-10,FraSieWar-08}, but it does not coincide with Schwinger's original prediction~\cite{Schwinger-80}, since the spectral properties of the Dirac operator and of the fractional Laplacian are different. Siedentop and co-workers~\cite{CasSie-06,FraSieWar-09,HanSie-15} have then considered \emph{projected Dirac operators} in order to suppress its negative spectrum, in the spirit of Brown-Ravenhall~\cite{BroRav-51} and Mittleman~\cite{Mittleman-81}. However, the Scott correction depends in a non trivial way of the chosen projection, which is somewhat arbitrary. The expected relativistic Scott correction has been obtained in the recent work~\cite{HanSie-15} which covers the larger class of projections and in particular includes the positive spectral projection of the non-interacting Dirac-Coulomb operator, which happens to give the correct Scott term. However, discrepancies could re-appear in the next order term for this projection.

In this paper we provide the first rigorous derivation of the relativistic Scott correction in (reduced) Dirac-Fock theory without projection. As we have said, we cannot start with the ill-defined $N$-body Dirac theory. However, let us recall that the Scott correction is already fully included in mean-field theory, even without exchange term. In all the previous works on the Scott correction, the reduction from the $N$-particle Schr\"odinger Hamiltonian to the (reduced) Hartree-Fock ground state is usually an easy step. In the non-relativistic case, it for instance immediately follows from the Lieb-Oxford inequality~\cite{Lieb-79,LieOxf-80}. For this reason, it makes sense to directly start with (reduced) Dirac-Fock theory and prove the Scott correction within this theory. In order to simplify our exposition we discard the exchange term completely but we expect the same results when it is included. The exchange term is a lower order correction. Note that the predictions of Dirac-Fock theory for the Scott correction agree quite well with experimental data for $Z=1,...,120$, according to~\cite{Desclaux-73} and as was discussed in~\cite[Sec.~6]{HanSie-15}.

Dirac-Fock theory is the relativistic counterpart of the Hartree-Fock model and it has the advantage of having well-defined solutions which can be interpreted as ground states, even though the corresponding energy functional is unbounded from below~\cite{EstSer-99,Paturel-00,EstSer-01,EstSer-02a,HubSie-07,EstLewSer-08,Sere-09}. Those correspond to electronic states in the positive spectral subspace of their own mean-field Dirac operator. Hence this theory does rely on a projection but it is unknown \emph{a priori} and depends in a nonlinear way on the solution itself. Our task will therefore be to extend the result of Handrek and Siedentop~\cite{HanSie-15} for a fixed projection to the case of a self-consistent projection depending on the density matrix of the system. We will estimate the energy cost of replacing the self-consistent projection by the fixed Dirac-Coulomb projection in order to apply the result from~\cite{HanSie-15}. 

The paper is organized as follows. In the next section we first state some spectral properties of Dirac operators with Coulomb potentials before we are able to properly introduce the Dirac-Fock minimization principle and finally give its large-$Z$ expansion. The rest of the paper is devoted to the proof of our main result on the Scott correction.

\bigskip

\noindent\textbf{Acknowledgments.} We thank \'Eric S\'er\'e for fruitful discussions and for providing us with his unpublished work~\cite{Sere-09}. This project has received funding from the European Research Council (ERC) under the European Union's Horizon 2020 Research and Innovation Programme (Grant agreement MDFT No 725528).

%%%%%%%%%%%%%%%%%%%%%%%%%%%%%%%%%%%%%%%%%%%%%%%%%%%%%%%%%
%%%%%%%%%%%%%%%%%%%%%%%%%%%%%%%%%%%%%%%%%%%%%%%%%%%%%%%%% 
\section{Main result} 
%%%%%%%%%%%%%%%%%%%%%%%%%%%%%%%%%%%%%%%%%%%%%%%%%%%%%%%%%
%%%%%%%%%%%%%%%%%%%%%%%%%%%%%%%%%%%%%%%%%%%%%%%%%%%%%%%%%

%%%%%%%%%%%%%%%%%%%%%%%%%%%%%%%%%%%%%%%%%%%%%%%%%%%%%%%%%
\subsection{Gaps in Dirac-Coulomb operators}
%%%%%%%%%%%%%%%%%%%%%%%%%%%%%%%%%%%%%%%%%%%%%%%%%%%%%%%%%

In this section we discuss some important properties of Dirac operators with Coulomb potentials that will be important in our situation. Several tools introduced here are taken from~\cite{EstLewSer-19b_ppt}.

Throughout the whole section we will be looking at operators in the form
\begin{equation}
D_0+\rho\ast\frac{1}{|x|} 
\label{eq:Dirac_two_charges}
\end{equation}
where $\rho$ is a signed bounded measure in $\R^3$. We are typically interested in the case where $\rho=\alpha\rho_+-\kappa\delta_0$, with $\kappa\delta_0$ the density of the point nucleus and $\rho_+\in (L^1\cap L^{3/2})(\R^3)$ a more regular measure describing the quantum electrons. Here and everywhere, we work in a system of units such that $\hbar=m=c=1$. Then we have $\kappa=\alpha Z$ where $Z$ is the number of protons and $\alpha=e^2\simeq 1/137.04$ is the Sommerfeld fine structure constant, which is the square of the charge of the electron. We recall that the free Dirac operator $D_0$ in 3d is given by
\begin{equation}
D_0\ = -i\; \boldsymbol{\alpha}\cdot\boldsymbol{\nabla} + \beta = \ - i\
\sum^3_{k=1} {\bf \alpha}_k \partial _k + {\bf \beta},
\label{def_Dirac}
\end{equation}
where $\alpha_1$, $\alpha_2$, $\alpha_3$ and $\beta$ are $4\times4$ Hermitian matrices satisfying the anticommutation relations
\begin{equation} \label{CAR}
\left\lbrace
\begin{array}{rcl}
 {\alpha}_k
{\alpha}_\ell + {\alpha}_\ell
{\alpha}_k  & = &  2\,\delta_{k\ell}\,\1_{\C^4},\\
 {\alpha}_k {\beta} + {\beta} {\alpha}_k
& = & 0,\\
\beta^2 & = & \1_{\C^4}.
\end{array} \right. \end{equation}
The usual representation in $2\times 2$ blocks is given by 
$$ \beta=\left( \begin{matrix} I_2 & 0 \\ 0 & -I_2 \\ \end{matrix} \right),\quad \; \alpha_k=\left( \begin{matrix}
0 &\sigma_k \\ \sigma_k &0 \\ \end{matrix}\right),  \qquad k=1, 2, 3\,,
$$
with the Pauli matrices
$$\sigma _1=\left( \begin{matrix} 0 & 1
\\ 1 & 0 \\ \end{matrix} \right),\quad  \sigma_2=\left( \begin{matrix} 0 & -i \\
i & 0 \\  \end{matrix}\right),\quad  \sigma_3=\left( 
\begin{matrix} 1 & 0\\  0 &-1\\  \end{matrix}\right) \, .$$
The operator $D_0$ is self-adjoint in $L^2(\R^3,\C^4)$ with domain $H^1(\R^3,\C^4)$ and its spectrum is $\sigma(D_0)=(-\ii,-1]\cup[1,\ii)$, see~\cite{Thaller,EstLewSer-08}. 

If $|\rho|^{1/2}\in H^{1/2}(\R^3)$, then by the Hardy-Kato inequality
\begin{equation}
\frac{1}{|x|}\leq\frac\pi2 \sqrt{-\Delta},
\label{eq:Kato}
\end{equation}
the Coulomb potential is in $L^\ii(\R^3)$, with the pointwise bound
\begin{equation}
\left|\rho\ast\frac1{|x|}\right|\leq \frac\pi2 \pscal{\sqrt{|\rho|},\sqrt{-\Delta}\sqrt{|\rho|}}.
\label{eq:pointwise_bd_V}
\end{equation}
Hence $D_0+\rho\ast|x|^{-1}$ is self-adjoint on the same domain $H^1(\R^3,\C^4)$. When 
$$\rho=\rho_+-\kappa\delta_0$$
with $\rho_+\geq0$, $\rho_+^{1/2}\in H^{1/2}(\R^3)$ and $0\leq\kappa<1$, then it immediately follows that $D_0+\rho_+\ast|x|^{-1}-\kappa|x|^{-1}$ is self-adjoint on the same domain as $D_0-\kappa|x|^{-1}$. The latter operator has a unique distinguished self-adjoint extension on $H^1(\R^3,\C^4)$, whose domain is always included in $H^{1/2}(\R^3,\C^4)$. We refer for instance to~\cite[Section~1]{EstLewSer-19} for a review of important properties of such operators. 

The condition that $\kappa=\alpha Z<1$ means that in principle we cannot consider atoms with nuclear charge higher than $137$. In order to relate the Dirac-Coulomb model to its non-relativistic counterpart, we will take $\alpha\to0$ (non-relativistic limit) at the same time as $Z\to\ii$, while keeping $\kappa=\alpha Z$ fixed. This is the natural limit for the large-$Z$ expansion of relativistic systems.

The following well-known result is a more quantitative expression of the fact that the domain of $D_0+\rho\ast|x|^{-1}$ contains $H^1(\R^3,\C^4)$. 

\begin{lemma}[Upper bound on $(D_{\rho})^2$]
Let $\rho$ be a signed, bounded measure on $\R^3$. For every $\Psi\in H^1(\R^3,\C^4)$, we have 
\begin{equation}
\norm{\left(D_0+\rho\ast\frac{1}{|x|} \right)\Psi}_{L^2(\R^3,\C^4)}\leq \left(1+2|\rho|(\R^3)\right)\norm{D_0\Psi}_{L^2(\R^3,\C^4)}.
\label{eq:upper_bd_D_Psi}
\end{equation}
Hence, in the situations recalled above where $D_0+\rho\ast|x|^{-1}$ has a distinguished self-adjoint extension on $H^1(\R^3,\C^4)$, we have the operator inequality
\begin{equation}
\left(D_0+\rho\ast\frac{1}{|x|} \right)^2\leq \left(1+2|\rho|(\R^3)\right)^2|D_0|^2
\label{eq:upper_bd_D2}
\end{equation}
and
\begin{equation}
\left|D_0+\rho\ast\frac{1}{|x|} \right|\leq \left(1+2|\rho|(\R^3)\right)|D_0|.
\label{eq:upper_bd_D}
\end{equation}
\end{lemma}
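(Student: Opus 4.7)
The plan is to reduce everything to the pointwise inequality \eqref{eq:upper_bd_D_Psi} and then obtain the operator bounds by passing to quadratic forms and using operator monotonicity of the square root.

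\medskip

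\noindent\textbf{Step 1 (triangle inequality).} For $\Psi \in H^1(\R^3,\C^4)$, I would write
\[
\left\| \left(D_0 + \rho\ast |x|^{-1}\right)\Psi \right\| \;\leq\; \|D_0\Psi\| + \left\| (\rho\ast |x|^{-1})\,\Psi \right\|,
\]
so the whole problem reduces to bounding the multiplication term by $2|\rho|(\R^3)\,\|D_0\Psi\|$.

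\medskip

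\noindent\textbf{Step 2 (pointwise bound via Cauchy--Schwarz in the measure).} For the potential $V(x) = \int |x-y|^{-1}\,d\rho(y)$, applying Cauchy--Schwarz with respect to the finite positive measure $d|\rho|$ gives
\[
|V(x)|^{2} \;\leq\; \left(\int d|\rho|(y)\right) \int \frac{d|\rho|(y)}{|x-y|^{2}} \;=\; |\rho|(\R^{3}) \int \frac{d|\rho|(y)}{|x-y|^{2}}.
\]

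\medskip

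\noindent\textbf{Step 3 (Hardy's inequality after Fubini).} Multiplying by $|\Psi(x)|^2$, integrating in $x$, and using Fubini, I would invoke the translation invariant Hardy inequality $\int |\Psi(x)|^2 |x-y|^{-2}\,dx \leq 4\|\nabla \Psi\|^2$ for every $y\in\R^3$, obtaining
\[
\|V\Psi\|^{2} \;\leq\; |\rho|(\R^{3}) \int \left(\int \frac{|\Psi(x)|^{2}}{|x-y|^{2}}\,dx\right) d|\rho|(y) \;\leq\; 4\, |\rho|(\R^{3})^{2}\, \|\nabla\Psi\|^{2}.
\]
Since $D_0^{2} = -\Delta + \unn_{\C^4}$, one has $\|\nabla\Psi\| \leq \|D_0\Psi\|$, so $\|V\Psi\| \leq 2|\rho|(\R^3)\,\|D_0\Psi\|$, which combined with Step~1 yields \eqref{eq:upper_bd_D_Psi}.

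\medskip

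\noindent\textbf{Step 4 (from vectors to operators).} When $D_0+\rho\ast|x|^{-1}$ admits its distinguished self-adjoint extension on $H^{1}(\R^{3},\C^{4})$, the inequality \eqref{eq:upper_bd_D_Psi} squared reads
\[
\langle \Psi, (D_0+\rho\ast |x|^{-1})^{2}\Psi\rangle \;\leq\; (1+2|\rho|(\R^{3}))^{2}\,\langle \Psi, D_0^{2}\Psi\rangle
\]
for all $\Psi\in H^{1}$, which is a core for both self-adjoint operators $(D_0+\rho\ast|x|^{-1})^{2}$ and $D_0^{2}$. Thus \eqref{eq:upper_bd_D2} holds as an operator inequality. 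Finally, \eqref{eq:upper_bd_D} follows from \eqref{eq:upper_bd_D2} by applying operator monotonicity of $t\mapsto \sqrt{t}$ on $[0,\infty)$ to the positive operators $(D_0+\rho\ast|x|^{-1})^{2}$ and $(1+2|\rho|(\R^{3}))^{2}|D_0|^{2}$.

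\medskip

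There is no real obstacle here; the only point requiring a little care is Step~2, where one must treat $\rho$ as a signed measure and apply Cauchy--Schwarz against the total variation $|\rho|$ rather than against Lebesgue measure, so that the estimate does not require any regularity of $\rho$ beyond finite total mass.
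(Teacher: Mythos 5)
Your argument is correct and matches the paper's sketch: the key ingredient is Hardy's inequality $|x-y|^{-2}\leq 4(-\Delta)\leq 4|D_0|^2$, applied after Cauchy--Schwarz against $d|\rho|$ (Minkowski's integral inequality works equally well here), followed by operator monotonicity of the square root. One small imprecision in Step~4: $H^1$ is not a \emph{core} for $D_0^2$ (whose operator domain is $H^2$), nor for $(D_0+\rho\ast|x|^{-1})^2$ in the regime where the distinguished self-adjoint extension has domain strictly larger than $H^1$; what the argument really uses is that $H^1$ is the \emph{form domain} of $D_0^2$ and is contained in the form domain of $(D_0+\rho\ast|x|^{-1})^2$, so the quadratic-form inequality on $H^1$ already yields the operator inequality~\eqref{eq:upper_bd_D2}.
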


\begin{proof}
The estimate (\ref{eq:upper_bd_D_Psi}) follows from Hardy's inequality $|x|^{-2}\leq 4(-\Delta)\leq 4|D_0|^2$. The last inequality~\eqref{eq:upper_bd_D} is a consequence of~\eqref{eq:upper_bd_D2} since the square root is operator monotone.
\end{proof}

The purpose of the section is to discuss lower bounds similar to~\eqref{eq:upper_bd_D}. From now on, we use the shorthand notation
$$\boxed{D_\kappa:=D_0-\frac{\kappa}{|x|}}$$
for the usual Dirac-Coulomb operator, with $0\leq\kappa<1$, and
$$\boxed{D_{\kappa,\rho}:=D_0-\frac{\kappa}{|x|}+\rho\ast\frac{1}{|x|}}$$
when it is perturbed by a density $\rho$ (typically positive and regular enough in our context). 

We recall that the lowest eigenvalue of $D_\kappa$ in the gap $[-1;1]$ is $\sqrt{1-\kappa^2}$ and that the operator has an increasing sequence of eigenvalues tending to the upper threshold $1$~\cite{Thaller}. In addition, it was proved in~\cite{MormUl-17} that for all $0\leq\kappa<1$, there exists a constant $c_\kappa>0$ so that 
\begin{equation}
c_\kappa|D_0|\leq |D_\kappa|.
\label{eq:compare_Dirac_kappa}
\end{equation}
When $\kappa\geq\sqrt{3}/2$, $(D_\kappa)^2$ cannot be lower bounded by $(D_0)^2$, otherwise the domain would be equal to $H^1(\R^3,\C^4)$. However, due to the explicit form of the domain of $D_\kappa$ as explained in~\cite{EstLewSer-19}, we indeed have
\begin{equation}
 c_{\kappa}(s)|D_0|^{2s}\leq |D_\kappa|^2
 \label{eq:lower_bound_Dirac_Coulomb_powers}
\end{equation}
for all $0\leq s<\min(1,1/2+\sqrt{1-\kappa^2})$. By interpolation, this gives
\begin{equation}
 c'_{\kappa}(s)|D_0|^{1+\eta(2s-1)}\leq |D_\kappa|^{1+\eta}, \quad \forall \eta \in [0,1].
 \label{eq:lower_bound_Dirac_Coulomb_powers_eta}
\end{equation}

A natural question, which will play an important role later in our study, is to ask how big are the eigenvalues of $D_{\kappa,\rho}$ for a \emph{general} positive density $\rho$. For which $\rho$ can one guarantee that the gap around the origin is preserved? Following~\cite{EstLewSer-19b_ppt} we introduce a critical value $\nu_0$ that works for $\kappa\equiv0$, before looking at the case $\kappa>0$. 

Let $\rho$ be a non-negative density such that $\sqrt\rho\in H^{1/2}(\R^3)$ and $\int_{\R^3}\rho=1$. Since the associated Coulomb potential is bounded uniformly by~\eqref{eq:pointwise_bd_V}, the eigenvalues of $D_{0,\nu\rho}=D_0+\nu\rho\ast|x|^{-1}$ are all confined to an interval of size proportional to $\nu$ at the edges of the gap $[-1,1]$, for $\nu$ small enough. Actually, the min-max characterization of the eigenvalues from~\cite{DolEstSer-00} implies that there is no eigenvalue close to $1$ and that there are infinitely many close to $-1$, since the potential is repulsive. In addition, these eigenvalues are monotonically increasing with $\nu$. Let then $\nu_0(\rho)$ be the first value of the coupling constant $\nu$, for which the largest negative eigenvalue vanishes:
$$\nu_0(\rho):=\min\big\{\nu>0\ :\ 0\in\sigma(D_{0,\nu\rho})\big\}.$$
Let finally
\begin{equation}
\boxed{\nu_0:=\inf_{\substack{\rho\geq0\\ \rho(\R^3)=1 \\ \sqrt{\rho} \in H^{1/2}(\mathbb{R}^3)}}\nu_0(\rho)}
\label{eq:critical_nu}
\end{equation}
be the lowest possible critical value among all probability densities. Loosely speaking, $\nu_0$ is the largest possible repulsive charge that we can add while guaranteeing that the eigenvalues will stay in $[-1,0]$, independently of the shape of the density $\rho$. By charge conjugation, we get the reverse picture if we place an arbitrary attractive charge, that is, we allow negative $\nu$'s. 

All our next results will be stated in terms of this critical $\nu_0$. Following~\cite{EstLewSer-19}, we conjecture that $\nu_0=1$, that is, the worse case is when the density is a Dirac delta. The following is shown in~\cite{EstLewSer-19b_ppt}.

\begin{lemma}[Estimates on $\nu_0$~\cite{EstLewSer-19b_ppt}]
We have
\begin{equation}
0.91\simeq \frac{2}{\frac\pi2+\frac2\pi}\leq \nu_0\leq1.
\label{eq:estim_nu_c}
\end{equation}
\end{lemma}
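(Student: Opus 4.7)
\emph{Proof plan.}

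I would split the proof into the two bounds. For the upper bound $\nu_0 \leq 1$, the plan is to exhibit an approximately minimizing sequence of densities. Fix a smooth radial probability density $\rho_1$ with $\sqrt{\rho_1} \in H^{1/2}(\R^3)$ and set $\rho_\eps(x) := \eps^{-3}\rho_1(x/\eps)$. By Newton's theorem, the associated potential $V_\eps := \rho_\eps \ast |x|^{-1}$ is pointwise bounded by $|x|^{-1}$ and coincides with $|x|^{-1}$ outside $B(0, C\eps)$. I would then show, by a form-perturbation argument based on the uniform operator bound $V_\eps \leq (\pi/2)\sqrt{-\Delta}$, that $D_0 + \nu V_\eps$ converges in strong resolvent sense to $D_0 + \nu|x|^{-1}$ as $\eps \to 0$, so that the isolated eigenvalues in the gap $(-1,1)$ pass to the limit. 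By charge conjugation, the top negative eigenvalue of $D_0 + \nu|x|^{-1}$ is $-\sqrt{1-\nu^2}$, reflecting the smallest positive eigenvalue $\sqrt{1-\nu^2}$ of $\Dk$ at $\kappa=\nu$; this vanishes exactly at $\nu = 1$. Hence $\nu_0(\rho_\eps) \to 1$ and $\nu_0 \leq 1$.

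For the lower bound, I fix an admissible $\rho$ with $\int\rho = 1$, set $V = \rho \ast |x|^{-1} \geq 0$, and use charge conjugation to reduce to showing that the smallest positive eigenvalue $\lambda_1(\nu)$ of $D_0 - \nu V$ in the gap is strictly positive when $\nu < 2/(\pi/2 + 2/\pi)$. The main tool will be the Dolbeault--Esteban--S\'er\'e min-max characterization on upper $2$-spinors $\phi \in H^{1/2}(\R^3,\C^2)$: the effective eigenvalue $\lambda(\phi,\nu)$ is the unique solution of
\begin{equation*}
(\lambda-1)\|\phi\|^2 + \nu\langle\phi,V\phi\rangle = \langle\sigma\cdot\nabla\phi, (1+\lambda+\nu V)^{-1}\sigma\cdot\nabla\phi\rangle,
\end{equation*}
and $\lambda_1(\nu) = \inf_\phi \lambda(\phi,\nu)$. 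Taking the Schur complement at $\lambda = 0$, the condition $\lambda_1(\nu) \geq 0$ translates into the quadratic-form inequality
\begin{equation*}
\nu\langle\phi, V\phi\rangle \leq \|\phi\|^2 + \langle\sigma\cdot\nabla\phi, (1+\nu V)^{-1}\sigma\cdot\nabla\phi\rangle, \qquad \forall\,\phi \in H^{1/2}(\R^3,\C^2).
\end{equation*}
The uniform operator inequality $V \leq (\pi/2)\sqrt{-\Delta}$, which I would obtain by convolving the pointwise Hardy--Kato bound $|x-y|^{-1} \leq (\pi/2)\sqrt{-\Delta}$ against $\rho(y)\,dy$ and invoking the translation invariance of $\sqrt{-\Delta}$, will play a central role. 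Combining it with a Cauchy--Schwarz-type decomposition that symmetrizes the roles of $V$ and $\sqrt{-\Delta}$ --- weighing the Kato constant $\pi/2$ against the dual constant $2/\pi$ --- should reduce the inequality to a Tix-type sharp bound for the Brown--Ravenhall operator with Coulomb coupling $\nu$, yielding positivity precisely when $\nu\,(\pi/2 + 2/\pi)/2 \leq 1$.

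The hard part will be reaching the sharp constant. A crude use of $V \leq (\pi/2)\sqrt{-\Delta}$ in the quadratic form only delivers the weaker bound $\nu_0 \geq 2/\pi \approx 0.637$. Reaching the harmonic-mean value $2/(\pi/2 + 2/\pi) \approx 0.906$ requires exploiting the joint presence of $V$ in both the potential term $\nu\langle\phi,V\phi\rangle$ and inside the resolvent $(1+\nu V)^{-1}$ of the kinetic term, through a symmetric Brown--Ravenhall-type factorization of the form. Making this argument work uniformly over arbitrary admissible $\rho$ --- rather than only for the extremal Coulomb case $V = |x|^{-1}$, where Tix's original inequality applies directly --- is the technical heart of the proof.
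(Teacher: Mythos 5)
Your upper bound plan (concentrate $\rho$ at the origin, pass to $\delta_0$ by resolvent convergence, and read off the Coulomb eigenvalue $-\sqrt{1-\nu^2}\to 0$ as $\nu\to 1$) matches the paper's one-line argument, and is fine. The lower bound, however, is where your plan has a real gap. You correctly identify the Dolbeault--Esteban--S\'er\'e min-max as the right framework and correctly diagnose that a crude use of $V\leq(\pi/2)\sqrt{-\Delta}$ only yields $\nu_0\geq 2/\pi$. But you then set up a hard and, as you yourself acknowledge, unresolved program: re-deriving a Tix-type bound for a \emph{general} convolved potential $V=\rho\ast|x|^{-1}$ via some ``symmetric Brown--Ravenhall-type factorization''. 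That is exactly the step the paper does not need.

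The missing idea is simple and eliminates ``the technical heart'' you describe: Tix's inequality is affine in the potential, so it averages directly. By translation invariance, Tix gives, for every $y\in\R^3$ and $0\leq\nu\leq 2/(\pi/2+2/\pi)$,
\begin{equation*}
P_0^+\left(D_0-\frac{\nu}{|x-y|}\right)P_0^+\;\geq\;(1-\nu)\,P_0^+ .
\end{equation*}
Integrating this operator inequality against the probability measure $\rho(y)\,dy$ yields immediately
\begin{equation*}
P_0^+\bigl(D_0-\nu\,\rho\ast|x|^{-1}\bigr)P_0^+\;\geq\;(1-\nu)\,P_0^+ ,
\end{equation*}
that is, Tix's inequality already holds for \emph{every} probability density $\rho$, with the same constant. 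Plugging this into the DES min-max with the trivial choice $\Psi_-=0$ gives $\lambda_1(\nu)\geq 1-\nu>0$, whence $\nu_0(\rho)\geq 2/(\pi/2+2/\pi)$ uniformly in $\rho$. You noticed this convolution trick for the Hardy--Kato bound but did not realize it applies to Tix's inequality itself, which is precisely what removes the ``extremal Coulomb only'' obstruction and makes the lower bound short. As written, your lower-bound plan does not reach a proof; the Schur-complement/symmetrized-form route would need a genuinely new inequality that you have not supplied.
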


\begin{proof}
The upper bound is obtained by concentrating $\rho$ at the origin to make it converge to $\delta_0$ and by using the exact Coulomb value. The lower bound follows from Tix's inequality~\cite{Tix-98}
\begin{equation}
P_0^+\left(D_0-\frac\kappa{|x|}\right)P_0^+\geq(1-\kappa)P^+_0,\qquad\forall 0\leq\kappa\leq \frac{2}{\frac\pi2+\frac2\pi},
\label{eq:Tix}
\end{equation}
where $P_0^+=\1(D_0\geq0)$ is the positive spectral projection of the free Dirac operator. By translation invariance we deduce that 
\begin{equation}
P_0^+\left(D_0-\nu\rho\ast\frac1{|x|}\right)P_0^+\geq\left(1-\nu\right)P^+_0
\label{eq:Tix2}
\end{equation}
for every probability measure $\rho$.
Let us now explain how to derive the lower bound in~\eqref{eq:estim_nu_c} using~\eqref{eq:Tix2}.
We can use the min-max characterization from~\cite{DolEstSer-00}, as described also in~\cite{EstLewSer-19}. For simplicity we work with $D_{0,-\nu\rho}=D_0-\nu\rho\ast|x|^{-1}$ instead of $D_{0,\nu\rho}=D_0+\nu\rho\ast|x|^{-1}$, which is the same by charge conjugation. We define 
$$a_-:=\max_{\substack{\Psi\in P^-_0L^2\\ \|\Psi\|=1 }}\pscal{\Psi,\left(D_0-\nu\rho\ast|x|^{-1}\right)\Psi}$$
which satisfies $a_-=-1$ due to the negative sign of the potential $-\rho\ast|x|^{-1}$. Then we look at the min-max
$$\lambda_1(\nu):=\inf_{\Psi_+\in P^+_0L^2}\sup_{\Psi_-\in P^-_0L^2}\frac{\pscal{\Psi_++\Psi_-,\left(D_0-\nu\rho\ast|x|^{-1}\right)(\Psi_++\Psi_-)}}{\|\Psi_++\Psi_-\|^2}.$$
Taking $\Psi_-\equiv0$ and using Tix's inequality~\eqref{eq:Tix2}, we obtain
$$\lambda_1(\nu)\geq \left(1-\nu\right).$$
In particular $\lambda_1(\nu)>a_-$, hence $\lambda_1(\nu)$ is the lowest eigenvalue of $D_{-\nu\rho}$ above $-1$, by~\cite{DolEstSer-00}. This proves as we wanted that $\lambda_1(\nu)>0$ for $\nu<2/(\pi/2+2/\pi)$. In other words, we have $\nu_0(\rho)\geq2/(\pi/2+2/\pi)$. A slightly different proof is given in~\cite{EstLewSer-19b_ppt}.
\end{proof}

We now investigate the case of a negative point charge together with a smoother positive charge, $\rho_+-\kappa\delta_0$. Similarly as before, we introduce
$$\nu_0(\kappa,\rho_+):=\min\big\{\nu>0\ :\ 0\in\sigma(D_{\kappa,\nu\rho_+})\big\}$$
and 
$$\nu_0(\kappa):=\inf_{\substack{\rho_+\geq0\\ \rho_+(\R^3)=1}}\nu_0(\kappa,\rho_+).$$
This critical charge is actually equal to the one at $\kappa=0$.

\begin{lemma}
For every $\kappa\in[0,1)$, we have 
$$\nu_0(\kappa)=\nu_0.$$
\end{lemma}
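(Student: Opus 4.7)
The claim will be proved by establishing the two inequalities $\nu_0(\kappa) \leq \nu_0$ and $\nu_0(\kappa) \geq \nu_0$ separately. For the upper bound, the idea is a translation argument: moving any near-optimal density $\rho_+$ far from the origin makes the attractive singularity $-\kappa/|x|$ into a negligible perturbation. Concretely, fix a probability density $\rho_+$ with $\sqrt{\rho_+} \in H^{1/2}(\R^3)$ and let $\Psi \in H^1(\R^3, \C^4)$ be a normalized zero-eigenvector of $D_{0, \nu_0(\rho_+) \rho_+}$. Setting $\rho_+^y(x) := \rho_+(x-y)$ and $\Psi^y(x) := \Psi(x-y)$, translation covariance of $D_0$ together with the identity $(\rho_+^y \ast |x|^{-1})(x) = (\rho_+ \ast |x|^{-1})(x-y)$ gives $D_{0, \nu_0(\rho_+) \rho_+^y} \Psi^y = 0$, and hence
\[
D_{\kappa, \nu_0(\rho_+) \rho_+^y} \Psi^y = -\frac{\kappa}{|x|} \Psi^y.
\]
The $L^2$-norm squared of the right-hand side equals $\kappa^2 \int_{\R^3} |\Psi(z)|^2 / |z+y|^2 \, dz$ and tends to zero as $|y| \to \infty$, by splitting the integral at $|z| \leq R$ (where $1/|z+y|^2 \leq 4/|y|^2$) and $|z| > R$ (where Hardy's inequality combined with the $L^2$-tightness of $\Psi$ controls the contribution). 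So $\Psi^y$ is an approximate zero-eigenvector of $D_{\kappa, \nu_0(\rho_+) \rho_+^y}$, and since the Hellmann--Feynman derivative $\partial_\nu \lambda = \langle \Psi, \rho_+ \ast |x|^{-1} \Psi \rangle$ is strictly positive and bounded below uniformly in $y$, a correction of $\nu$ of size $o(1)$ suffices to produce an exact zero eigenvalue. This gives $\nu_0(\kappa, \rho_+^y) = \nu_0(\rho_+) + o_{|y| \to \infty}(1)$, and taking the infimum over $\rho_+$ yields $\nu_0(\kappa) \leq \nu_0$.

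For the lower bound $\nu_0(\kappa) \geq \nu_0$, the plan is to use the monotonicity of the Dolbeault--Esteban--S\'er\'e min-max values in the potential. Writing $V_\kappa := \nu \rho_+ \ast |x|^{-1} - \kappa/|x|$, let $\mu_1^+(V)$ denote the standard DES min-max with the projection $P_0^+$, which tracks the lowest eigenvalue emerging from the upper edge $+1$ of the gap, and let $\mu_1^-(V)$ denote the dual min-max with $P_0^-$, tracking the highest eigenvalue emerging from $-1$; both are monotone non-decreasing in $V$. The operator inequalities $-\kappa/|x| \leq 0 \leq \nu \rho_+ \ast |x|^{-1}$ then yield
\[
\mu_1^+(V_\kappa) \geq \mu_1^+\!\bigl(-\kappa/|x|\bigr) = \sqrt{1-\kappa^2} > 0
\]
and
\[
\mu_1^-(V_\kappa) \leq \mu_1^-\!\bigl(\nu \rho_+ \ast |x|^{-1}\bigr).
\]
By the definition of $\nu_0(\rho_+)$ together with the monotonicity of $\mu_1^-$ in $\nu$, the right-hand side of the second inequality is strictly negative for $\nu < \nu_0(\rho_+)$. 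Hence every gap eigenvalue of $D_{\kappa, \nu \rho_+}$ lies either in $[\sqrt{1-\kappa^2}, 1)$ or in $(-1, 0)$, so $0 \notin \sigma(D_{\kappa, \nu \rho_+})$. This shows $\nu_0(\kappa, \rho_+) \geq \nu_0(\rho_+) \geq \nu_0$, and taking the infimum over $\rho_+$ completes the proof.

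The main technical obstacle is to interpret the min-max values $\mu_1^\pm$ rigorously as (or in relation to) actual gap eigenvalues of the two natural branches, and to justify the perturbative adjustment of $\nu$ in the upper bound, both of which are delicate near the critical couplings where eigenvalues emerge from the essential spectrum $(-\infty,-1]\cup[1,\infty)$. The framework of~\cite{DolEstSer-00,EstLewSer-19} together with the regularity of the Coulomb potentials involved should be sufficient.
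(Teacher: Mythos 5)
Your proof follows the same two-pronged strategy as the paper: the lower bound $\nu_0(\kappa)\geq\nu_0$ from the Dolbeault--Esteban--S\'er\'e min-max monotonicity (the eigenvalues emerging from $-1$ are non-increasing in $\kappa$, while those emerging from $+1$ stay $\geq\sqrt{1-\kappa^2}$), and the upper bound $\nu_0(\kappa)\leq\nu_0$ by sending $\rho_+$ to infinity. The paper states each half in a single sentence; your version spells out the bookkeeping with $\mu_1^\pm$ and the Hellmann--Feynman correction of $\nu$, but adds no genuinely new idea.
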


\begin{proof}
By the min-max principle of~\cite{DolEstSer-00} one can see that the negative eigenvalues of $D_{\kappa,\nu\rho_+}$ are decreasing with respect to $\kappa$ and increasing with respect to $\nu$ (a different proof of this will be given in the proof of Theorem~\ref{thm:Dirac_gap} in Section~\ref{sec:proof_Dirac_gap} below). In particular, we have $\nu_0(\kappa,\rho_+)\geq \nu_0(0,\rho_+)=\nu_0(\rho_+)$. After minimizing over $\rho_+$, this gives $\nu_0(\kappa)\geq\nu_0$. However, by placing $\rho_+$ very far away from the origin, we also see that  $\nu_0(\kappa)\leq\nu_0$ and there must therefore be equality. 
\end{proof}

The main result of this section is that the gap implies a universal operator bound.

\begin{theorem}[Gap of general sub-critical Dirac-Coulomb operators]\label{thm:Dirac_gap}
For every $0\leq\kappa<1$ and every $0\leq \nu<\nu_0$, there exists a universal constant $c_{\kappa,\nu}>0$ so that 
\begin{equation}
c_{\kappa,\nu}|D_0|\leq |D_{\kappa,\rho}|
\label{eq:compare_Dirac}
\end{equation}
for every non-negative $\rho$ so that $\sqrt\rho\in H^{1/2}(\R^3)$ and $\int_{\R^3}\rho=\nu$. One can for instance take 
\begin{equation}
c_{\kappa,\nu}= \frac{c_\kappa^2}{1+ 2(\nu+\kappa)}\bigg(1+\frac{\pi}{2c_\kappa}\frac{\nu\,\nu_0}{\nu_0-\nu}\bigg)^{-2}
\label{eq:value_c_kappa_nu}
\end{equation}
where $c_\kappa$ is the best constant in~\eqref{eq:compare_Dirac_kappa}.
\end{theorem}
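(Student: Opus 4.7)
The strategy is to compare $D_{\kappa,\rho}$ to $D_\kappa$ via a resolvent identity controlled by the Birman--Schwinger principle, and then invoke~\eqref{eq:compare_Dirac_kappa}. Set $V:=\rho\ast|x|^{-1}\ge 0$ and consider the bounded self-adjoint operator $K:=V^{1/2}D_\kappa^{-1}V^{1/2}$. Writing $\rho=\nu\rho_1$ with $\rho_1$ a probability density and $V_1:=\rho_1\ast|x|^{-1}$, the Birman--Schwinger principle applied to $D_\kappa+tV_1$ identifies $\inf\sigma(V_1^{1/2}D_\kappa^{-1}V_1^{1/2})=-1/\nu_0(\kappa,\rho_1)$, and the preceding lemma gives $\nu_0(\kappa,\rho_1)\geq \nu_0$. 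Therefore $K\geq -\nu/\nu_0$, the operator $(I+K)^{-1}$ exists with $\|(I+K)^{-1}\|\leq \nu_0/(\nu_0-\nu)$, $0$ lies in the resolvent set of $D_{\kappa,\rho}$, and the identity
\[D_{\kappa,\rho}^{-1}=D_\kappa^{-1}-D_\kappa^{-1}V^{1/2}(I+K)^{-1}V^{1/2}D_\kappa^{-1}\]
holds on $L^2(\R^3,\C^4)$.

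A second auxiliary bound $\|V^{1/2}|D_\kappa|^{-1}V^{1/2}\|\leq \pi\nu/(2c_\kappa)$ follows from Kato's inequality~\eqref{eq:Kato} applied pointwise in $y\in\operatorname{supp}\rho$ (giving $\|V^{1/2}|D_0|^{-1}V^{1/2}\|\leq \pi\nu/2$) combined with $|D_\kappa|^{-1}\leq c_\kappa^{-1}|D_0|^{-1}$ from~\eqref{eq:compare_Dirac_kappa}. Multiplying the resolvent identity on both sides by $|D_\kappa|^{1/2}$ and using $|D_\kappa|^{1/2}D_\kappa^{-1}|D_\kappa|^{1/2}=\operatorname{sgn}(D_\kappa)$ of norm one, the two auxiliary bounds yield
\[\bigl\||D_\kappa|^{1/2}D_{\kappa,\rho}^{-1}|D_\kappa|^{1/2}\bigr\|\leq 1+\eta,\qquad \eta:=\frac{\pi\nu}{2c_\kappa}\cdot\frac{\nu_0}{\nu_0-\nu}.\]

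To pass from the $D_{\kappa,\rho}^{-1}$ estimate to one on the positive operator $|D_{\kappa,\rho}|^{-1}$, I will use the spectral representation $|D|^{-1}=\pi^{-1}\int_\R (D^2+s^2)^{-1}\,ds$, write $(D_{\kappa,\rho}^2+s^2)^{-1}=(D_{\kappa,\rho}-is)^{-1}(D_{\kappa,\rho}+is)^{-1}$, and apply the resolvent identity to \emph{each} factor. The crucial observation is that although $K(is):=V^{1/2}(D_\kappa-is)^{-1}V^{1/2}$ is not self-adjoint, its real part $V^{1/2}D_\kappa(D_\kappa^2+s^2)^{-1}V^{1/2}$ still satisfies $\geq -V^{1/2}|D_\kappa|^{-1}V^{1/2}$ (since $\lambda/(\lambda^2+s^2)\geq -1/|\lambda|$), so a numerical-range argument keeps $\|(I+K(is))^{-1}\|$ uniformly controlled in $s$. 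The product structure introduces a second correction factor $(1+\eta)$, leading together with the upper bound~\eqref{eq:upper_bd_D} to
\[\bigl\||D_\kappa|^{1/2}|D_{\kappa,\rho}|^{-1}|D_\kappa|^{1/2}\bigr\|\leq \frac{(1+\eta)^2(1+2(\nu+\kappa))}{c_\kappa}.\]
Finally, the factorization
\[|D_0|^{1/2}|D_{\kappa,\rho}|^{-1}|D_0|^{1/2}=\bigl(|D_0|^{1/2}|D_\kappa|^{-1/2}\bigr)\bigl(|D_\kappa|^{1/2}|D_{\kappa,\rho}|^{-1}|D_\kappa|^{1/2}\bigr)\bigl(|D_\kappa|^{-1/2}|D_0|^{1/2}\bigr)\]
together with $\||D_0|^{1/2}|D_\kappa|^{-1/2}\|\leq c_\kappa^{-1/2}$ yields $\||D_0|^{1/2}|D_{\kappa,\rho}|^{-1}|D_0|^{1/2}\|\leq c_{\kappa,\nu}^{-1}$, which is equivalent to $c_{\kappa,\nu}|D_0|\leq |D_{\kappa,\rho}|$ with $c_{\kappa,\nu}$ as in~\eqref{eq:value_c_kappa_nu}.

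The main obstacle is the passage from $D_{\kappa,\rho}^{-1}$ to $|D_{\kappa,\rho}|^{-1}$. Because~\eqref{eq:compare_Dirac_kappa} is only a first-power bound (the squared version $(D_\kappa)^2\geq c(D_0)^2$ fails for $\kappa\geq \sqrt{3}/2$, as noted after the lemma), one cannot freely square operator inequalities, so the spectral-integral argument with uniform-in-$s$ control of the non-self-adjoint Birman--Schwinger correction $K(is)$ is essential, and is responsible for the $(1+\eta)^2$ factor in the stated constant.
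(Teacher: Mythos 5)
Your first half is on the right track and coincides with the paper's proof: the Nenciu resolvent identity, the bound $\|(I+K)^{-1}\|\leq\nu_0/(\nu_0-\nu)$ via the Birman--Schwinger principle and the lemma $\nu_0(\kappa)=\nu_0$, the bound $\|V^{1/2}|D_\kappa|^{-1}V^{1/2}\|\leq\pi\nu/(2c_\kappa)$ from Kato's inequality and \eqref{eq:compare_Dirac_kappa}, and the final factorisation through $|D_\kappa|^{1/2}$ are all exactly the ingredients the paper uses.

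The gap is in the passage from $D_{\kappa,\rho}^{-1}$ to $|D_{\kappa,\rho}|^{-1}$, and it is not a technicality that can be patched by being more careful: the numerical-range argument you invoke does not give a uniform bound on $\|(I+K(is))^{-1}\|$. You correctly observe that $\mathrm{Re}\,K(is)=V^{1/2}D_\kappa(D_\kappa^2+s^2)^{-1}V^{1/2}\geq -V^{1/2}|D_\kappa|^{-1}V^{1/2}$, so the real part of the numerical range of $I+K(is)$ lies to the right of $1-\pi\nu/(2c_\kappa)$. But $\pi\nu/(2c_\kappa)$ is \emph{not} less than one in the regime of the theorem — already for $\kappa=0$ and $\nu$ close to $\nu_0\in[0.91,1]$ one has $\pi\nu/2>1$, and as $\kappa\to1$ the constant $c_\kappa\to0$, so $\pi\nu/(2c_\kappa)\to\infty$. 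The bound that actually gives invertibility is the Birman--Schwinger bound $K(0)\geq-\nu/\nu_0>-1$, and that one is specific to $s=0$: the quantity $D_\kappa(D_\kappa^2+s^2)^{-1}$ is not $\geq D_\kappa^{-1}$ as an operator (the difference $-s^2 D_\kappa^{-1}(D_\kappa^2+s^2)^{-1}$ is sign-indefinite since $D_\kappa$ has spectrum on both sides of zero), so $\mathrm{Re}\,K(is)\geq -\nu/\nu_0$ does not follow. Without a uniform-in-$s$ bound on $\|(I+K(is))^{-1}\|$, the spectral integral $|D_{\kappa,\rho}|^{-1}=\pi^{-1}\int_\R(D_{\kappa,\rho}^2+s^2)^{-1}\,ds$ cannot be estimated this way.

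The paper avoids the issue entirely with a short trick you did not find: writing $U_{\kappa,\rho}=\mathrm{sgn}(D_{\kappa,\rho})$ (a unitary commuting with $|D_{\kappa,\rho}|$), one has
\[|D_0|^{1/2}\,|D_{\kappa,\rho}|^{-1/2}=|D_0|^{1/2}\,D_{\kappa,\rho}^{-1}\,|D_{\kappa,\rho}|^{1/2}\,U_{\kappa,\rho},\]
so one only ever needs $D_{\kappa,\rho}^{-1}$ (controlled by the Nenciu formula) and the factor $|D_{\kappa,\rho}|^{1/2}$ on the right, which is handled by the \emph{upper} bound~\eqref{eq:upper_bd_D} — precisely the direction that is easy — rather than a lower bound on $|D_{\kappa,\rho}|$, which is what the theorem is trying to prove. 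Expanding $D_{\kappa,\rho}^{-1}$ and taking operator norms then yields
\[\bigl\||D_0|^{1/2}\,|D_{\kappa,\rho}|^{-1/2}\bigr\|\leq\bigl\||D_0|^{1/2}|D_\kappa|^{-1/2}\bigr\|\cdot\bigl\||D_\kappa|^{-1/2}|D_{\kappa,\rho}|^{1/2}\bigr\|\cdot\Bigl(1+\tfrac{\nu_0}{\nu_0-\nu}\bigl\|V^{1/2}|D_\kappa|^{-1/2}\bigr\|^2\Bigr),\]
and inserting the three auxiliary bounds gives the stated constant directly, with no spectral integral and hence no $(1+\eta)^2$ juggling. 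You should replace your final step with this device.
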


The proof is provided in Section~\ref{sec:proof_Dirac_gap}. Theorem~\ref{thm:Dirac_gap} gives a lower bound, similar to the upper bound~\eqref{eq:upper_bd_D}, which is completely independent of the shape of the negative charge $\rho$ and this will play a decisive role in the study of our nonlinear problem, where $\rho$ is unknown. The constant in~\eqref{eq:value_c_kappa_nu} is not at all optimal and it is only displayed for concreteness. 

\begin{remark}[More general negative densities]
The theorem applies to all positive Borel measures $\rho$. It is not at all necessary that $\sqrt\rho\in H^{1/2}(\R^3)$. But we have stated it in this context since this is what we will need later, and because working with a bounded repulsive Coulomb potential simplifies some technical arguments. 
\end{remark}

\begin{remark}[A better estimate on the gap of $D_{\kappa,\rho}$]\label{rmk:better_estimate_gap}
The estimate~\eqref{eq:compare_Dirac} implies the following estimate on the gap around the origin:
$$\sigma(D_{\kappa,\rho})\cap\big(-c_{\kappa,\nu};c_{\kappa,\nu}\big)=\emptyset.$$
However, we indeed have 
$$\sigma(D_{\kappa,\rho})\cap\big(\lambda_c(\nu);\sqrt{1-\kappa^2}\big)=\emptyset,$$
where 
$$\lambda_c(\nu):=\sup_{\substack{\rho\geq0\\ \int_{\R^3}\rho=\nu}}\lambda_1(D_{0,\rho})<0.$$
is the largest possible value of the last negative eigenvalue, when optimized over all densities $\rho$ with subcritical mass $\int_{\R^3}\rho=\nu<\nu_0$. 
This is due to the min-max principle of~\cite{DolEstSer-00}, which implies that the negative eigenvalues are decreasing in $\kappa$ at fixed $\rho$ and that the positive eigenvalues are increasing in $\nu$ at fixed $\kappa$, as we have already mentioned. 
\end{remark}

\begin{remark}[More general positive densities]
We have considered an attractive Dirac charge because this is what will be needed later. It is also possible to show that for any $\rho_\pm\geq0$ with $\int_{\R^3}\rho_\pm=\nu_\pm<\nu_0$, we have  with $\rho=\rho_+-\rho_-$
$$c_{\nu_+,\nu_-}|D_0|\leq\left|D_0+\rho\ast\frac1{|x|}\right|$$ 
for some $c_{\nu_+,\nu_-}>0$ and 
$$\sigma\left(D_0+\rho\ast\frac1{|x|}\right)\cap\big(-\lambda_c(\nu_+);\lambda_c(\nu_-)\big)=\emptyset.$$
See~\cite{EstLewSer-19b_ppt}.
\end{remark}

%%%%%%%%%%%%%%%%%%%%%%%%%%%%%%%%%%%%%%%%%%%%%%%%%%%%%%%%%
\subsection{A minimization principle in Dirac-Fock theory}
%%%%%%%%%%%%%%%%%%%%%%%%%%%%%%%%%%%%%%%%%%%%%%%%%%%%%%%%%

The Dirac-Fock energy functional is obtained from the Hartree-Fock model by replacing $-\Delta/2$ by $D_0-1$. In the spirit of~\cite{Lieb-81,BacLieSol-94}, we consider mixed quasi-free states and express everything in terms of the one particle density matrix $\gamma$, which is a bounded self-adjoint operator on $L^2(\R^3,\C^4)$ such that $0\leq\gamma\leq1$ and $\tr\gamma=N$ for $N$ electrons. The (reduced) Dirac-Fock energy reads 
\begin{equation}
\boxed{\cE^{\rm DF}_{\kappa,\alpha}(\gamma)=\tr(D_\kappa-1)\gamma+\frac{\alpha}{2}\iint_{\R^3\times\R^3}\frac{\rho_\gamma(x)\rho_\gamma(y)}{|x-y|}\,dx\,dy}
\label{eq:DF_energy}
\end{equation}
where 
$$\rho_\gamma(x)=\tr_{\C^4}\gamma(x,x)$$
is the associated density. The functional $\cE^{\rm DF}_{\kappa,\alpha}$ is unbounded from below, due to the fact that $D_\kappa$ is itself unbounded. 
Critical points satisfy the self-consistent equation
\begin{equation*}
[\gamma ,D_{\kappa,\alpha\rho_\gamma}]=0
\end{equation*}
where $D_{\kappa,\alpha\rho_\gamma}=D_\kappa+\alpha\rho_\gamma\ast|x|^{-1}$ is called the \emph{mean-field} or \emph{Fock operator}. 
We are interested in the stationary states satisfying the more precise equation
\begin{equation}
\boxed{\gamma =\1(0\leq D_{\kappa,\alpha\rho_\gamma}\leq\mu),}
\label{eq:DF_SCF}
\end{equation}
that is, $\gamma$ is the orthogonal projection corresponding to the $N$ first \emph{positive} eigenvalues of its Fock operator. Re-expressed in terms of the $N$ first eigenfunctions, this gives a system of $N$ coupled nonlinear Dirac equations
\begin{equation}
\left(D_\kappa+\alpha\sum_{k=1}^N|\phi_k|^2\ast\frac{1}{|x|}\right)\phi_j=\mu_j\phi_j,
\label{eq:DF_SCF_orbitals}
\end{equation}
with $0<\mu_1\leq\cdots \leq\mu_N=\mu$ the $N$ first positive eigenvalues. 
States satisfying~\eqref{eq:DF_SCF} can be interpreted as \emph{ground states} since they solve the same kind of equation as for Hartree-Fock minimizers. It will be useful to introduce the projection
$$P^+_{\kappa,\gamma}:=\1(D_{\kappa,\alpha\rho_\gamma}\geq0)$$
on the positive spectral subspace of the mean-field operator and to note that $\gamma P^+_{\kappa,\gamma}=\gamma$, or equivalently $0\leq \gamma \leq P^+_{\kappa,\gamma}$. 

The existence of infinitely many solutions to~\eqref{eq:DF_SCF_orbitals} was proved for the first time by Esteban-S\'er\'e in~\cite{EstSer-99} (with exchange term, but the argument works the same without exchange) for all 
$$\max(\kappa,3\alpha N)<\frac{2}{\pi/2+2/\pi},\qquad N\leq Z=\kappa/\alpha,$$
using a concavity argument in the negative directions. Although in this work the $\mu_j$ are known to be positive, they are however not necessarily the $N$ first eigenvalues. The result of~\cite{EstSer-99} was then generalized to the range
$$\max(\kappa,\alpha N)<\frac{2}{\pi/2+2/\pi},\qquad N\leq Z=\kappa/\alpha,$$
by Paturel in~\cite{Paturel-00}. That is, the unphysical factor 3 was removed, using a Lyapounov-Schmidt reduction and a linking argument. 
In~\cite{EstSer-01}, Esteban and S\'er\'e have shown that their first solution indeed converges to the (non-relativistic) Hartree-Fock minimizer in the limit $\alpha\to0$ at fixed $Z$ and $N$, after a proper rescaling. Furthermore, the $\mu_j$ are the first eigenvalues of $D_{\kappa,\alpha\rho_\gamma}$ for $\alpha$ small enough. This is the justification that~\eqref{eq:DF_SCF} is the natural equation for a Dirac-Fock ground state. Finally, they proved in the same article~\cite{EstSer-01} that for $\alpha$ small enough, their solution solves the following minimization problem
\begin{equation}
\boxed{E^{\rm DF}(\kappa,\alpha,N):=\min_{\substack{0\leq\gamma\leq P^+_{\kappa,\gamma}\\ \tr(\gamma)\leq N}}\cE_{\kappa,\alpha}^{\rm DF}(\gamma).}
\label{eq:DF_min_energy}
\end{equation}
In words, the Dirac-Fock ground state minimizes the Dirac-Fock energy among all the states which live in the positive spectral subspace of their own mean-field operator $P^+_{\kappa,\gamma}$. This is a very nonlinear constraint but it is physically meaningful. Using a simpler fixed point technique, Huber and Siedentop have later obtained a similar result in~\cite{HubSie-07} for $\alpha$ small enough at fixed $\kappa=\alpha Z$ and $N$, that is, for large atoms but small interactions. Finally, in the unpublished work~\cite{Sere-09}, S\'er\'e has directly studied the minimization problem~\eqref{eq:DF_min_energy} for $\kappa$ and $\alpha N$ fixed but small enough, with quantitative estimates. His argument is based on the function 
$$\theta(\gamma)=\lim_{n\to\ii}\gamma_n$$
where the sequence $\gamma_n$ is recursively defined by
$$\begin{cases}
   \gamma_{n+1}=P_{\kappa,\gamma_{n}}^+\gamma_{n}P_{\kappa,\gamma_{n}}^+\\
   \gamma_0=\gamma.
  \end{cases}$$ 
The function $\theta$ is used to project any $\gamma$ sufficiently close to satisfying the constraint $P^-_{\kappa,\gamma}\gamma=0$ to a new state $\theta(\gamma)$ which does satisfy this constraint. This allows to show that the set of states $\{\gamma\ :\ P^-_{\kappa,\gamma}\gamma=0\}$ is a smooth manifold on which one can use variational techniques. 

To summarize the situation, there are solutions to the Dirac-Fock equations~\eqref{eq:DF_SCF} for $\max(\alpha Z,\alpha N)<2/(\pi/2+2/\pi)$. One of these solutions is known to solve the minimization problem~\eqref{eq:DF_min_energy}, but only in a limiting regime of small $\alpha$ (either with $N$ and $Z$ fixed, or $\kappa$ and $N$ fixed or $\kappa$ and $\alpha N$ small but fixed). In this paper we will study the limit of~\eqref{eq:DF_min_energy} in the whole range 
$$0\leq \kappa<1,\qquad 0\leq \alpha N<\nu_0$$
where $\nu_0$ is the critical number defined in the previous section, although minimizers are not necessarily known to exist in all cases. Note that we have required $\tr(\gamma)\leq N$ instead of $\tr\gamma=N$ in~\eqref{eq:DF_energy}. We believe that minimizers always exist in~\eqref{eq:DF_min_energy} and that they satisfy $\tr\gamma=N$ together with the nonlinear equation~\eqref{eq:DF_SCF}. But the existence of minimizers plays no role in our study. 

Before studying its limit we show that $E^{\rm DF}(\kappa,\alpha,N)$ is a well-defined number. This turns out to be a simple consequence of the analysis in the previous section. 

\begin{lemma}[$\cE^{\rm DF}_{\kappa,\alpha}$ is bounded from below on positive energy states]\label{lem:coercive}
Let $0\leq\kappa<1$ and $0\leq\nu<\nu_0$. There exists a constant $\alpha_c=\alpha_c(\kappa,\nu)$ so that 
\begin{equation}
 \cE^{\rm DF}_{\kappa,\alpha}(\gamma)\geq \frac{c_{\kappa,\nu}}{2}\tr(\sqrt{-\Delta}\gamma)-\tr(\gamma)
 \label{eq:coercive}
\end{equation}
for all $0\leq \alpha\leq\alpha_c$ and every density matrix $0\leq\gamma=\gamma^*\leq 1$ satisfying the nonlinear constraint $\gamma P^-_{\kappa,\gamma}=0$ and such that $\alpha\tr(\gamma)\leq \nu$ and $\tr(\sqrt{-\Delta}\gamma)<\ii$.
In particular, $E^{\rm DF}(\kappa,\alpha,N)$ is well defined in~\eqref{eq:DF_energy}.
\end{lemma}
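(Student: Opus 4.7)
The plan is to rewrite the Dirac-Fock energy so that the mean-field operator $D_{\kappa,\alpha\rho_\gamma}$ appears explicitly, which will allow Theorem~\ref{thm:Dirac_gap} to be applied. Writing
$$\cE^{\rm DF}_{\kappa,\alpha}(\gamma)=\tr D_{\kappa,\alpha\rho_\gamma}\gamma-\frac{\alpha}{2}\iint_{\R^3\times\R^3}\frac{\rho_\gamma(x)\rho_\gamma(y)}{|x-y|}\,dx\,dy-\tr\gamma,$$
the nonlinear constraint $\gamma P^-_{\kappa,\gamma}=0$ means that $\gamma$ is supported in the nonnegative spectral subspace of $D_{\kappa,\alpha\rho_\gamma}$, so $\tr D_{\kappa,\alpha\rho_\gamma}\gamma=\tr|D_{\kappa,\alpha\rho_\gamma}|\gamma$. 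Since $\alpha\int_{\R^3}\rho_\gamma=\alpha\tr\gamma\leq\nu<\nu_0$, Theorem~\ref{thm:Dirac_gap} applied to $\rho=\alpha\rho_\gamma$ yields
$$\tr D_{\kappa,\alpha\rho_\gamma}\gamma\geq c_{\kappa,\nu}\tr|D_0|\gamma\geq c_{\kappa,\nu}\tr\sqrt{-\Delta}\,\gamma,$$
where the last inequality uses $|D_0|^2=1-\Delta\geq-\Delta$.

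It remains to dominate the direct Coulomb term by a small fraction of this kinetic bound. Here I would combine the Hardy-Littlewood-Sobolev inequality
$$\iint_{\R^3\times\R^3}\frac{\rho_\gamma(x)\rho_\gamma(y)}{|x-y|}\,dx\,dy\leq C_{\rm HLS}\|\rho_\gamma\|_{L^{6/5}(\R^3)}^2$$
with the Daubechies pseudo-relativistic Lieb-Thirring bound
$$\tr\sqrt{-\Delta}\,\gamma\geq K\int_{\R^3}\rho_\gamma^{4/3},$$
valid for $0\leq\gamma\leq 1$. Interpolating $L^{6/5}$ between $L^1$ and $L^{4/3}$ with exponents $1/3,2/3$ gives
$$\|\rho_\gamma\|_{L^{6/5}}^2\leq\|\rho_\gamma\|_{L^1}^{2/3}\|\rho_\gamma\|_{L^{4/3}}^{4/3}\leq C(\tr\gamma)^{2/3}\tr\sqrt{-\Delta}\,\gamma.$$
Using $\tr\gamma\leq\nu/\alpha$, one has $\alpha(\tr\gamma)^{2/3}\leq\alpha^{1/3}\nu^{2/3}$, so that
$$\frac{\alpha}{2}\iint_{\R^3\times\R^3}\frac{\rho_\gamma(x)\rho_\gamma(y)}{|x-y|}\,dx\,dy\leq\frac{C\,\nu^{2/3}}{2}\,\alpha^{1/3}\tr\sqrt{-\Delta}\,\gamma.$$
Choosing $\alpha_c(\kappa,\nu)>0$ so that $C\nu^{2/3}\alpha_c^{1/3}\leq c_{\kappa,\nu}$ gives~\eqref{eq:coercive}. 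Combined with $\tr\gamma\leq N$ in the minimization problem this yields $E^{\rm DF}(\kappa,\alpha,N)\geq -N$, so the infimum is indeed well-defined (each term of the energy being separately finite on the admissible set by the same estimates).

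The substance of the argument is entirely contained in Theorem~\ref{thm:Dirac_gap}: once the uniform operator inequality $c_{\kappa,\nu}|D_0|\leq|D_{\kappa,\alpha\rho_\gamma}|$ is available, the remainder is a standard interplay between the Coulomb self-energy and the relativistic kinetic energy. The only delicate point is that the interpolation factor $(\tr\gamma)^{2/3}$ can be as large as $(\nu/\alpha)^{2/3}$, so in isolation the direct term is not small; it is precisely the explicit $\alpha$ in front of it that produces a net gain $\alpha^{1/3}$ and forces the smallness assumption on $\alpha$.
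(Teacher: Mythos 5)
Your proof is correct and follows essentially the same route as the paper: you use the constraint $\gamma P^-_{\kappa,\gamma}=0$ to rewrite $\tr D_{\kappa,\alpha\rho_\gamma}\gamma=\tr|D_{\kappa,\alpha\rho_\gamma}|\gamma$, apply Theorem~\ref{thm:Dirac_gap} to get the uniform lower bound by $c_{\kappa,\nu}\tr\sqrt{-\Delta}\,\gamma$, and then absorb the direct Coulomb term via Hardy--Littlewood--Sobolev, the relativistic Lieb--Thirring inequality and interpolation, with the factor $\alpha(\tr\gamma)^{2/3}\leq\alpha^{1/3}\nu^{2/3}$ forcing the smallness condition on $\alpha$. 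This is exactly the paper's argument (they keep the intermediate quantity $\int\rho_\gamma^{4/3}$ explicitly and split $c_{\kappa,\nu}$ in half rather than folding everything into a single coefficient, but that is a cosmetic difference).
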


\begin{proof}
Since $D_{\kappa,\alpha\rho_\gamma}\gamma=|D_{\kappa,\alpha\rho_\gamma}|\gamma$, we have by~\eqref{eq:compare_Dirac}
\begin{align*}
\cE^{\rm DF}_{\kappa,\alpha}(\gamma)&=\tr D_{\kappa,\alpha\rho_\gamma}\gamma-\frac{\alpha}{2} \iint_{\R^3\times\R^3}\frac{\rho_\gamma(x)\rho_\gamma(y)}{|x-y|}\,dx\,dy-\tr(\gamma)\\
&\geq c_{\kappa,\nu}\tr\sqrt{1-\Delta}\gamma-\frac{\alpha}{2} \iint_{\R^3\times\R^3}\frac{\rho_\gamma(x)\rho_\gamma(y)}{|x-y|}\,dx\,dy-\tr(\gamma).
\end{align*}
The Lieb-Thirring inequality~\cite[Theorem 4.3]{LieSei-09} states that 
\begin{equation}
 \tr\sqrt{1-\Delta}\gamma\geq \tr\sqrt{-\Delta}\gamma\geq c_{\rm LT}\int_{\R^3}\rho_\gamma^{4/3}.
 \label{eq:Lieb-Thirring}
\end{equation}
On the other hand, the Hardy-Littlewood-Sobolev~\cite{LieLos-01} and H\"older inequalities give
\begin{equation}
 \iint_{\R^3\times\R^3}\frac{\rho_\gamma(x)\rho_\gamma(y)}{|x-y|}\,dx\,dy\leq c_{\rm HLS}\norm{\rho_\gamma}^2_{L^{6/5}(\R^3)}\leq c_{\rm HLS}(\tr\gamma)^{2/3}\int_{\R^3}\rho_\gamma^{4/3}
 \label{eq:estim_Direct}
\end{equation}
hence we conclude that 
$$\cE^{\rm DF}_{\kappa,\alpha}(\gamma)\geq \frac{c_{\kappa,\nu}}{2}\tr(\sqrt{-\Delta}\gamma)-\tr(\gamma)+\frac12\left(c_{\kappa,\nu}c_{\rm LT}-c_{\rm HLS}\alpha^{1/3}\nu^{2/3}\right)\int_{\R^3}\rho_\gamma^{4/3}$$
where the last term is non-negative for 
$$\alpha\leq\alpha_c:=\left(\frac{c_{\kappa,\nu}c_{\rm LT}}{c_{\rm HLS}\nu^{2/3}}\right)^3.$$
This concludes the proof. 
\end{proof}

%%%%%%%%%%%%%%%%%%%%%%%%%%%%%%%%%%%%%%%%%%%%%%%%%%%%%%%%%
\subsection{The Scott correction}
%%%%%%%%%%%%%%%%%%%%%%%%%%%%%%%%%%%%%%%%%%%%%%%%%%%%%%%%%

We are finally able to state the main theorem of this article. %We define the Thomas-Fermi energy
For a given constant $\lambda>0$, we define the Thomas-Fermi energy
\begin{multline}
e_{\rm TF}(\lambda)=
\min_{\substack{\rho\geq0\\ \int_{\R^3}\rho=1}}\bigg\{\frac3{10}(3\pi^2)^{\frac23} \int_{\R^3}\rho(x)^{\frac53}\,dx-\lambda\int_{\R^3}\frac{\rho(x)}{|x|}\,dx\\
+\frac12\iint_{\R^3\times\R^3}\frac{\rho(x)\rho(y)}{|x-y|}\,dx\,dy\bigg\}.
\label{eq:TF_energy_lambda}
\end{multline}
We note that $e_{\rm TF}(\lambda)$ is non-increasing and concave in $\lambda$. It is constant, equal to $e_{\rm TF}(1)<0$ for $\lambda\in(0;1]$. 
On the other hand, the Scott correction was defined in~\cite{HanSie-15} to be 
\begin{equation}
c_{\rm Scott}(\kappa):=\frac{\kappa^2}2+\sum_{n\geq1}\left\{\lambda_n\left(D_0-\frac{\kappa}{|x|}-1\right)-\lambda_n\left(-\frac{\Delta}{2}-\frac{\kappa}{|x|}\right)\right\},
\label{eq:def_Scott}
\end{equation}
where $\lambda_n$ are the eigenvalues in $[-2,0]$ of the operator in the parenthesis, repeated in case of multiplicity and arranged in increasing order. 

Our main result concerns the case of neutral atoms, as is classically considered for the Scott correction. This forces us to take $\nu=\kappa<\nu_0$. However, several parts of our approach apply to the case of general $\kappa<1$ and $\nu<\nu_0$, as we will see. 

\begin{theorem}[Scott correction in Dirac-Fock theory]\label{thm:Scott}
Let $0<\kappa< \nu_0$. Then we have
\begin{equation}
\lim_{\substack{N\to\ii\\ \alpha N\to\kappa}}\left|E^{\rm DF}(\kappa,\alpha,N)-e_{\rm TF}(1)\,\alpha^2N^{\frac73}-c_{\rm Scott}(\kappa)\right|=0.
\label{eq:Scott_main}
\end{equation}
\end{theorem}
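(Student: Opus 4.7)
The strategy is to reduce Theorem~\ref{thm:Scott} to the fixed-projection analog already established by Handrek and Siedentop~\cite{HanSie-15}, as foreshadowed in the introduction. Introduce the auxiliary Handrek--Siedentop energy
$$E^{\rm HS}(\kappa,\alpha,N) := \min_{\substack{0 \le \gamma \le \Pi_\kappa \\ \tr\gamma \le N}} \cE^{\rm DF}_{\kappa,\alpha}(\gamma),\qquad \Pi_\kappa := \1(D_\kappa \ge 0),$$
i.e.\ the same Dirac--Fock functional minimized over states supported in the positive spectral subspace of the fixed (nuclear) Dirac--Coulomb operator. The main result of~\cite{HanSie-15} directly gives $E^{\rm HS}(\kappa,\alpha,N) = e_{\rm TF}(1)\,\alpha^2 N^{7/3} + c_{\rm Scott}(\kappa) + o(1)$ in the regime $N\to\ii$, $\alpha N\to\kappa$. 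It therefore suffices to prove the matching two-sided bound $E^{\rm DF}(\kappa,\alpha,N) = E^{\rm HS}(\kappa,\alpha,N) + o(1)$.

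The key technical tool is Theorem~\ref{thm:Dirac_gap}. Since $\alpha\tr\gamma\le\alpha N$ is eventually smaller than $\nu_0$, it yields a uniform operator bound $c_{\kappa,\alpha N}|D_0|\le|D_{\kappa,\alpha\rho_\gamma}|$ and a uniform spectral gap around $0$, valid for every admissible $\gamma$. This gap legitimates the resolvent representation
$$P^+_{\kappa,\gamma}-\Pi_\kappa = \frac{1}{2\pi}\int_\R (i\eta - D_{\kappa,\alpha\rho_\gamma})^{-1}\, W_\gamma\,(i\eta - D_\kappa)^{-1}\,d\eta, \qquad W_\gamma := \alpha\rho_\gamma \ast \frac{1}{|x|},$$
from which quantitative operator-norm and trace-ideal bounds on the difference of projections can be extracted in terms of suitable norms of $W_\gamma$. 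This is the basic bridge between the self-consistent projection $P^+_{\kappa,\gamma}$ and the fixed projection $\Pi_\kappa$.

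For the upper bound $E^{\rm DF}\le E^{\rm HS}+o(1)$, the plan is to start from a quasi-minimizer $\gamma^{\rm HS}$ of $E^{\rm HS}$ and to implement S\'er\'e's construction~\cite{Sere-09}: the iteration $\gamma_{n+1}:=P^+_{\kappa,\gamma_n}\gamma_n P^+_{\kappa,\gamma_n}$ with $\gamma_0=\gamma^{\rm HS}$ converges, thanks to the resolvent estimate above, to a fixed point $\tilde\gamma$ with $P^+_{\kappa,\tilde\gamma}\tilde\gamma = \tilde\gamma$ (hence DF-admissible) and $\cE^{\rm DF}_{\kappa,\alpha}(\tilde\gamma)\le\cE^{\rm DF}_{\kappa,\alpha}(\gamma^{\rm HS})+o(1)$. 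For the lower bound $E^{\rm DF}\ge E^{\rm HS}+o(1)$, pick a quasi-minimizer $\gamma$ of $E^{\rm DF}$ and set $\tilde\gamma := \Pi_\kappa\gamma\Pi_\kappa$; this state is HS-admissible, and the energy difference $\cE^{\rm DF}_{\kappa,\alpha}(\gamma)-\cE^{\rm DF}_{\kappa,\alpha}(\tilde\gamma)$ is controlled by the same resolvent estimate combined with the a priori kinetic bound $\tr|D_\kappa|\gamma\lesssim N^{4/3}$ furnished by Lemma~\ref{lem:coercive}, the Lieb--Thirring inequality, and the fact that the total energy is of Thomas--Fermi order.

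The main obstacle is to make both projection-change errors genuinely $o(1)$ rather than just $o(N^{7/3})$. The mean-field potential $W_\gamma$ is \emph{not} small in operator norm -- it can reach size of order $Z^{1/3}$ on the Thomas--Fermi scale -- so a crude contour estimate is far too pessimistic. One must exploit scale separation: on the Thomas--Fermi scale $|x|\sim Z^{-1/3}$, the mean field $W_\gamma$ essentially screens the external $-\kappa/|x|$, so $D_{\kappa,\alpha\rho_\gamma}$ behaves like the free $D_0$ and the two projections act alike on the bulk of $\gamma$; on the Scott scale $|x|\sim Z^{-1}$, the external $\kappa/|x|$ dominates and $W_\gamma$ is only an $O(\kappa)$ bounded perturbation of $D_\kappa$, so changing the projection affects only the innermost orbitals, and the Scott analysis of~\cite{HanSie-15} is robust enough to absorb such perturbations. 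Turning this physical picture into a quantitative argument, via localization of the density and careful bookkeeping of the resolvent expansion, is the technical crux.
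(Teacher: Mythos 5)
Your high-level strategy is the one the paper actually follows: reduce $E^{\rm DF}$ to the Handrek--Siedentop projected problem by a two-sided comparison, use Theorem~\ref{thm:Dirac_gap} to get a uniform gap for every self-consistent Fock operator, control the projection difference $P^+_{\kappa,\gamma}-\Pi_\kappa$ by a contour integral, and invoke S\'er\'e's iteration--retraction (Theorem~\ref{theo_theta}) to convert an HS-admissible trial state into a DF-admissible one for the upper bound. Your choice of $\tilde\gamma:=\Pi_\kappa\gamma\Pi_\kappa$ as the comparison state in the lower bound also matches what is done in Proposition~\ref{prop:lower_bound}. So up to this point the proposal is on the right track.

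The final paragraph, however, misidentifies the key technical point and proposes an unnecessary (and much harder) route. You claim the mean-field potential $V_\gamma=\alpha\rho_\gamma\ast|x|^{-1}$ is ``not small in operator norm --- it can reach size of order $Z^{1/3}$ on the Thomas--Fermi scale.'' In the paper's units $\hbar=m=c=1$ this is false. Because the limit is a double limit $\alpha\to0$, $\alpha N\to\kappa$, each factor of $\alpha$ in the mean field makes it \emph{small}: the a priori bounds $\tr\sqrt{-\Delta}\,\gamma_N\leq CN$ and $\alpha D(\rho_{\gamma_N},\rho_{\gamma_N})\leq CN^{1/3}$ yield
\begin{equation*}
\norm{V_{\gamma_N}}_{L^6}\leq C\alpha D(\rho_{\gamma_N},\rho_{\gamma_N})^{1/2}\leq CN^{-1/6},\qquad
\norm{\nabla V_{\gamma_N}}_{L^3}\leq C\alpha\norm{\rho_{\gamma_N}}_{L^{3/2}}\leq C\nu,
\end{equation*}
and for Thomas--Fermi-like densities one even has $\norm{V_{\gamma_N}}_{L^\infty}=O(N^{-2/3})$, not $N^{1/3}$. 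The smallness in $L^6$ is precisely what lets one close the contour-integral estimate: the paper proves Schatten-class bounds on $\1(D_\kappa+V\leq0)-\1(D_\kappa\leq0)$ (Proposition~\ref{prop:difference_projections}), valid whenever $\norm{V}_{L^6}$ is small, with control in terms of $\norm{\nabla V}_{L^2}$ and $\norm{\nabla V}_{L^3}$. Combined with a bootstrap (first establishing $E^{\rm DF}\geq -CN^{1/3}$, hence sharper bounds on $\alpha D(\rho_{\gamma_N},\rho_{\gamma_N})$), this directly delivers the projection-change error $O(N^{-1/15})$ in both directions; no scale separation or spatial localization is needed. The genuine gap in your proposal is thus twofold: you do not observe that the Scott scaling automatically renders $V_\gamma$ small in $L^6$, and you do not supply the Schatten-ideal estimate on the difference of spectral projections that turns this smallness into a quantitative bound on the energy error.
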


Note that our energy is multiplied by $\alpha^2$ compared to~\cite{HanSie-15} and several other works on the subject. We have 
$$\alpha^2N^{\frac73}=\kappa^2 \,N^{\frac13}$$
hence the energy is of order $N^{1/3}$ in our regime, whereas the Scott correction is of order one. 

\begin{remark}
It is only because we rely on existing results, in particular from~\cite{HanSie-15}, that we need to impose $\kappa=\nu$ (neutral atoms). 
We believe that for all $0<\kappa<1$ and $0<\nu<\nu_0$, we have 
$$\lim_{\substack{N\to\ii\\ \alpha N\to\nu}}\left|E^{\rm DF}(\kappa,\alpha,N)-e_{\rm TF}(\kappa/\nu)\,\alpha^2N^{\frac73}-c_{\rm Scott}(\kappa)\right|=0.$$
That is, the result should apply to ions, as first considered by Bach in~\cite{Bach-89}. Several of our intermediate steps will actually be valid in this regime. 
\end{remark}

The rest of the paper is devoted to the proof of Theorems~\ref{thm:Dirac_gap} and~\ref{thm:Scott}.

%%%%%%%%%%%%%%%%%%%%%%%%%%%%%%%%%%%%%%%%%%%%%%%%%%%%%%%%%
%%%%%%%%%%%%%%%%%%%%%%%%%%%%%%%%%%%%%%%%%%%%%%%%%%%%%%%%% 
\section{Proof of Theorem~\ref{thm:Dirac_gap}} \label{sec:proof_Dirac_gap}
%%%%%%%%%%%%%%%%%%%%%%%%%%%%%%%%%%%%%%%%%%%%%%%%%%%%%%%%%
%%%%%%%%%%%%%%%%%%%%%%%%%%%%%%%%%%%%%%%%%%%%%%%%%%%%%%%%%

In this section we provide the proof of Theorem~\ref{thm:Dirac_gap} which states that 
$$ |D_{\kappa,\rho}|\geq c_{\kappa,\nu}|D_0|$$
as soon as $\int_{\R^3}\rho=\nu<\nu_0$ and $0\leq\kappa<1$. Our argument uses the Birman-Schwinger principle as in Nenciu's work~\cite{Nenciu-76} on the distinguished self-adjoint extensions of $D_\kappa$.

\begin{lemma}\label{lem:Birman-Schwinger}
Let $A$ be a self-adjoint operator such that $0\notin\sigma(A)$ and let $B$ be a positive, $A$-bounded operator, on a Hilbert space $\gH$. If $-1\notin \sigma(\sqrt{B}A^{-1}\sqrt{B})$ then  $0\notin\sigma(A+B)$ and the resolvent of $A+B$ is given by
\begin{equation}
 \frac{1}{A+B}=\frac{1}{A}-\frac{1}{A}\sqrt{B}\frac{1}{1+\sqrt{B}A^{-1}\sqrt{B}}\sqrt{B}\frac1A.
 \label{eq:resolvent_Nenciu}
\end{equation}
\end{lemma}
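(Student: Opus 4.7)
The plan is to verify by direct computation that the operator
$$T:=A^{-1}-A^{-1}\sqrt B\,(1+K)^{-1}\sqrt B\,A^{-1},\qquad K:=\sqrt B\,A^{-1}\sqrt B,$$
appearing on the right-hand side of~\eqref{eq:resolvent_Nenciu} is a bounded two-sided inverse of $A+B$. This is the Birman--Schwinger strategy used by Nenciu in~\cite{Nenciu-76}: once the identity is justified, the spectral statement $0\notin\sigma(A+B)$ is obtained for free as a corollary of the bounded invertibility, rather than proved separately.

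First I would show that $T$ is a bounded operator on $\gH$. Since $0\notin\sigma(A)$, $A^{-1}$ is bounded, and the $A$-boundedness of $B$ gives at once that $BA^{-1}$ is bounded on $\gH$. From the identity $\|\sqrt B\,A^{-1}\phi\|^2=\langle A^{-1}\phi,BA^{-1}\phi\rangle$ one then deduces that $\sqrt B\,A^{-1}$, and by adjunction $A^{-1}\sqrt B$, extend uniquely to bounded operators on $\gH$. Hence $K$ is bounded, and the hypothesis $-1\notin\sigma(K)$ makes $(1+K)^{-1}$ bounded by functional calculus; assembling these factors gives $T\in\mathcal{B}(\gH)$.

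Next I would check algebraically that $(A+B)T=I$ on $\gH$ and $T(A+B)\phi=\phi$ for $\phi\in D(A+B)=D(A)$. Expanding the product and using $BA^{-1}\sqrt B=\sqrt B\cdot K$ yields
$$(A+B)T=I+\sqrt B\bigl[\,I-(1+K)^{-1}-K(1+K)^{-1}\,\bigr]\sqrt B\,A^{-1},$$
and the bracket vanishes because $(1+K)(1+K)^{-1}=I$. A symmetric computation handles the other side. Combined with the boundedness of $T$ this shows that $A+B\colon D(A)\to\gH$ is a bijection with bounded inverse $T$, so $0\notin\sigma(A+B)$ and formula~\eqref{eq:resolvent_Nenciu} is established.

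The only genuine difficulty is domain bookkeeping around the unbounded operator $\sqrt B$: one must check that $A^{-1}\sqrt B$ and $\sqrt B\,A^{-1}$ extend to bounded operators, that the range of $A^{-1}\sqrt B$ lies in $D(\sqrt B)$ so that the compositions $K$, $BA^{-1}\sqrt B=\sqrt B\,K$, and $(A+B)A^{-1}\sqrt B=\sqrt B+\sqrt B\,K$ are meaningful, and that $T\gH\subset D(A)$ so that $(A+B)T$ is well defined as an operator identity on all of $\gH$. All of these follow in a routine way from the $A$-boundedness and positivity of $B$; once they are in place the remainder of the argument is purely algebraic.
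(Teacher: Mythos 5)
Your proof is correct and follows the same route as the paper's: define the bounded candidate inverse $T$ using $(1+K)^{-1}$, then verify algebraically that it inverts $A+B$, the cancellation $I-(1+K)^{-1}-K(1+K)^{-1}=0$ being the whole content. You are in fact slightly more careful than the paper on two points worth keeping: you explicitly justify that $\sqrt B\,A^{-1}$ and $A^{-1}\sqrt B$ are bounded (via $\|\sqrt B\,A^{-1}\phi\|^2=\langle A^{-1}\phi,BA^{-1}\phi\rangle$ and adjunction), and you note that one must also check $T(A+B)=I$ on $D(A)$ to conclude invertibility, whereas the paper verifies only $(A+B)R=1$.
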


\begin{proof} First note that since $B$ is $A$-bounded, the operator $A+B$ is well defined on $D(A)$. If $-1\notin \sigma(\sqrt{B}A^{-1}\sqrt{B})$ then we can define the bounded operator
\begin{equation*}
R := \frac{1}{A}-\frac{1}{A}\sqrt{B}\frac{1}{1+\sqrt{B}A^{-1}\sqrt{B}}\sqrt{B}\frac1A.
\end{equation*}
It remains to verify that
\begin{align*}
(A+B)R &= 1 - \sqrt{B}\frac{1}{1+\sqrt{B}A^{-1}\sqrt{B}}\sqrt{B}\frac1A \\
&\qquad  + B  \frac{1}{A} - \sqrt{B}\sqrt{B}\frac{1}{A}\sqrt{B}\frac{1}{1+\sqrt{B}A^{-1}\sqrt{B}}\sqrt{B}\frac1A \\
	&= 1 -  B  \frac{1}{A} - \sqrt{B} \frac{1+\sqrt{B}A^{-1}\sqrt{B}}{1+\sqrt{B}A^{-1}\sqrt{B}}\sqrt{B}\frac1A = 1,
\end{align*}
which proves that $0 \notin \sigma(A+B)$ and (\ref{eq:resolvent_Nenciu}).
\end{proof}

The formula~\eqref{eq:resolvent_Nenciu} was used by Klaus with $A=D_0$ and $B=\kappa/|x|$ to prove the existence of the unique distinguished self-adjoint extension for $D_\kappa$. The critical value $\kappa=1$ arises from the fact that  
$$\norm{\frac1{|x|^{1/2}}\frac1{D_0}\frac1{|x|^{1/2}}}=1.$$
This relation has been conjectured by Nenciu~\cite{Nenciu-76} and was later proved by W\"ust~\cite{Wust-77} and Kato~\cite{Kato-83}. It has recently been rediscovered in~\cite[Thm.~1.3]{ArrDuoVeg-13}.

The lemma implies the following.

\begin{corollary}\label{cor:critical_nu}
Let $A$ be a self-adjoint operator such that $0\notin\sigma(A)$ and let $B$ be a positive bounded operator, on a Hilbert space $\gH$. Denote by
$$\lambda_c(B):=\min\sigma\left(\sqrt{B}\frac1A\sqrt{B}\right)$$
the minimum of the spectrum of $\sqrt{B}A^{-1}\sqrt{B}$ and
$$\nu_0(B)=\begin{cases}
-\frac1{\lambda_c(B)}&\text{if $\lambda_c(B)<0$}\\
+\ii&\text{otherwise.}
\end{cases}$$
Then 
$$0\notin\sigma(A+\nu B)$$
for all $0\leq \nu<\nu_0(B)$ whereas $0\in \sigma(A+\nu_0(B) B)$ if $\nu_0(B)<\ii$. 
\end{corollary}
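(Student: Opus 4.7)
The plan is to derive both directions from the factorization $A+\nu B = A(I+\nu A^{-1}B)$, together with the standard spectral-permutation identity $\sigma(XY)\setminus\{0\}=\sigma(YX)\setminus\{0\}$. This is cleaner than invoking Lemma~\ref{lem:Birman-Schwinger} twice and avoids any Weyl-sequence gymnastics for the closing-of-the-gap direction.

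First I would record the basic objects. Since $A$ is self-adjoint with $0\notin\sigma(A)$, its inverse $A^{-1}$ is bounded and self-adjoint, and since $B$ is bounded, so is $\sqrt{B}A^{-1}\sqrt{B}$, which is moreover self-adjoint. Hence $\lambda_c(B)=\min\sigma(\sqrt{B}A^{-1}\sqrt{B})\in\R$ is well defined, and the definition of $\nu_0(B)$ immediately gives
\[
-\tfrac1\nu<\lambda_c(B)\quad\text{for all }0<\nu<\nu_0(B),\qquad -\tfrac1{\nu_0(B)}=\lambda_c(B)\ \text{if }\nu_0(B)<\infty.
\]

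Next I would observe that, because $A^{-1}B$ is bounded, for $\nu>0$
\[
A+\nu B = A\bigl(I+\nu A^{-1}B\bigr),
\]
and $A$ is invertible, so $0\in\sigma(A+\nu B)$ if and only if $-1/\nu\in\sigma(A^{-1}B)$. The point is then to relate $\sigma(A^{-1}B)$ to $\sigma(\sqrt{B}A^{-1}\sqrt{B})$. Taking $X=A^{-1}\sqrt{B}$ and $Y=\sqrt{B}$, both bounded, the elementary identity $\sigma(XY)\setminus\{0\}=\sigma(YX)\setminus\{0\}$ yields
\[
\sigma(A^{-1}B)\setminus\{0\}=\sigma\bigl(\sqrt{B}A^{-1}\sqrt{B}\bigr)\setminus\{0\}.
\]
Since $-1/\nu\ne0$, we conclude that, for every $\nu>0$,
\[
0\in\sigma(A+\nu B)\quad\Longleftrightarrow\quad -\tfrac1\nu\in\sigma\bigl(\sqrt{B}A^{-1}\sqrt{B}\bigr).
\]

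Both conclusions of the corollary then follow immediately from the preliminary observations: for $0<\nu<\nu_0(B)$ we have $-1/\nu<\lambda_c(B)=\min\sigma(\sqrt{B}A^{-1}\sqrt{B})$, so $-1/\nu\notin\sigma$, hence $0\notin\sigma(A+\nu B)$; the case $\nu=0$ is the hypothesis $0\notin\sigma(A)$. And if $\nu_0(B)<\infty$, then $-1/\nu_0(B)=\lambda_c(B)$ belongs to the spectrum of $\sqrt{B}A^{-1}\sqrt{B}$, giving $0\in\sigma(A+\nu_0(B)B)$.

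The only mildly delicate point is justifying $\sigma(XY)\setminus\{0\}=\sigma(YX)\setminus\{0\}$, but this is standard for bounded operators (write $(\lambda-XY)^{-1}=\lambda^{-1}\bigl(I+X(\lambda-YX)^{-1}Y\bigr)$). Alternatively, one can stick closer to the style of the paper and simply apply Lemma~\ref{lem:Birman-Schwinger} to the pair $(A,\nu B)$ for the first half (noting that $\nu\sqrt{B}A^{-1}\sqrt{B}$ has minimum spectral value $\nu\lambda_c(B)>-1$ precisely when $\nu<\nu_0(B)$), and for the borderline case $\nu=\nu_0(B)$ construct an approximate null vector of $A+\nu_0(B)B$ from a Weyl sequence $(\psi_n)$ for $\lambda_c(B)\in\sigma(\sqrt{B}A^{-1}\sqrt{B})$ by setting $\phi_n=A^{-1}\sqrt{\nu_0 B}\,\psi_n$; the computation $(A+\nu_0 B)\phi_n=\sqrt{\nu_0 B}\,\eps_n\to0$ works, but checking that $\phi_n$ does not tend to zero is the main nuisance of that approach, which is why I prefer the factorization route above.
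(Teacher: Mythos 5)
Your proof is correct and takes a genuinely different route from the paper's. The paper argues purely through Lemma~\ref{lem:Birman-Schwinger}: for $\nu<\nu_0(B)$ it observes $1+\nu\sqrt{B}A^{-1}\sqrt{B}\geq 1+\nu\lambda_c(B)>0$, so $-1\notin\sigma(\sqrt{\nu B}A^{-1}\sqrt{\nu B})$ and the resolvent formula~\eqref{eq:resolvent_Nenciu} gives $0\notin\sigma(A+\nu B)$. This handles only the non-critical direction, since Lemma~\ref{lem:Birman-Schwinger} is stated as a one-way implication; the paper tacitly leaves the borderline $\nu=\nu_0(B)$ to the reader (it follows from the converse of the Birman--Schwinger principle, or from a Weyl-sequence argument of the kind you sketch and dismiss). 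Your factorization $A+\nu B=A(I+\nu A^{-1}B)$ combined with $\sigma(XY)\setminus\{0\}=\sigma(YX)\setminus\{0\}$ packages the full equivalence $0\in\sigma(A+\nu B)\Leftrightarrow -1/\nu\in\sigma(\sqrt{B}A^{-1}\sqrt{B})$ in one stroke, which makes the critical case immediate since $\lambda_c(B)$ lies in the (closed) spectrum of $\sqrt{B}A^{-1}\sqrt{B}$. What the student's route buys is precisely this clean two-way statement without having to extract the converse from the resolvent identity; what it costs is a small amount of care with the unbounded operator $A$: the identity $A+\nu B=A(I+\nu A^{-1}B)$ holds only on $D(A)$, and the step ``$A$ invertible $\Rightarrow$ $0\in\sigma(A+\nu B)$ iff $-1/\nu\in\sigma(A^{-1}B)$'' deserves a word of justification. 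It is true, with explicit inverses $(I+\nu A^{-1}B)^{-1}=I-\nu(A+\nu B)^{-1}B$ and $(A+\nu B)^{-1}=(I+\nu A^{-1}B)^{-1}A^{-1}$, both of which require checking that the relevant compositions land in $D(A)$, but the assertion as written glosses over this. With that caveat noted, the argument is complete and in fact a little more self-contained than the paper's.
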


\begin{proof}
For $\nu>0$ we have $1+\nu\sqrt{B}A^{-1}\sqrt{B}\geq 1+\nu\lambda_c(B)$ hence the result follows from Lemma~\ref{lem:Birman-Schwinger}.
\end{proof}

Now we go back to our Dirac operator. Note that by charge conjugation invariance, the spectrum of the operator 
$$M_\rho=\sqrt{\rho\ast|x|^{-1}}\frac{1}{D_0}\sqrt{\rho\ast|x|^{-1}}$$
is symmetric with respect to the origin. When $\sqrt{\rho}\in H^{1/2}(\R^3,\C^4)$, $M_\rho$ is a compact operator. Corollary~\ref{cor:critical_nu} implies that the critical value $\nu_0(\rho)$ at which the largest eigenvalue of $D_0+\nu\rho\ast|x|^{-1}$ crosses the origin is given by
$$-\frac1{\nu_0(\rho)}=-\|M_\rho\|=\inf_{\Psi\in L^2(\R^3,\C^4)}\frac{\pscal{\Psi,M_\rho\Psi}}{\|\Psi\|^2}.$$
In particular,
$$-\frac1{\nu_0}=\inf_{\substack{\rho\geq0\\ \sqrt{\rho}\in H^{1/2}(\R^3)}}\inf_{\Psi\in L^2(\R^3,\C^4)}\frac{\pscal{\Psi,M_\rho\Psi}}{\|\Psi\|^2}$$
as used already in~\cite{EstLewSer-19b_ppt}.

Now we look at the operator 
$$D_{\kappa,\rho}=D_\kappa+\nu\rho\ast\frac{1}{|x|}$$
and recall that $D_\kappa$ is invertible for $0\leq\kappa<1$. The previous theory tells us that for a probability density $\rho$ with $\sqrt{\rho}\in H^{1/2}(\R^3)$, no eigenvalue will cross 0 for $\nu<\nu_0(\kappa,\rho)$ given by
$$-\frac{1}{\nu_0(\kappa,\rho)}=\inf_{\Psi\in L^2(\R^3,\C^4)}\frac{\pscal{\sqrt{\rho\ast|x|^{-1}}\Psi,\frac1{D_\kappa}\sqrt{\rho\ast|x|^{-1}}\Psi}}{\|\Psi\|^2}.$$

In order to prove a lower bound on $|D_{\kappa,\rho}|$, we use Formula~\eqref{eq:resolvent_Nenciu}. We write
\begin{align*}
|D_0|^{1/2}\frac{1}{|D_{\kappa,\rho}|^{1/2}}&=|D_0|^{1/2}\frac{1}{D_{\kappa,\rho}}|D_{\kappa,\rho}|^{1/2}U_{\kappa,\rho}\\
&=\bigg(|D_0|^{1/2}\frac{1}{D_\kappa}|D_{\kappa,\rho}|^{1/2}\\
&\qquad\qquad+|D_0|^{1/2}\frac{1}{D_\kappa}\sqrt{V_\rho}\frac1{1+M_{\kappa,\rho}}\sqrt{V_\rho}\frac1{D_\kappa}|D_{\kappa,\rho}|^{1/2}\bigg)U_{\kappa,\rho},
\end{align*}
where $U_{\kappa,\rho}={\rm sgn}(D_{\kappa,\rho})$ is a unitary operator, $V_\rho=\rho\ast|x|^{-1}$ and $M_{\kappa,\rho}=\sqrt{V_\rho}D_\kappa^{-1}\sqrt{V_\rho}$. We have 
$$\|(1+M_{\kappa,\rho})^{-1}\|\leq \frac{\nu_0}{\nu_0-\nu}.$$
This gives 
\begin{multline}
\norm{|D_0|^{1/2}\frac{1}{|D_{\kappa,\rho}|^{1/2}}}\leq \norm{|D_0|^{1/2}\frac1{|D_\kappa|^{1/2}}}\norm{\frac1{|D_\kappa|^{1/2}}|D_{\kappa,\rho}|^{1/2}}\times\\ \times\bigg(1+\frac{\nu_0}{\nu_0-\nu}\norm{\sqrt{V_\rho}\frac1{|D_\kappa|^{1/2}}}^2\bigg). 
\label{eq:estim_norm_D_kappa_rho}
\end{multline}
The best constant $c_\kappa$ such that $c_\kappa|D_0|\leq|D_\kappa|$ is exactly given by 
$$\frac1{c_\kappa}=\norm{\frac1{|D_\kappa|^{1/2}}|D_0|\frac1{|D_\kappa|^{1/2}}}=\norm{|D_0|^{1/2}\frac1{|D_\kappa|^{1/2}}}^2.$$
From the Hardy-Kato inequality~\eqref{eq:Kato}, we have 
$$\norm{\sqrt{V_\rho}\frac1{|D_\kappa|^{1/2}}}^2=\norm{\frac1{|D_\kappa|^{1/2}}V_\rho\frac1{|D_\kappa|^{1/2}}}\leq\frac{\pi\nu}2 \norm{\frac1{|D_\kappa|^{1/2}}|D_0|\frac1{|D_\kappa|^{1/2}}}=\frac{\pi\nu}{2c_\kappa}.$$
On the other hand we have by~\eqref{eq:upper_bd_D}
\begin{align}
\norm{\frac1{|D_\kappa|^{1/2}}|D_{\kappa,\rho}|^{1/2}}&\leq \norm{\frac1{|D_\kappa|^{1/2}}|D_0|^{1/2}}\norm{\frac1{|D_0|^{1/2}}|D_{\kappa,\rho}|^{1/2}}\nn\\
&\leq \sqrt{\frac{1+2(\kappa + \nu)}{c_\kappa}}.\label{eq:not_so_good_bound}
\end{align}
Inserting in~\eqref{eq:estim_norm_D_kappa_rho}, we obtain 
$$\norm{|D_0|^{1/2}\frac{1}{|D_{\kappa,\rho}|^{1/2}}}^2\leq \frac{1 + 2(\nu+\kappa)}{c_\kappa^2}\bigg(1+\frac{\pi}{2c_\kappa}\frac{\nu_0\nu}{\nu_0-\nu}\bigg)^2,$$
which concludes the proof of Theorem~\ref{thm:Dirac_gap}.\qed

\begin{remark}
Our bound~\eqref{eq:not_so_good_bound} is far from optimal, in particular when $\nu=0$. This is why we obtain $c_{\kappa,0}\neq c_\kappa$.
\end{remark}

%%%%%%%%%%%%%%%%%%%%%%%%%%%%%%%%%%%%%%%%%%%%%%%%%%%%%%%%%
%%%%%%%%%%%%%%%%%%%%%%%%%%%%%%%%%%%%%%%%%%%%%%%%%%%%%%%%% 
\section{A bound on differences of spectral projections}
%%%%%%%%%%%%%%%%%%%%%%%%%%%%%%%%%%%%%%%%%%%%%%%%%%%%%%%%%
%%%%%%%%%%%%%%%%%%%%%%%%%%%%%%%%%%%%%%%%%%%%%%%%%%%%%%%%%

In this section we prove the following estimate on the difference of two spectral projections, when the electric field has a small energy, first in Hilbert-Schmidt norm and then in the Schatten space $\gS^6$. We recall that $\norm{A}_{\gS^p}:=(\tr|A|^p)^{1/p}$.

\begin{proposition}[Schatten class estimates on differences of projections]\label{prop:difference_projections}
Let $0\leq\kappa<1$ and $\eps>0$. There exists a constant $B=B(\kappa)$ and a constant $C=C(\kappa,\eps)$  such that 
\begin{multline}
 \norm{|D_\kappa|^{\frac12}\Big(\1(D_\kappa+V\leq0)-\1(D_\kappa\leq0)\Big)|D_\kappa|^{-\frac12}}_{\gS^2}\\
 \leq B\left(1+\norm{\nabla V}_{L^3(\R^3)}\right)\norm{\nabla V}_{L^2(\R^3)}
 \label{eq:estim_proj_S2}
\end{multline}
and 
\begin{equation}
 \norm{|D_\kappa|^{\frac12}\Big(\1(D_\kappa+V\leq0)-\1(D_\kappa\leq0)\Big)|D_\kappa|^{-\eps}}_{\gS^6}\leq C\norm{\nabla V}_{L^2(\R^3)}
 \label{eq:estim_proj_S6}
\end{equation}
for every $V\in L^6(\R^3)$ such that $\norm{V}_{L^6(\R^3)}\leq 1/B$. 

This implies
\begin{multline}
 \norm{|D_\kappa|^{\frac12}\Big(\1(D_\kappa+V\leq0)-\1(D_\kappa\leq0)\Big)|D_\kappa|^{-\frac{6-p}{8}-\eps}}_{\gS^p}\\
 \leq C^{\frac{3p-6}{2p}}B^{\frac{6-p}{2p}}\left(1+\norm{\nabla V}_{L^3(\R^3)}\right)^{\frac{6-p}{2p}}\norm{\nabla V}_{L^2(\R^3)}
 \label{eq:estim_proj_Sp}
\end{multline}
for all $2\leq p\leq6$, for every $\|V\|_{L^6(\mathbb{R}^3)} \leq 1/B$.
\end{proposition}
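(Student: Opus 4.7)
The plan is to combine a resolvent representation of the spectral-projection difference with Kato--Seiler--Simon Schatten-class estimates and interpolation. First, I would check that for $\|V\|_{L^6}\leq 1/B$ small enough, a Birman--Schwinger argument in the spirit of Corollary \ref{cor:critical_nu} (combined with Sobolev $\|V\|_{L^6(\R^3)}\leq c\|\nabla V\|_{L^2(\R^3)}$) shows that $D_\kappa+V$ inherits from $D_\kappa$ a spectral gap around $0$ of size comparable to $\sqrt{1-\kappa^2}$. Then I would express the projection difference as the absolutely convergent imaginary-axis integral
$$\1(D_\kappa+V\leq 0)-\1(D_\kappa\leq 0)=-\frac{1}{2\pi}\int_\R \frac{1}{D_\kappa-i\eta}\,V\,\frac{1}{D_\kappa+V-i\eta}\,d\eta,$$
whose integrand decays like $1/\eta^2$ at infinity thanks to the two resolvents.

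For the $\gS^6$ bound \eqref{eq:estim_proj_S6}, I would conjugate the integrand by $|D_\kappa|^{1/2}$ and $|D_\kappa|^{-\eps}$, use the spectral equivalence $c_\kappa|D_0|\leq|D_\kappa|\leq(1+2\kappa)|D_0|$ coming from \eqref{eq:compare_Dirac_kappa}--\eqref{eq:upper_bd_D} to move freely between functions of $D_\kappa$ and of $D_0$, and apply the Kato--Seiler--Simon inequality
$$\|f(x)\,g(-i\nabla)\|_{\gS^6}\leq c\,\|f\|_{L^6(\R^3)}\|g\|_{L^6(\R^3)}.$$
The Sobolev bound $\|V\|_{L^6}\leq c\|\nabla V\|_{L^2}$ then produces the stated right-hand side, and the remaining $\eta$-integral converges thanks to the gap.

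The $\gS^2$ bound \eqref{eq:estim_proj_S2} is the main difficulty because a direct application of Kato--Seiler--Simon at $p=2$ would require $V\in L^2$, which is not available. I would instead exploit the off-diagonal character
$$P^-_{D_\kappa+V}-P^-_{D_\kappa}=P^-_{D_\kappa+V}P^+_{D_\kappa}-P^+_{D_\kappa+V}P^-_{D_\kappa},$$
together with the elementary Fourier estimate $|P^+_0(\xi)-P^+_0(\xi')|\lesssim|\xi-\xi'|/\sqrt{1+\max(|\xi|,|\xi'|)^2}$ for the free Dirac projection. The extra factor $|\xi-\xi'|$ appearing in the integral kernel of the difference is exactly what converts $V$ into $\nabla V$ in Fourier variables, yielding a bound by $\|\nabla V\|_{L^2}$ for the leading (linear in $V$) term of a Neumann expansion of $(D_\kappa+V-i\eta)^{-1}$. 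Higher-order terms in the expansion are controlled via the $\gS^6$ estimate just established, producing the nonlinear factor $(1+\|\nabla V\|_{L^3})$ through Sobolev embedding.

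Finally, the $\gS^p$ bound \eqref{eq:estim_proj_Sp} for $p\in[2,6]$ will follow by combining the two endpoint bounds via Hölder's inequality for Schatten norms,
$$\|T\|_{\gS^p}\leq\|T\|_{\gS^2}^{(6-p)/(2p)}\|T\|_{\gS^6}^{(3p-6)/(2p)},$$
applied together with an appropriate factoring of the weight $|D_\kappa|^{-(6-p)/8-\eps}$ through both endpoint configurations, using that $|D_\kappa|^{-s}$ is a bounded operator for every $s\geq 0$ because of the gap. The exponents on the constants $B$, $C$, and on the factor $(1+\|\nabla V\|_{L^3})$ match exactly this Hölder splitting, while the remaining $\|\nabla V\|_{L^2}$ is the common factor shared by the two endpoint bounds. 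The hard step will be the $\gS^2$ estimate, which requires the momentum-space analysis of the Dirac projection to uncover the gradient structure of $V$ without going through $V\in L^2$.
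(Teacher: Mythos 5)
Your $\gS^6$ argument and your interpolation step are in the same spirit as the paper, but the $\gS^2$ estimate --- which you correctly identify as the hard part --- is where your plan has a real gap, and a smaller one appears in the quadratic terms.

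\textbf{The Coulomb singularity is not handled.} Your key device for the linear-in-$V$ term is the Fourier-space estimate $|P_0^+(\xi)-P_0^+(\xi')|\lesssim|\xi-\xi'|/\sqrt{1+\max(|\xi|,|\xi'|)^2}$ for the \emph{free} Dirac projection. This produces the gradient of $V$ only because $D_0$, and hence $P_0^\pm$, is a Fourier multiplier. But the operator in the proposition is $D_\kappa=D_0-\kappa/|x|$, and neither $(D_\kappa+i\eta)^{-1}$ nor $P_\kappa^\pm$ is diagonal in Fourier variables, so the off-diagonal kernel $P_\kappa^+(\xi)-P_\kappa^+(\xi')$ you want to exploit does not even make sense. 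The linear-in-$V$ term of the expansion is $\int_\R (D_\kappa+i\eta)^{-1}V(D_\kappa+i\eta)^{-1}\,d\eta$; to bring the Fourier structure into play one must first insert the resolvent identity $(D_\kappa+i\eta)^{-1}=(D_0+i\eta)^{-1}-\kappa(D_0+i\eta)^{-1}|x|^{-1}(D_\kappa+i\eta)^{-1}$ (and its adjoint form). Only the pure-$D_0$ piece is then amenable to a commutator/Fourier argument: $\int(D_0+i\eta)^{-2}d\eta=0$ inserts $[D_0,V]=-i\boldsymbol{\alpha}\cdot\nabla V$, which is precisely the gradient structure you expected. The remaining pieces carry factors of $1/|x|$, which is only in $\gS^3_w$ by Cwikel and for which an extra splitting $1/|x|=\chi/|x|+(1-\chi)/|x|$ is needed: the compactly supported part is handled with Cwikel giving membership in $\gS^p$ for $p<3$, while the far-field part has a gradient in $L^3$ and is treated by another commutator. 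Your proposal omits all of this; as written it would only prove the $\gS^2$ estimate for $\kappa=0$.

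\textbf{The quadratic-in-$V$ term.} You write that the higher-order Neumann terms ``are controlled via the $\gS^6$ estimate just established, producing the nonlinear factor $(1+\norm{\nabla V}_{L^3})$ through Sobolev embedding.'' This works for the cubic and higher terms, where three $\gS^6$ factors put you in $\gS^2$. But the quadratic term controlled by two $\gS^6$ factors lands only in $\gS^3\not\subset\gS^2$, so the estimate does not close. One has to extract one gradient here as well, via $\int(D_\kappa+i\eta)^{-3}d\eta=0$ and the commutator $[V,(D_\kappa+i\eta)^{-1}]$, giving a bound of the form $\norm{V}_{L^6}\norm{\nabla V}_{L^3}$ from $\gS^6\times\gS^3\subset\gS^2$. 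That is what actually produces the factor $(1+\norm{\nabla V}_{L^3})$, not Sobolev embedding.

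Finally, a minor point on the interpolation step: the two endpoint operators differ by the weight $|D_\kappa|^{-1/2}$ versus $|D_\kappa|^{-\eps}$ on the right, so the naive log-convexity inequality $\norm{T}_{\gS^p}\leq\norm{T}_{\gS^2}^{(6-p)/(2p)}\norm{T}_{\gS^6}^{(3p-6)/(2p)}$ does not apply to a single $T$; one needs complex (Stein) interpolation of the holomorphic family $z\mapsto |D_\kappa|^{1/2}(P_{\kappa,V}^--P_\kappa^-)|D_\kappa|^{-a(z)}$, which is what the paper does. Your phrase ``appropriate factoring'' gestures at this but does not resolve it.
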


In the lemma we can replace $|D_\kappa|^{1/2}$ by $|D_0|^{1/2}$ everywhere since those are comparable.

\begin{proof}
In the whole proof we denote by $C$ a generic constant whose value can change from line to line, but which only depends on $\kappa$ and $\eps$. 
For $0\leq\kappa<1$ we have by~\eqref{eq:compare_Dirac_kappa}
\begin{equation}
\norm{ V\frac{1}{D_\kappa}}_{\gS^6}\leq C\norm{V\frac{1}{|D_0|^{\frac12+\eps}}}_{\gS^6}\leq C\norm{V}_{L^6(\R^3)}
 \label{eq:Rellich-Kato_V}
\end{equation}
for every $\eps<\sqrt{1-\kappa^2}$, where in the last inequality we have used the Kato-Seiler-Simon inequality~\cite{Simon-79}
\begin{equation}
\norm{f(x)g(-i\nabla)}_{\gS^p}\leq (2\pi)^{-\frac{d}{p}}\norm{f}_{L^p(\R^d)}\norm{g}_{L^p(\R^d)},\qquad \forall p\geq2.
\label{eq:KSS}
\end{equation}
By the Rellich-Kato theorem, this proves that when $\norm{V}_{L^6}$ is small enough, $D_\kappa+V$ is self-adjoint on the same domain as $D_\kappa$, with 
\begin{equation}
 \frac1C \big(D_\kappa+V)^2\leq (D_\kappa)^2\leq C(D_\kappa+V)^2
 \label{eq:compare_D_kappa_V}
\end{equation}
for some constant $C$ depending only on the maximal allowed value of $\|V\|_{L^6}$. In particular, $D_\kappa+V$ has a gap around the origin. In the rest of the proof we always assume that $\|V\|_{L^6}$ is small enough.

Throughout the proof we denote for simplicity $P_{\kappa,V}^\pm:=\1_{\R^\pm}(D_0+V)$. 

\subsubsection*{$\bullet$ Proof of the $\gS^6$ estimate (\ref{eq:estim_proj_S6})}
We use Stone's formula for spectral projections and the resolvent formula to express the difference as
$$P_{\kappa,V}^+-P_{\kappa}^+=-\frac{1}{2\pi}\int_\R\frac{1}{D_\kappa+i\eta}V \frac{1}{D_\kappa+V+i\eta}\,d\eta.$$
Hence
\begin{multline}
\norm{|D_\kappa|^{\frac12}(P_{\kappa,V}^+-P_{\kappa}^+)|D_\kappa|^{-\eps}}_{\gS^6}\\
\leq \frac{1}{2\pi}\int_\R\norm{\frac{|D_\kappa|^{\frac12}}{D_\kappa+i\eta}V \frac{1}{D_\kappa+V+i\eta}\frac1{|D_\kappa|^{\eps}}}_{\gS^6}d\eta. 
\label{eq:estim_diff_proj_1}
\end{multline}
Writing
\begin{multline*}
\frac{1}{D_\kappa+V+i\eta}\frac1{|D_\kappa|^{\eps}}=\frac1{|D_\kappa|^{\eps}} \left(|D_{\kappa}|^\eps\frac{1}{|D_{\kappa,V}|^\eps}\right) \frac{1}{D_\kappa+V+i\eta}\left(|D_{\kappa,V}|^\eps\frac1{|D_\kappa|^\eps}\right) 
\end{multline*}
inserting in~\eqref{eq:estim_diff_proj_1} and using~\eqref{eq:compare_D_kappa_V}, we find
\begin{align*}
&\norm{\frac{|D_\kappa|^{\frac12}}{D_\kappa+i\eta}V \frac{1}{D_\kappa+V+i\eta}\frac1{|D_\kappa|^{\eps}}}_{\gS^6}\\
&\qquad \qquad \leq \frac{C}{\pscal{\eta}}\norm{\frac{1}{|D_\kappa|^{\frac12}+\pscal{\eta}^{\frac12}}V \frac{1}{|D_\kappa|^\eps}}_{\gS^6}\\
&\qquad \qquad \leq \frac{C}{\pscal{\eta}}\norm{\frac{1}{|D_0|^{\frac12}+\pscal{\eta}^{\frac12}}V \frac{1}{|D_0|^\eps}}_{\gS^6}\leq \frac{C\norm{V}_{L^6}}{\pscal{\eta}^{1+\eps}}.
\end{align*}
Here $\pscal{\eta}=\sqrt{1+\eta^2}$ is the Japanese bracket. We have used the Kato-Seiler-Simon inequality~\eqref{eq:KSS} and the fact that $|D_\kappa|$ is comparable to $|D_0|$ in the quadratic form sense. 
We obtain~\eqref{eq:estim_proj_S6} after integrating over $\eta$ and using the Sobolev inequality $\|V\|_{L^6}\leq C\|\nabla V\|_{L^2}$.

\subsubsection*{$\bullet$ Proof of the Hilbert-Schmidt estimate (\ref{eq:estim_proj_S2})}
The proof of (\ref{eq:estim_proj_S2}) is much more involved. We start by iterating the resolvent formula twice to obtain
\begin{align}
P_{\kappa,V}^+-P_{\kappa}^+&=-\frac{1}{2\pi}\int_\R\frac{1}{D_\kappa+i\eta}V \frac{1}{D_\kappa+i\eta}\,d\eta\nn\\
&\qquad+\frac{1}{2\pi}\int_\R\frac{1}{D_\kappa+i\eta}V\frac{1}{D_\kappa+i\eta}V \frac{1}{D_\kappa+i\eta}\,d\eta\nn\\
&\qquad-\frac{1}{2\pi}\int_\R\frac{1}{D_\kappa+i\eta}V\frac{1}{D_\kappa+i\eta}V\frac{1}{D_\kappa+i\eta}V \frac{1}{D_\kappa+V+i\eta}\,d\eta.
\label{eq:decomp_difference_proj}
\end{align}
Since $D_\kappa+V$ is comparable to $D_\kappa$ by~\eqref{eq:compare_D_kappa_V} and $|D_\kappa|^{1/2}$ is comparable to $|D_0|^{1/2}$, the last term can be bounded by H\"older's inequality in Schatten spaces by
\begin{align*}
&\norm{|D_\kappa|^{\frac12}\int_\R\frac{1}{D_\kappa+i\eta}V\frac{1}{D_\kappa+i\eta}V\frac{1}{D_\kappa+i\eta}V \frac{1}{D_\kappa+V+i\eta}\,d\eta}_{\gS^2}\\
&\qquad\leq C\int_{\R} \frac{d\eta}{\pscal{\eta}^{\frac12}}\norm{\frac{1}{|D_0|^{\frac12}+\pscal{\eta}^{\frac12}}V\frac{1}{|D_0|^{\frac12}+\pscal{\eta}^{\frac12}}}_{\gS^6}^3\\
&\qquad\leq C\norm{V}_{L^6}^3\int_{\R} \frac{d\eta}{\pscal{\eta}^2}. 
\end{align*}
It is here not necessary to use the operator $|D_\kappa|^{-1/2}$ on the right side. It therefore remains to estimate the first two terms in~\eqref{eq:decomp_difference_proj}.

We start with the second term in~\eqref{eq:decomp_difference_proj}. Using that 
$$\int_\R\frac{1}{(D_\kappa+i\eta)^3}\,d\eta=0$$
by Cauchy's formula, we have 
\begin{multline*}
\int_\R\frac{1}{D_\kappa+i\eta}V\frac{1}{D_\kappa+i\eta}V \frac{1}{D_\kappa+i\eta}\,d\eta \\
=\int_\R\frac{1}{D_\kappa+i\eta}V\frac{1}{D_\kappa+i\eta}\left[V, \frac{1}{D_\kappa+i\eta}\right]\,d\eta
+\int_\R\left[\frac{1}{D_\kappa+i\eta},V\right]\frac{1}{(D_\kappa+i\eta)^2} V\,d\eta.
\end{multline*}
Inserting then 
$$\left[V, \frac{1}{D_\kappa+i\eta}\right]=\frac{1}{D_\kappa+i\eta}\left[D_\kappa , V\right]\frac{1}{D_\kappa+i\eta}=-i\frac{1}{D_\kappa+i\eta}\alp\cdot \nabla V\frac{1}{D_\kappa+i\eta}$$
we obtain
\begin{multline*}
\int_\R\frac{1}{D_\kappa+i\eta}V\frac{1}{D_\kappa+i\eta}V \frac{1}{D_\kappa+i\eta}\,d\eta \\
=-i\int_\R\frac{1}{D_\kappa+i\eta}V\frac{1}{(D_\kappa+i\eta)^2}\alp\cdot \nabla V\frac{1}{D_\kappa+i\eta}\,d\eta\\
+i\int_\R \frac{1}{D_\kappa+i\eta}\alp\cdot \nabla V\frac{1}{(D_\kappa+i\eta)^3} V\,d\eta.
\end{multline*}
In order to estimate this term, we use that $|D_\kappa|^{1+\eps}\geq c|D_0|^{1+\eps'}$ for every $\eps'<\min(\eps,2\eps\sqrt{1-\kappa^2})$, see~\eqref{eq:lower_bound_Dirac_Coulomb_powers_eta}. This gives
\begin{align*}
&\norm{|D_\kappa|^{\frac12}\int_\R\frac{1}{D_\kappa+i\eta}V\frac{1}{D_\kappa+i\eta}V \frac{1}{D_\kappa+i\eta}\,d\eta\;|D_\kappa|^{-\frac12}}_{\gS^2}\\
&\qquad\leq C\int_\R\norm{\frac{1}{|D_0|^{\frac12}}V\frac{1}{|D_0|^{\frac12}}}_{\gS^6}\norm{\frac{1}{|D_0|^{\frac12}}\alp\cdot \nabla V\frac{1}{|D_0|^{\frac12+\eps'}}}_{\gS^3}\,\frac{d\eta}{\pscal{\eta}^{2-\eps}}\\
&\qquad\leq C\norm{V}_{L^6}\norm{\nabla V}_{L^3}.
\end{align*}
This gives rise to the term $\norm{V}_{L^6}\norm{\nabla V}_{L^3}$ in our estimate~\eqref{eq:estim_proj_S2}.

Finally, we deal with the first term in~\eqref{eq:decomp_difference_proj}. If we had $D_0$ in place of $D_\kappa$, the result would follow directly from the Kato-Seiler-Simon inequality. The difficulty here is that high powers of $D_\kappa$ are not comparable with $D_0$ when $\kappa$ is close to $1$. So we compute the difference exactly. We insert the resolvent formula
\begin{align*}
\frac{1}{\Dk + i\eta} &= \frac{1}{D_0 + i\eta} -\kappa\frac{1}{D_0 + i\eta}\frac{1}{|x|} \frac{1}{\Dk + i\eta}\\ % \label{res_id_0_k_1}\\
&=\frac{1}{D_0 + i\eta} -\kappa\frac{1}{\Dk + i\eta}\frac{1}{|x|} \frac{1}{D_0 + i\eta} %\label{res_id_0_k_2}
\end{align*}
in the first term in (\ref{eq:decomp_difference_proj}) and we obtain the rather lengthy formula
\begin{align}
&\int_{\mathbb{R}} |\Dk|^{\frac12} \frac{1}{D_\kappa + i\eta} V  \frac{1}{D_\kappa + i\eta}|\Dk|^{-\frac12}d\eta\nn\\
&\ =\int_{\mathbb{R}} |\Dk|^{\frac12} \frac{1}{D_0 + i\eta} V  \frac{1}{D_0 + i\eta}|\Dk|^{-\frac12}d\eta \nn\\
&\quad  - \kappa\int_{\mathbb{R}} |\Dk|^{\frac12} \left(\frac{1}{D_0+ i\eta}\frac{1}{|x|} \frac{1}{D_0 + i\eta} V  \frac{1}{D_0 + i\eta}+\text{h.c.}\right)|\Dk|^{-\frac12}d\eta \nn\\
&\quad + \kappa^2\int_{\mathbb{R}} |\Dk|^{\frac12}\left( \frac{1}{D_\kappa + i\eta}\frac{1}{|x|} \frac{1}{D_0 + i\eta}\frac{1}{|x|} \frac{1}{D_0 + i\eta} V  \frac{1}{D_0 + i\eta}+\text{h.c.}\right)|\Dk|^{-\frac12}d\eta \nn\\
&\quad +\kappa^2\int_{\mathbb{R}} |\Dk|^{\frac12}\frac{1}{D_\kappa + i\eta}\frac{1}{|x|} \frac{1}{D_0 + i\eta} V  \frac{1}{D_0 + i\eta}\frac{1}{|x|} \frac{1}{\Dk + i\eta}|\Dk|^{-\frac12}d\eta.\label{eq:expand_1st_order_kappa}
\end{align}
In order to estimate the last two terms we can use that for $s>0$, we have 
$$\frac{1}{|D_0|^{\frac12}}\frac{1}{|x|}\frac{1}{|D_0|^{\frac12+s}}\in\gS^3_w$$
by Cwikel's inequality~\cite{Simon-79}. In particular, we deduce that 
$$\frac{1}{|D_0|^{\frac12}}\frac{1}{|x|}\frac{1}{|D_0|^{\frac12+s}}\in \gS^p$$
for all $p>3$ and all $s>0$. For instance we can control the last term by
\begin{align*}
&\norm{|\Dk|^{\frac12}\frac{1}{D_\kappa + i\eta}\frac{1}{|x|} \frac{1}{D_0 + i\eta} V  \frac{1}{D_0 + i\eta}\frac{1}{|x|} \frac{1}{\Dk + i\eta}|\Dk|^{-\frac12}}_{\gS^2}\\
&\qquad \leq\frac{C}{\pscal{\eta}}\norm{\frac{1}{|D_0|^{\frac12}}\frac{1}{|x|}\frac{1}{|D_0|^{\frac12+s}}}_{\gS^4}^2\norm{\frac{1}{(|D_0|+\pscal{\eta})^{\frac12-s}}V\frac{1}{(|D_0|+\pscal{\eta})^{\frac12-s}}}_{\gS^6}\\
&\qquad \leq\frac{C}{\pscal{\eta}^{\frac32-6s}}\norm{V}_{L^6}
\end{align*}
which is integrable over $\eta$ for $s>0$ small enough. The argument is the same for the other term of order $\kappa^2$. 
On the other hand, for the first term in~\eqref{eq:expand_1st_order_kappa}, we use that
$$\int_{\mathbb{R}} \frac{1}{(D_0+ i\eta)^2}\,d\eta = 0$$
and insert one commutator, which yields
\begin{multline*}
\norm{\int_{\mathbb{R}} |\Dk|^{\frac12} \frac{1}{D_0+ i\eta} V  \frac{1}{D_0+ i\eta}|\Dk|^{-\frac12}d\eta}_{\gS^2}\\
=\norm{\int_{\mathbb{R}} |\Dk|^{\frac12} \frac{1}{(D_0 + i\eta)^2} \alp\cdot \nabla V  \frac{1}{D_0 + i\eta}|\Dk|^{-\frac12}d\eta}_{\gS^2}\leq C\norm{\nabla V}_{L^2}.
\end{multline*}

It remains to estimate the second term in~\eqref{eq:expand_1st_order_kappa}
$$\int_{\mathbb{R}} |\Dk|^{\frac12} \frac{1}{D_0+ i\eta}\frac{1}{|x|} \frac{1}{D_0 + i\eta} V  \frac{1}{D_0 + i\eta}|\Dk|^{-\frac12}d\eta .$$
This is the most difficult since $V\in L^6$ and $1/|x|$ only yields an operator in $\gS^{3}_w$, by Cwikel's inequality. The idea here is to split
$$\frac{1}{|x|}=\frac{\chi(x)}{|x|}+\frac{1-\chi(x)}{|x|}$$
where $\chi\in C^\ii_c$ is equal to 1 in a neighborhood of the origin. The term involving $\chi/|x|$ is easily handled using that $\chi/|x|\in L^{p}$ for all $p<3$, hence 
$$\frac{1}{|D_0|^{\frac12}}\frac{\chi}{|x|}\frac{1}{|D_0|^{\frac12+s}}\in \gS^p\subset \gS^3$$
for $s>0$. We can then write
\begin{multline*}
\int_\R\norm{|\Dk|^{\frac12} \frac{1}{D_0+ i\eta}\frac{\chi}{|x|} \frac{1}{D_0 + i\eta} V  \frac{1}{D_0 + i\eta}|\Dk|^{-\frac12}}_{\gS^2} \,d\eta\\
\leq \int_\R\frac{C}{\pscal{\eta}}\norm{\frac{1}{|D_0|^{\frac12}}\frac{\chi}{|x|} \frac{1}{|D_0|^{\frac12+s}}}_{\gS^3}\norm{\frac{1}{(|D_0|+\pscal{\eta})^{\frac12-s}}V\frac{1}{|D_0|^{\frac12}}}_{\gS^6}\,d\eta\leq C\norm{V}_{L^6}.
\end{multline*}

The term with $(1-\chi)/|x|$ is treated exactly as we did before for the quadratic term in $V$. Namely, we write
\begin{multline*}
|\Dk|^{\frac12}\int_{\mathbb{R}}  \frac{1}{D_0+ i\eta}\frac{1-\chi}{|x|} \frac{1}{D_0 + i\eta} V  \frac{1}{D_0 + i\eta}d\eta|\Dk|^{-\frac12}\\
=-i|\Dk|^{\frac12}\int_\R\frac{1}{D_0+i\eta}\frac{1-\chi}{|x|}\frac{1}{(D_0+i\eta)^2}\alp\cdot \nabla V\frac{1}{D_0+i\eta}\,d\eta|\Dk|^{-\frac12}\\
+i|\Dk|^{\frac12}\int_\R \frac{1}{D_0+i\eta}\alp\cdot \nabla \left(\frac{1-\chi}{|x|}\right)\frac{1}{(D_0+i\eta)^3} V\,d\eta|\Dk|^{-\frac12}
\end{multline*}
Now it suffices to use that $(1-\chi)/|x|$ is bounded and that its gradient is in $L^3$ to conclude. 

Our final estimate takes the form
$$C(1+\norm{\nabla V}_{L^3})\norm{V}_{L^6}+\norm{\nabla V}_{L^2}$$ 
for $\norm{V}_{L^6}$ small enough. We obtain the stated inequality~\eqref{eq:estim_proj_S2}. 

Finally, the last inequality~\eqref{eq:estim_proj_Sp} follows by complex interpolation. This concludes the proof of Proposition~\ref{prop:difference_projections}.
\end{proof}

%%%%%%%%%%%%%%%%%%%%%%%%%%%%%%%%%%%%%%%%%%%%%%%%%
%%%%%%%%%%%%%%%%%%%%%%%%%%%%%%%%%%%%%%%%%%%%%%%%%
\section{Proof of Theorem~\ref{thm:Scott}} 
%%%%%%%%%%%%%%%%%%%%%%%%%%%%%%%%%%%%%%%%%%%%%%%%%
%%%%%%%%%%%%%%%%%%%%%%%%%%%%%%%%%%%%%%%%%%%%%%%%%

In the whole argument we fix $0<\kappa<1$ and $0<\nu<\nu_0$ and we assume that $\alpha$ is small enough. Only when required we will impose $\kappa=\nu$. In order to simplify our writing we change notation and denote by
$$D_{\kappa,\gamma}:=D_0-\frac{\kappa}{|x|}+\alpha\rho_\gamma\ast\frac{1}{|x|}$$
the mean-field operator and by
$$V_\gamma:=\alpha\rho_\gamma\ast\frac1{|x|}$$
the corresponding mean-field operator. We recall that
$$P_{\kappa,\gamma}^{\pm}=\1_{\R_\pm}(D_{\kappa,\gamma})$$
are the associated spectral projection. Finally, we denote by
$$D(f,f):=\iint_{\R^3\times\R^3}\frac{f(x)f(y)}{|x-y|}\,dx\,dy=\frac1{4\pi}\int_{\R^3}\frac{|\widehat{f}(k)|^2}{|k|^2}\,dk$$
the Coulomb energy.

\subsection{Lower bound}

In this section we prove the following result.

\begin{proposition}[Lower bound in terms of the Dirac-Coulomb projected Dirac-Fock]\label{prop:lower_bound}
Let $0<\kappa<1$ and $0<\nu<\nu_0$. Then we have for a constant $C$ depending on $\kappa$ and $\nu$
\begin{equation}
E^{\rm DF}(\kappa,\alpha,N)\geq \inf_{\substack{0\leq\gamma\leq 1\\ P_\kappa^-\gamma=0\\ \tr(\gamma)\leq N}}\left\{\tr(D_\kappa-1)\gamma+\frac{\alpha}{2}D(\rho_\gamma,\rho_\gamma)\right\}-\frac{C}{N^{\frac1{15}}}
\end{equation}
for all $\alpha N\leq \nu$ and $\alpha$ small enough. 
\end{proposition}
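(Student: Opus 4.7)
The plan is to take any admissible $\gamma$ in the nonlinear minimization (so $0\leq\gamma\leq P^+_{\kappa,\gamma}$ and $\tr\gamma\leq N$), construct the test state $\gamma':=P^+_\kappa\gamma P^+_\kappa$, which is admissible for the linearly-constrained problem (since $P^-_\kappa\gamma'=0$, $0\leq\gamma'\leq P^+_\kappa$, $\tr\gamma'\leq\tr\gamma\leq N$), and establish the energy comparison $\cE^{\rm DF}_{\kappa,\alpha}(\gamma')\leq\cE^{\rm DF}_{\kappa,\alpha}(\gamma)+CN^{-1/15}$. Taking the infimum over $\gamma$ then yields the proposition. One may restrict to near-minimizers, i.e.\ $\cE^{\rm DF}_{\kappa,\alpha}(\gamma)\leq C\alpha^2N^{7/3}$, since otherwise the inequality is trivial.

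The key quantitative input is that the self-consistent potential $V:=\alpha\rho_\gamma\ast|x|^{-1}$ is small in the Scott regime. Lemma~\ref{lem:coercive} gives $\tr\sqrt{-\Delta}\gamma\lesssim N$, and combining the constraint $\gamma\leq P^+_{\kappa,\gamma}$ with Theorem~\ref{thm:Dirac_gap} and the self-consistent bound $D_{\kappa,\gamma}\gamma\leq\mu\gamma$ ($\mu\leq 1$) also yields $\tr(-\Delta)\gamma\lesssim N$. The Lieb--Thirring inequality then gives $\int\rho_\gamma^{5/3}\lesssim N$, and interpolating with $\int\rho_\gamma\leq N$ one finds $\|V\|_{L^6}\lesssim N^{-1/6}$, $\|\nabla V\|_{L^2}^2\lesssim N^{-1/3}$ and $\|\nabla V\|_{L^3}\lesssim N^{-1/3}$, all tending to zero. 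The hypotheses of Proposition~\ref{prop:difference_projections} are therefore satisfied, yielding decaying Schatten bounds on $P^+_{\kappa,\gamma}-P^+_\kappa$ in every $\gS^p$ for $p\in[2,6]$.

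Because $D_\kappa$ commutes with $P^\pm_\kappa$, the off-diagonal kinetic contributions vanish and one computes
\[
\cE^{\rm DF}_{\kappa,\alpha}(\gamma')-\cE^{\rm DF}_{\kappa,\alpha}(\gamma)=-\tr(D_\kappa-1)P^-_\kappa\gamma P^-_\kappa+\tfrac{\alpha}{2}\bigl(D(\rho_{\gamma'},\rho_{\gamma'})-D(\rho_\gamma,\rho_\gamma)\bigr).
\]
The kinetic piece is non-positive; its magnitude is controlled by rewriting $P^-_\kappa\gamma^{1/2}=P^-_\kappa(P^+_{\kappa,\gamma}-P^+_\kappa)\gamma^{1/2}$ using $\gamma P^+_{\kappa,\gamma}=\gamma$, then applying H\"older in Schatten spaces between the interpolated $\gS^p$ bound of Proposition~\ref{prop:difference_projections} and $\||D_\kappa|^a\gamma^{1/2}\|_{\gS^q}$ with $1/p+1/q=1/2$, where the second factor is controlled by Araki--Lieb--Thirring together with the kinetic estimates above.

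The principal obstacle is the direct-interaction difference $\alpha D(\rho_\gamma-\rho_{\gamma'},\rho_\gamma+\rho_{\gamma'})$. Cauchy--Schwarz reduces it to the product of $\alpha\sqrt{D(\rho_\gamma+\rho_{\gamma'},\rho_\gamma+\rho_{\gamma'})}\lesssim N^{-1/6}$ and $\alpha\sqrt{D(\rho_\gamma-\rho_{\gamma'},\rho_\gamma-\rho_{\gamma'})}$; since the sum factor is only mildly small, the $\dot H^{-1}$-norm of $\rho_\gamma-\rho_{\gamma'}$ must be bounded very precisely through the operator structure of $M:=\gamma-P^+_\kappa\gamma P^+_\kappa$, by estimating its associated density through Schatten norms and commutator identities, and further invoking the $\gS^p$ estimates of Proposition~\ref{prop:difference_projections}. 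Balancing the interpolation exponent against the kinetic contribution leads, after optimization, to the stated rate $N^{-1/15}$.
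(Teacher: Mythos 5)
Your overall strategy is the same as the paper's: for a near-optimal $\gamma$ set $\gamma':=P_\kappa^+\gamma P_\kappa^+$, observe it is admissible for the projected problem, and compare the two energies using the decomposition in terms of the differences $P_{\kappa,\gamma}^+-P_\kappa^+$ and the Schatten estimates of Proposition~\ref{prop:difference_projections}. However, there are three concrete issues, the second of which is a genuine gap.

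First, a sign error: you assert that the kinetic piece $-\tr(D_\kappa-1)P_\kappa^-\gamma P_\kappa^-$ is non-positive. In fact $D_\kappa\leq -1$ on $\operatorname{ran}P_\kappa^-$ (the discrete eigenvalues in the gap all lie in $(0,1)$), so $\tr(D_\kappa-1)P_\kappa^-\gamma P_\kappa^-\leq 0$ and the term you need to control is \emph{non-negative}; it cannot be dropped and must be estimated, as you indeed go on to do.

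Second, and more importantly, your proof is missing the bootstrap step that is essential to the $N^{-1/15}$ rate. Your listed estimates $\|V\|_{L^6}\lesssim N^{-1/6}$, $\|\nabla V\|_{L^2}^2\lesssim N^{-1/3}$, i.e.\ $\alpha D(\rho_\gamma,\rho_\gamma)\lesssim N^{2/3}$, are exactly the a priori bounds that follow from coercivity (Lemma~\ref{lem:coercive}) and $\cE^{\rm DF}(\gamma)\leq 0$. Plugged into the error bound $|\tr Q D_\kappa Q\gamma|\lesssim N^{3/5}\alpha^2 D(\rho_\gamma,\rho_\gamma)$ obtained from Proposition~\ref{prop:difference_projections} (here $Q=P_{\kappa,\gamma}^+-P_\kappa^+$), this gives $\lesssim N^{3/5}\cdot N^{-1}\cdot N^{2/3}=N^{4/15}$, which grows rather than decays. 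The paper first proves an intermediate lower bound of the form $\cE^{\rm DF}(\gamma)\geq\tr P_\kappa^+(D_\kappa-1)P_\kappa^+\gamma+\tfrac{\alpha}{2}(1-C\alpha N^{3/5})D(\rho_\gamma,\rho_\gamma)\geq -CN^{1/3}+\tfrac{\alpha}{4}D(\rho_\gamma,\rho_\gamma)$ using the explicit $N^{1/3}$ behavior of $\sum_{n\leq N}\lambda_n(D_\kappa)$, so that $\cE^{\rm DF}(\gamma)\leq 0$ then upgrades to $\alpha D(\rho_\gamma,\rho_\gamma)\leq CN^{1/3}$, which is what turns the error into $N^{3/5}\cdot N^{-1}\cdot N^{1/3}=N^{-1/15}$. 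Your restriction to ``near-minimizers'' with $\cE^{\rm DF}(\gamma)\leq C\alpha^2 N^{7/3}$ is only an upper bound on the energy and does not by itself imply the improved bound on the Coulomb term; and your closing sentence (``Balancing the interpolation exponent\dots leads\dots to the stated rate $N^{-1/15}$'') does not supply this missing ingredient. Some version of the bootstrap must be made explicit.

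Third, you invoke the self-consistent bound $D_{\kappa,\gamma}\gamma\leq\mu\gamma$ with $\mu\leq 1$ to derive $\tr(-\Delta)\gamma\lesssim N$. This inequality only holds for solutions of the equation \eqref{eq:DF_SCF}, while the infimum in \eqref{eq:DF_min_energy} runs over all $\gamma$ with $P_{\kappa,\gamma}^-\gamma=0$ and $\tr\gamma\leq N$, for which only $\tr\sqrt{-\Delta}\gamma\lesssim N$ is available from coercivity. The paper's proof is phrased entirely in terms of $\tr\sqrt{-\Delta}\gamma$, and correspondingly obtains only $\|\nabla V\|_{L^3}=O(1)$ (not $O(N^{-1/3})$), which is enough for Proposition~\ref{prop:difference_projections}; the stronger claim you make is neither needed nor justified under the stated hypotheses.
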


\begin{proof}
From Lemma~\ref{lem:coercive} we have 
$$0\geq E^{\rm DF}(\kappa,\alpha,N)\geq -CN.$$ 
Note that $E^{\rm DF}(\kappa,\alpha,N)\leq0$, since one can take $\gamma=0$ in the variational principle~\eqref{eq:DF_energy}. 
We use a kind of boot-strap argument, showing first a lower bound of the order $-CN^{1/3}$ before getting lower order errors.

\subsubsection*{$\bullet$ Proof that $E^{\rm DF}(\kappa,\alpha,N)\geq -CN^{1/3}$}
Let $\gamma_N$ be an approximate minimizer for $E^{\rm DF}(\kappa,\alpha,N)$. Then by~\eqref{eq:coercive}
$$\tr\sqrt{-\Delta}\gamma_N\leq CN.$$
By~\eqref{eq:estim_Direct} and the Lieb-Thirring inequality~\eqref{eq:Lieb-Thirring}, we have for all density matrices $\gamma$
\begin{equation}
  \iint_{\R^3\times\R^3}\frac{\rho_\gamma(x)\rho_\gamma(y)}{|x-y|}\,dx\,dy\leq C\big(\tr\gamma\big)^{2/3}\tr(\sqrt{-\Delta}\gamma).
 \label{eq:estim_Direct_kinetic}
\end{equation}
In particular, the direct term in the Dirac-Fock energy satisfies
\begin{equation}
 \alpha\iint_{\R^3\times\R^3}\frac{\rho_{\gamma_N}(x)\rho_{\gamma_N}(y)}{|x-y|}\,dx\,dy\leq CN^{2/3}.
 \label{eq:estim_Direct_N23}
\end{equation}
Going back to the Dirac-Fock energy and using that $E^{\rm DF}(\kappa,\alpha,N)\leq0$, we find 
$$\tr(D_{\kappa,\gamma_N}-1)\gamma_N=\tr(|D_{\kappa,\gamma_N}|-1)\gamma_N\leq CN^{2/3}.$$
Now we replace $D_{\kappa,\gamma_N}$  by $D_\kappa$. We have 
\begin{align}
\tr D_{\kappa,\gamma_N}\gamma_N&=\tr P_{\kappa,\gamma_N}^+(D_{\kappa}+V_{\gamma_N}) P_{\kappa,\gamma_N}^+\gamma_N\nn\\
&=\tr P_{\kappa}^+D_{\kappa}P_{\kappa}^+\gamma_N+\alpha D(\rho_{\gamma_N},\rho_{\gamma_N})\nn\\
&\qquad+\tr (P_{\kappa,\gamma_N}^+-P_\kappa^+)D_{\kappa}(P_{\kappa,\gamma_N}^+-P_\kappa^+)\gamma_N\label{eq:decomp_kinetic}
\end{align}
since
$$\tr P_\kappa^+D_{\kappa}(P_{\kappa,\gamma_N}^+-P_\kappa^+)\gamma_N=-\tr P_\kappa^+D_{\kappa}(P_{\kappa,\gamma_N}^--P_\kappa^-)\gamma_N=0.$$
Note that by~\eqref{eq:estim_Direct_N23}
$$\norm{\alpha\rho_{\gamma_N}\ast\frac1{|x|}}_{L^6}\leq C\alpha\, D(\rho_{\gamma_N},\rho_{\gamma_N})^{\frac12}\leq C\sqrt{\alpha}N^{\frac13}\leq C\frac{\sqrt{\nu}}{N^{\frac16}}\to0.$$
Hence we may apply Proposition~\ref{prop:difference_projections}. In addition, we have
$$\norm{\alpha\nabla \left(\rho_{\gamma_N}\ast\frac1{|x|}\right)}_{L^3}\leq \alpha\norm{\rho_{\gamma_N}\ast\frac1{|x|^2}}_{L^3}\leq C\alpha\norm{\rho_{\gamma_N}}_{L^{3/2}}$$
by the Hardy-Littlewood-Sobolev inequality. Recall the Hoffmann-Ostenhof inequality
\begin{equation}
\tr\sqrt{-\Delta}\gamma\geq \pscal{\sqrt{\rho_\gamma},\sqrt{-\Delta}\sqrt{\rho_\gamma}}
\label{eq:Hoff_Ost}
\end{equation}
which follows from the convexity of fractional gradients~\cite[Thm.~7.13]{LieLos-01}. Using the Sobolev inequality and~\eqref{eq:Hoff_Ost}, we obtain
$$\norm{\rho_{\gamma_N}}_{L^{3/2}}\leq C\pscal{\sqrt{\rho_{\gamma_N}},\sqrt{-\Delta}\sqrt{\rho_{\gamma_N}}}\leq C\tr\sqrt{-\Delta}\gamma_N\leq CN.$$
Hence 
$$\norm{\alpha\nabla \left(\rho_{\gamma_N}\ast\frac1{|x|}\right)}_{L^3}\leq C\alpha N\leq C\nu$$
is uniformly bounded. By H\"older's inequality in Schatten spaces and Proposition~\ref{prop:difference_projections} with $3/10>1/8$ we can now bound
\begin{align}
&\bigg|\tr (P_{\kappa,\gamma_N}^+-P_\kappa^+)D_{\kappa}(P_{\kappa,\gamma_N}^+-P_\kappa^+)\gamma_N\bigg|\nn\\
&\qquad\qquad\leq \norm{|D_\kappa|^{\frac12}(P_{\kappa,\gamma_N}^+-P_\kappa^+)|D_\kappa|^{-\frac3{10}}}^2_{\gS^5}\norm{|D_\kappa|^{\frac{3}{10}}\gamma_N|D_\kappa|^{\frac{3}{10}}}_{\gS^{5/3}}\nn\\
&\qquad\qquad\leq CN^{\frac35} \alpha^2 D(\rho_{\gamma_N},\rho_{\gamma_N}).\label{eq:estim_error_proj}
\end{align}
We have used here that 
\begin{align*}
\norm{|D_\kappa|^{\frac{3}{10}}\gamma_N|D_\kappa|^{\frac{3}{10}}}_{\gS^{5/3}}^{\frac53}&=\tr\Big(|D_\kappa|^{\frac{3}{10}}\gamma_N|D_\kappa|^{\frac{3}{10}}\Big)^{\frac53}\\
&\leq \tr|D_\kappa|^{\frac12}\gamma_N^{\frac53}|D_\kappa|^{\frac12}\leq \tr|D_\kappa|^{\frac12}\gamma_N|D_\kappa|^{\frac12}\leq CN 
\end{align*}
since $0\leq\gamma_N\leq1$, and by the Araki-Lieb-Thirring inequality~\cite{LieThi-76,LieSei-09}. The same argument as for~\eqref{eq:estim_error_proj} implies also that
$$\bigg|\tr (P_{\kappa,\gamma_N}^+-P_\kappa^+)(P_{\kappa,\gamma_N}^+-P_\kappa^+)\gamma_N\bigg|\leq CN^{\frac35} \alpha^2 D(\rho_{\gamma_N},\rho_{\gamma_N}).$$
As a conclusion we have proved the lower bound
\begin{equation}
\cE^{\rm DF}_{\kappa,\alpha}(\gamma_N)\geq \tr P_{\kappa}^+(D_{\kappa}-1)P_{\kappa}^+\gamma_N+\frac\alpha2\left(1-C\alpha N^{\frac35}\right) D(\rho_{\gamma_N},\rho_{\gamma_N})
\label{eq:first_lower_bound}
\end{equation}
where $C$ depends on $\nu$. 
Noticing that $\tr P_{\kappa}^+\gamma_N P_{\kappa}^+\leq \tr \gamma_N\leq N$, we conclude that 
$$E^{\rm DF}(\kappa,\alpha,N)\geq \min_{\substack{0\leq\gamma\leq 1\\ P_\kappa^-\gamma=0\\ \tr(\gamma)\leq N}}\tr(D_\kappa-1)\gamma=\sum_{n=1}^N\left(\lambda_n(D_\kappa)-1\right).$$ 
The sum of the $N$ first eigenvalues of the Dirac-Coulomb operator on the right is explicit, since those eigenvalues are known analytically. It behaves like $N^{1/3}$. Hence we have proved, as we wanted, that 
$$E^{\rm DF}(\kappa,\alpha,N)\geq-CN^{\frac13},$$
that 
$$\tr(|D_\kappa|-1)P_\kappa^+\gamma_N P_\kappa^+\leq CN^{\frac13}$$
and that 
\begin{equation}
\alpha D(\rho_{\gamma_N},\rho_{\gamma_N})\leq CN^{\frac13}.
\label{eq:estim_final_direct_term}
\end{equation}

Inserting~\eqref{eq:estim_final_direct_term} in the error term in~\eqref{eq:estim_error_proj} we find
\begin{equation}
\bigg|\tr (P_{\kappa,\gamma_N}^+-P_\kappa^+)^2\gamma_N\bigg|+\bigg|\tr (P_{\kappa,\gamma_N}^+-P_\kappa^+)D_{\kappa}(P_{\kappa,\gamma_N}^+-P_\kappa^+)\gamma_N\bigg|\leq \frac{C}{N^{\frac1{15}}}.
\label{eq:final_estim_diff_projections}
\end{equation}
This term can therefore be neglected in the expansion of the energy up to the order $O(N^{-1/15})$. 

We now prove that 
$$\alpha D(\rho_{\gamma_N},\rho_{\gamma_N})=\alpha D(\rho_{P_\kappa^+\gamma_NP_\kappa^+},\rho_{P_\kappa^+\gamma_NP_\kappa^+})+O(N^{-1/6}).$$
To simplify our argument we introduce the densities
$$\rho_N^{\sigma,\sigma'}:=\rho_{P_\kappa^\sigma\gamma_N P_\kappa^{\sigma'}},\qquad \sigma,\sigma'\in\{\pm\}.$$
We then write
$$D(\rho_{\gamma_N},\rho_{\gamma_N})=D(\rho_N^{++},\rho_N^{++})+2D(\rho_{\gamma_N},r_N)-D(r_N,r_N)$$
with $r_N:=\rho_N^{+-}+\rho_N^{-+}+\rho_N^{--}$. We claim that 
\begin{equation}
D(r_N,r_N)\leq CN^{\frac13},
\label{eq:claim_estim_r_N}
\end{equation}
the proof of which is given below. Using that $\alpha D(\rho_{\gamma_N},\rho_{\gamma_N})=O(N^{1/3})$, we deduce from the Cauchy-Schwarz inequality for the scalar product $D(\cdot,\cdot)$ that 
$$\bigg|\alpha D(\rho_{\gamma_N},\rho_{\gamma_N})-\alpha D(\rho_N^{++},\rho_N^{++})\bigg|\leq \frac{C}{N^{\frac16}}.$$

In order to prove~\eqref{eq:claim_estim_r_N}, we show that $\|r_N\|_{L^{6/5}(\R^3)}=O(N^{1/6})$ by duality. Let $F$ be any function in $L^6(\R^3)$. Then we have
\begin{align*}
\left|\int_{\R^3}F\rho_N^{-+}\right|&=\Big|\tr \big(FP_\kappa^- \gamma_N P_\kappa^+\big)\Big|\\
&\leq \norm{F|D_0|^{-\frac12}}\norm{|D_0|^{\frac12}(P_\kappa^--P_{\kappa,\gamma}^-)|D_0|^{-\frac12}}_{\gS^2}\norm{|D_0|^{\frac12} \gamma_N}_{\gS^2}. \end{align*}
From the Hardy-Littlewood-Sobolev inequality, we have
$$\norm{F|D_0|^{-\frac12}}\leq C\norm{F}_{L^6(\R^3)}.$$
On the other hand
$$\norm{|D_0|^{\frac12} \gamma_N}_{\gS^2}^2=\tr|D_0|^{\frac12}\gamma_N^2|D_0|^{\frac12}\leq \tr|D_0|^{\frac12}\gamma_N|D_0|^{\frac12}\leq CN$$
and
$$\norm{|D_0|^{\frac12}(P_\kappa^--P_{\kappa,\gamma}^-)|D_0|^{-\frac12}}_{\gS^2}\leq C\alpha D(\rho_{\gamma_N},\rho_{\gamma_N})^{\frac12}\leq \frac{C}{N^{\frac13}}$$
by Proposition~\ref{prop:difference_projections} and~\eqref{eq:estim_final_direct_term}. This gives
$$\left|\int_{\R^3}F\rho_N^{-+}\right|\leq CN^{\frac16}\norm{F}_{L^6(\R^3)}$$
and by duality we conclude that
$$\norm{\rho_N^{-+}}_{L^{6/5}(\R^3)}\leq CN^{\frac16}.$$
The argument is the same for $\rho_N^{-+}$ and $\rho_N^{--}$, which leads to~\eqref{eq:claim_estim_r_N}, by the Hardy-Littlewood-Sobolev inequality.

As a conclusion we have shown the desired lower bound
\begin{equation}
E^{\rm DF}(\kappa,\nu,\alpha)\geq \inf_{\substack{0\leq\gamma\leq 1\\ P_\kappa^-\gamma=0\\ \alpha\tr(\gamma)\leq \nu}}\left\{\tr(D_\kappa-1)\gamma+\frac{\alpha}{2}D(\rho_\gamma,\rho_\gamma)\right\}-\frac{C}{N^{\frac1{15}}}
\end{equation}
in terms of the reduced Dirac-Fock problem projected to the positive spectral subspace of the Dirac-Coulomb operator $D_\kappa$. This concludes the proof of Proposition~\ref{prop:lower_bound}.
\end{proof}

\subsection{Upper bound}
In this section we prove the following result.

\begin{proposition}\label{prop:upper_bound}
Let $0<\kappa<1$ and $0<\nu<\nu_0$ and let $d_N$ a sequence of self-adjoint operators such that 
$$0\leq d_N \leq 1, \quad \tr\sqrt{-\Delta}\, d_N \leq C N, \quad \tr d_N \leq N$$ 
and 
$$P_{\kappa}^- d_N = 0.$$ 
Then there is a sequence $\gamma_N$ of self-adjoint operators satisfying the nonlinear constraint 
$$P_{\kappa,\gamma_N}^- \gamma_N = 0$$
such that 
$$0\leq \gamma_N \leq 1, \quad \tr\sqrt{-\Delta}\,\gamma_N \leq C N, \quad \tr \gamma_N \leq N$$
and
\begin{equation*}
\cE^{\rm DF}_{\kappa,\alpha}(\gamma_N) = \cE^{\rm DF}_{\kappa,\alpha}(d_N) + \mathcal{O}(N^{-1/15}) .
\end{equation*}
\end{proposition}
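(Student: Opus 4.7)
The plan is to construct $\gamma_N$ from $d_N$ by S\'er\'e's projection iteration~\cite{Sere-09}. Set $\gamma_0:=d_N$ and recursively
$$\gamma_{n+1}:=P^+_{\kappa,\gamma_n}\,\gamma_n\,P^+_{\kappa,\gamma_n},\qquad P^+_{\kappa,\gamma_n}:=\1(D_{\kappa,\gamma_n}\geq0),$$
then take $\gamma_N:=\lim_n\gamma_n$. Automatically $0\leq\gamma_n\leq1$ and $\tr\gamma_n\leq N$. The kinetic bound $\tr\sqrt{-\Delta}\gamma_n\leq CN$ is propagated since $\gamma_n$ lies in the positive spectral subspace of $D_{\kappa,\gamma_{n-1}}$, so $\tr|D_{\kappa,\gamma_{n-1}}|\gamma_n\leq\tr|D_{\kappa,\gamma_{n-1}}|\gamma_{n-1}$, combined with Theorem~\ref{thm:Dirac_gap} to return to $\sqrt{-\Delta}$. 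Continuity of the spectral projection under $L^6$-small perturbations of the potential (a direct consequence of Proposition~\ref{prop:difference_projections}) guarantees that the limit satisfies $P^-_{\kappa,\gamma_N}\gamma_N=0$.

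The quantitative engine is again Proposition~\ref{prop:difference_projections}. The uniform bounds $\tr\gamma_n\leq N$ and $\tr\sqrt{-\Delta}\gamma_n\leq CN$ give, via Lieb-Thirring and Hardy-Littlewood-Sobolev exactly as in the proof of Lemma~\ref{lem:coercive}, $\alpha D(\rho_{\gamma_n},\rho_{\gamma_n})=O(N^{1/3})$, so
$$\|\nabla V_{\gamma_n}\|_{L^2}^2=4\pi\alpha^2 D(\rho_{\gamma_n},\rho_{\gamma_n})=O(N^{-1/3}),\qquad \|\nabla V_{\gamma_n}\|_{L^3}\leq C\alpha\|\rho_{\gamma_n}\|_{L^{3/2}}=O(1),$$
where the $L^{3/2}$ control uses the Hoffmann-Ostenhof inequality~\eqref{eq:Hoff_Ost}. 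Proposition~\ref{prop:difference_projections} applied both to $V_{\gamma_n}$ and to the increments $V_{\gamma_n}-V_{\gamma_{n-1}}$ then provides the smallness of $P^+_{\kappa,\gamma_n}-P^+_{\kappa}$ and of $P^+_{\kappa,\gamma_{n+1}}-P^+_{\kappa,\gamma_n}$ in its weighted Schatten norms.

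For the energy estimate, at each step one writes
$$\gamma_{n+1}-\gamma_n=-P^+_{\kappa,\gamma_n}\gamma_nP^-_{\kappa,\gamma_n}-P^-_{\kappa,\gamma_n}\gamma_nP^+_{\kappa,\gamma_n}-P^-_{\kappa,\gamma_n}\gamma_nP^-_{\kappa,\gamma_n},$$
and decomposes $P^-_{\kappa,\gamma_n}\gamma_n=(P^-_{\kappa,\gamma_n}-P^-_\kappa)\gamma_n+P^-_\kappa\gamma_n$. For $n=0$ the second summand vanishes by assumption on $d_N$, while for $n\geq1$ it equals $-P^-_\kappa(P^+_\kappa-P^+_{\kappa,\gamma_{n-1}})\gamma_{n-1}P^+_{\kappa,\gamma_{n-1}}$ and is again controlled by Proposition~\ref{prop:difference_projections}. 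Plugging the result into $\mathcal{E}^{\rm DF}_{\kappa,\alpha}(\gamma_{n+1})-\mathcal{E}^{\rm DF}_{\kappa,\alpha}(\gamma_n)$ and running the same H\"older-Schatten book\-keeping that produced~\eqref{eq:estim_error_proj} (for the kinetic part) and the duality argument giving $\|\rho_N^{+-}\|_{L^{6/5}}\leq CN^{1/6}$ (for the Coulomb part) reproduces the same arithmetic $N^{3/5}\cdot\alpha\cdot\alpha D(\rho,\rho)=N^{3/5-1+1/3}=N^{-1/15}$ that appeared in Proposition~\ref{prop:lower_bound}.

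The main obstacle is to make this iteration a genuine contraction in the Coulomb norm $\sqrt{D(\rho_\gamma-\rho_{\gamma'},\rho_\gamma-\rho_{\gamma'})}$, so that the telescoping sum of per-step energy variations remains $O(N^{-1/15})$. Applying Proposition~\ref{prop:difference_projections} to the increments $V_{\gamma_n}-V_{\gamma_{n-1}}$ bounds the Lipschitz constant of the map $\rho\mapsto\rho_{P^+_{\kappa,\rho}\gamma_0 P^+_{\kappa,\rho}}$ in this norm by a constant multiple of $\alpha\sqrt{D(\rho_{\gamma_n},\rho_{\gamma_n})}=O(N^{-1/6})$, strictly less than $1/2$ for $N$ large. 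Hence $\gamma_n$ is Cauchy at a geometric rate, the limit enjoys all the stated bounds, and $|\mathcal{E}^{\rm DF}_{\kappa,\alpha}(\gamma_N)-\mathcal{E}^{\rm DF}_{\kappa,\alpha}(d_N)|=O(N^{-1/15})$, as required.
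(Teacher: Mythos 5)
Your proposal uses the same core idea as the paper: iterate S\'er\'e's projection map $T(\gamma)=P^+_{\kappa,\gamma}\gamma P^+_{\kappa,\gamma}$ starting from $d_N$ and pass to the limit. However, the paper works in the weighted trace-class norm $\|\gamma\|_X=\|\,|D_\kappa|^{1/2}\gamma|D_\kappa|^{1/2}\,\|_{\mathfrak{S}^1}$ and invokes S\'er\'e's abstract retraction theorem with the auxiliary Lemmas~\ref{lem_holder_schatten} and~\ref{lem_diff_proj_s6}, whereas you propose to run a bare contraction argument in the Coulomb norm $\sqrt{D(\rho_\gamma-\rho_{\gamma'},\rho_\gamma-\rho_{\gamma'})}$. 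This change introduces genuine gaps.

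First, the claimed propagation of the kinetic bound $\tr\sqrt{-\Delta}\,\gamma_n\leq CN$ does not actually close. Your argument gives $\tr|D_{\kappa,\gamma_{n-1}}|\gamma_n\leq\tr|D_{\kappa,\gamma_{n-1}}|\gamma_{n-1}$, and then converting $|D_{\kappa,\gamma_{n-1}}|$ back to $\sqrt{-\Delta}$ via Theorem~\ref{thm:Dirac_gap} on one side and~\eqref{eq:upper_bd_D} on the other multiplies the constant by a fixed factor strictly larger than $1$ at every step, so the iterated bound blows up geometrically. The paper circumvents precisely this issue by building the combined constraint $\|\gamma\|_X+\lambda\|T(\gamma)-\gamma\|_X\leq MN$ into the set $F$ and checking $T(F)\subset F$ via the triangle inequality, not by propagating the $X$-norm alone.

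Second, the Coulomb norm does not control the kinetic part $\tr(D_\kappa-1)(\gamma_{n+1}-\gamma_n)$. You appeal to ``the same H\"older--Schatten bookkeeping that produced~\eqref{eq:estim_error_proj},'' but that estimate is quadratic in the projection difference only because the first-order term $\tr P_\kappa^+ D_\kappa(P^+_{\kappa,\gamma_N}-P^+_\kappa)\gamma_N$ vanishes, thanks to $[P^+_\kappa,D_\kappa]=0$ together with $P^-_{\kappa,\gamma_N}\gamma_N=0$ and $P^+_\kappa D_\kappa P^-_\kappa=0$. For the intermediate increments $\gamma_{n+1}-\gamma_n=(P^+_{\kappa,\gamma_n}-P^+_{\kappa,\gamma_{n-1}})\gamma_nP^+_{\kappa,\gamma_n}+P^+_{\kappa,\gamma_{n-1}}\gamma_n(P^+_{\kappa,\gamma_n}-P^+_{\kappa,\gamma_{n-1}})$, the reference projection is $P^+_{\kappa,\gamma_{n-1}}$, which does not commute with $D_\kappa$, so the first-order-in-projection-difference term does not cancel; it is of size $\alpha\|\gamma_n-\gamma_{n-1}\|_X\|\gamma_n\|_X^{5/6}$ and must be controlled through the $X$-norm contraction, as the paper does. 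This is exactly where Lemmas~\ref{lem_holder_schatten} and~\ref{lem_diff_proj_s6}, which you do not prove, enter. In addition, the stated Lipschitz constant $\alpha\sqrt{D(\rho_{\gamma_n},\rho_{\gamma_n})}$ is not what comes out of the duality argument (the natural bound is of the form $\alpha N^{1/2}$, which is small but for a different reason), and the map you write down, $\rho\mapsto\rho_{P^+_{\kappa,\rho}\gamma_0P^+_{\kappa,\rho}}$ with $\gamma_0$ frozen, is not the map whose iterates generate $\gamma_n=T^n(d_N)$, so the whole contraction claim rests on a misidentified operator.
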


\begin{proof}
We split the proof into several steps. 

\subsubsection*{$\bullet$ S\'er\'e's retraction $\theta$}
We will use the following result of S\'er\'e \cite{Sere-09}.
\begin{theorem}\label{theo_theta}
Let $(X,\|\cdot\|_X)$ be a Banach space and $\mathcal{U}$ an open subset of $X$. Let $T : \mathcal{U} \to X$ a continuous map. We assume :
\begin{enumerate}
\item
$\mathcal{U}$ has a nonempty subset $F$ which is closed in $X$ and such that $T(F) \subset F$ ;
\item
$\exists k\in(0,1), \, \forall x \in T^{-1}(\mathcal{U}), \, \|T^2(x) -  T(x)\|_X \leq k \|T(x)-x\|_X $.
\end{enumerate}
Then there exists an open neighborhood $\mathcal{V}$ of $F$ in $X$ with $\mathrm{Fix}(T) \subset \mathcal{V} \subset \mathcal{U} , T(\mathcal{V}) \subset \mathcal{V}$ and such that for any $x \in \mathcal{V}$ , the sequence $(T^p(x))_p$ has a limit $\theta(x) \in \mathcal{V}$ for the norm $\| \cdot \|_X$, with the estimate
\begin{equation} \label{ineq_est_theta}
\forall x\in\mathcal{V}, \|\theta (x) - T^p (x)\|_X \leq \frac{k^p}{1-k} \|T(x) - x\|_X .
\end{equation}
In this way we obtain a retraction $\theta$ of $\mathcal{V}$ onto $\mathrm{Fix}(T ) \subset \mathcal{V}$ whose restriction to $F$ is a retraction of $F$ onto $F \cap \mathrm{Fix}(T )$.
\end{theorem}
Here we have denoted by $\textrm{Fix}(T)$ the fixed points of $T$. To apply S\'er\'e's result we define, for  $\kappa \in (0,1)$,
\begin{equation*}
X = \left\{ \gamma \in \mathcal{B}(\mathcal{H}), \gamma^* = \gamma, |D_\kappa|^{1/2} \gamma |D_\kappa|^{1/2} \in \s_1 \right\},
\end{equation*}
and
\begin{align*}
T : \, &X \longrightarrow \mathcal{B}(\mathcal{H}) \\
& \gamma \mapsto P_{\kappa,\gamma}^+ \gamma P_{\kappa,\gamma}^+.
\end{align*}
Let us fix some $\nu < \nu_0$ and define
\begin{equation*}
F = \left\{ 0 \leq \gamma \leq 1, \|\gamma\|_X + \lambda \|T (\gamma) - \gamma\|_X \leq M N,\quad \alpha \tr\gamma \leq \nu \right\}
\end{equation*}
for some $\lambda, M > 0$ that we will choose later. For $r>0$, we define $\mathcal{U} = F + B_X(r)$. Note that $F\neq \emptyset$ since $0\in F$.

We will first check that the assumptions of Theorem~\ref{theo_theta} are satisfied in our regime, and then we will apply it to prove Proposition~\ref{prop:upper_bound}. 

\subsubsection*{$\bullet$ Verifying the stability (Assumption 1)}

Here, we assume that the retraction property holds for some $k$. Let us check that $T(F)\subset F$. That $T\gamma \in X$ is a consequence of Hardy's inequality and Theorem~\ref{thm:Dirac_gap}. By definition of $T$ we also have directly that $0\leq T(\gamma) \leq 1$ and $\alpha \tr T(\gamma) \leq \alpha \tr \gamma \leq \nu$. It remains to verify the norm condition in the definition of $F$.  Using the triangle inequality we obtain
\begin{align*}
\|T(\gamma)\|_X + \lambda \|T^2 (\gamma) - T(\gamma)\|_X \leq \|\gamma\|_X + (1+\lambda k)\|\gamma - T(\gamma)\|_X.
\end{align*}
Choosing $\lambda > 1/(1-k)$ in the above inequality implies $T\gamma \in F$.

\subsubsection*{$\bullet$ Verifying the retraction property (Assumption 2)}
Let $\gamma \in \mathcal{U}$, we have
\begin{align*}
T^2 \gamma &= P_{\kappa,T\gamma}^+ T \gamma P_{\kappa,T\gamma}^+ \\
&= P_{\kappa,\gamma}^+ T \gamma P_{\kappa,\gamma}^+ +\left( P_{\kappa,T\gamma}^+ - P_{\kappa,\gamma}^+\right) T \gamma P_{\kappa,T\gamma}^+ + P_{\kappa,\gamma}^+ T \gamma \left( P_{\kappa,T\gamma}^+ - P_{\kappa,\gamma}^+\right).
\end{align*}
Note that $\alpha \tr T (\gamma) \leq \alpha \tr (\gamma) < \nu$ so that $|D_{\kappa,T(\gamma)}|^{1/2}$, $|D_{\kappa,\gamma}|^{1/2}$ and $|D_{\kappa}|^{1/2}$ are comparable as a consequence of Hardy's inequality (\ref{eq:upper_bd_D}) and Theorem~\ref{thm:Dirac_gap}. 

At this step we need two technical lemmas whose proofs are postponed to the end of the argument.

\begin{lemma}[H\"older inequality in weighted Schatten space]\label{lem_holder_schatten}
Let $\gamma \in X$ such that $0\leq \gamma \leq 1$. Let $0 \leq a,b \leq 1/2$ and define $q$ by $1/q = a+b$, then
\begin{equation*}
\||D_\kappa|^a \gamma |D_\kappa|^b\|_{\s^q} \leq \|\gamma\|_{X}^{a+b}
\end{equation*}
\end{lemma}

\begin{lemma}[Estimate on differences of projections in $X$]\label{lem_diff_proj_s6}
Let $ 0< \nu < \nu_0$ and $0< \kappa < 1$, then for all $\gamma_1,\gamma_2 \in \{ \gamma \in X, \|\gamma\|_X \leq M, \; \alpha \tr(\gamma) \leq \nu \}$, we have
\begin{equation*}
\||D_\kappa|^{1/2}(P_{\kappa,\gamma_1}^+-P_{\kappa,\gamma_2}^+)|D_\kappa|^{-\varepsilon}\|_{\s^6} \leq C_{\nu,\kappa,\varepsilon}  \,\alpha \| \gamma_1 - \gamma_2\|_{X}.
\end{equation*}
\end{lemma}

Now, using that $P_{\kappa,\gamma}^+ T (\gamma) P_{\kappa,\gamma}^+ = T(\gamma)$, we obtain by H\"older's inequality and \cref{lem_diff_proj_s6}, that
\begin{align*}
&\|T^2 (\gamma) - T(\gamma)\|_{X} \\
&\qquad \leq C_{\kappa,\nu} \norm{D_\kappa|^{\frac12}(P_{\kappa,T(\gamma)}^+-P_{\kappa,\gamma}^+)|D_\kappa|^{-\frac13}}_{\s^6}  \norm{|D_\kappa|^{\frac13} T(\gamma) |D_\kappa|^{\frac12}}_{\s^{6/5}}  \\
&\qquad \leq C_{\kappa,\nu} \alpha \norm{|D_\kappa|^{\frac13} T(\gamma) |D_\kappa|^{\frac12}}_{\s^{6/5}} \|T (\gamma) - \gamma\|_X,
\end{align*}
where $C_{\kappa,\nu}$ is a constant depending only on $\kappa$ and $\nu$. Using that $\gamma \in \mathcal{U}$ and \cref{lem_holder_schatten} we obtain
\begin{equation*}
 \norm{|D_\kappa|^{\frac{1}{3}} T (\gamma) |D_\kappa|^{\frac{1}{2}}}_{\s^{6/5}} \leq \|\gamma\|_X^{5/6} \leq C_{\kappa,\nu} (MN + r)^{5/6}.
\end{equation*}
Hence
\begin{equation*}
\|T^2 (\gamma) - T(\gamma)\|_{X} \leq C_{\kappa,\nu} \alpha^{1/6} \left(\nu M + \alpha r \right) \|T (\gamma) - \gamma\|_X.
\end{equation*}
This shows that for any $M,r>0$ fixed, taking $N$ sufficiently large is enough for the retraction property to hold with a retraction factor
\begin{equation}
	\label{eq:estimate_k}
	k \leq C_{\kappa,\nu} \alpha^{1/6}.
\end{equation}

\subsubsection*{$\bullet$ Conclusion of the proof of Proposition~\ref{prop:upper_bound}}

The same proof as for the lower bound, starting from (\ref{eq:decomp_kinetic}), but with $\gamma_N$ replaced by $d_N$, shows that
\begin{equation*}
\cE^{\rm DF}_{\kappa,\alpha}(T(d_N)) = \cE^{\rm DF}_{\kappa,\alpha}(d_N) + \mathcal{O}(N^{-1/15}).
\end{equation*}
Note that this does not hold for any trial state as we use intensively that $P_\kappa^- d_N = P_{\kappa,d_N}^- T(d_N) =0 $. In fact, the same argument applied $n$ times leads to 
\begin{equation*}
\cE^{\rm DF}_{\kappa,\alpha}(T^n (d_N)) = \cE^{\rm DF}_{\kappa,\alpha}(d_N) + \mathcal{O}(N^{-1/15}).
\end{equation*}
We now use the retraction $\theta$, for which we have from Theorem~\ref{theo_theta} that
\begin{equation*}
\|\theta(d_N) - T^n (d_N)\|_X \leq \frac{k^n}{1-k} \|T(d_N) -d_N\|_X.
\end{equation*}
In view of (\ref{eq:estimate_k}) it is sufficient for our purpose to show 
$${\alpha^{n/6} \|Td_N - d_N\|_X = o(N^{-1/15})}$$ 
for a certain $n$. This is clearly the case for $n = 3$ although this is not optimal since it only uses that $\|T(d_N)\|_X + \|d_N\|_X \leq CN$. We therefore obtain
\begin{align*}
\cE^{\rm DF}_{\kappa,\alpha}(\theta(d_N)) &= \cE^{\rm DF}_{\kappa,\alpha}(T^3(d_N)) + \tr(D_\kappa -1)(\theta(d_N) - T^3(d_N)) \\
&\quad
 + \frac{\alpha}{2}\left(D(\rho_{T^3(d_N)}-\rho_{\theta(d_N)},\rho_{T^3(d_N)}) + D(\rho_{T^3(d_N)}-\rho_{\theta(d_N)},\rho_{\theta(d_N)})\right) \\
 &= \cE^{\rm DF}_{\kappa,\alpha}(T^3(d_N)) + \mathcal{O}(\|\theta(d_N) - T^3(d_N)\|_X)
\end{align*}
where we used that, 
\begin{equation*}
\alpha D(\rho_{T^3(d_N)}-\rho_{\theta(d_N)},\rho_{T^3(d_N)}) \leq C \alpha \|\theta(d_N) - T^3(d_N)\|_X \|T^3(d_N)\|_X
\end{equation*}
and that $\alpha \|T^3(d_N)\|_X = \mathcal{O}(1)$. The last error term is dealt with similarly. Finally, we obtain as we wanted
\begin{align*}
\cE^{\rm DF}_{\kappa,\alpha}(\theta (d_N)) = \cE^{\rm DF}_{\kappa,\alpha}(d_N) + \mathcal{O}(N^{-1/15}),
\end{align*}
which concludes the proof of Proposition~\ref{prop:upper_bound}.
\end{proof}

It remains to provide the 
\begin{proof}[Proof of Lemma~\ref{lem_holder_schatten}]
Define $p_a = 1/a$ and $p_b = 1/b$. By H\"older inequality we have
\begin{equation*}
\||D_\kappa|^a \gamma |D_\kappa|^b\|_{\s_q} \leq \||D_\kappa|^a \gamma^{1/2}\|_{\s_{p_a}} \| \gamma^{1/2}|D_\kappa|^b\|_{\s_{p_b}}.
\end{equation*}
We bound each of the factors above using the Araki-Lieb-Thirring inequality together with the fact that $0\leq \gamma \leq 1$. We have
\begin{align*}
\| |D_\kappa|^a \gamma^{1/2}\|_{\s_{p_a}} &= \left(\tr\left(|D_\kappa|^a \gamma |D_\kappa|^a\right)^{p_a/2}\right)^{1/p_a} \\
&\leq \left(\tr|D_\kappa|^{1/2} \gamma^{p_a/2} |D_\kappa|^{1/2} \right)^{1/p_a} \\
&\leq \left(\tr|D_\kappa|^{1/2} \gamma |D_\kappa|^{1/2} \right)^{1/p_a} \\
&\leq \|\gamma\|_{X}^a.
\end{align*}
The same proof holds for the other term and gives the desired result.
\end{proof}

\begin{proof}[Proof of Lemma~\ref{lem_diff_proj_s6}] 
We use Stone's formula and the resolvent identity to express the difference as
\begin{multline*}
|D_\kappa|^{1/2} \left(P_{\kappa,\gamma_1} - P_{\kappa,\gamma_2}\right) |D_\kappa|^{-\varepsilon} \\ =\frac1{2\pi}\int_{\mathbb{R}} |D_\kappa|^{1/2} \frac{1}{D_{\kappa,\gamma_1} + i\eta} \alpha \left((\rho_{\gamma_2} - \rho_{\gamma_1}) \ast \frac{1}{|x|} \right)\frac{1}{D_{\kappa,\gamma_2} + i\eta}|D_\kappa|^{-\varepsilon}d\eta.
\end{multline*}
The argument is now exactly the same as for~\eqref{eq:estim_proj_S6}.
\end{proof}

\subsection{Conclusion of the proof of Theorem~\ref{thm:Scott}} 

Up to this point we have not used that $\kappa=\nu$. A corollary of Handrek and Siedentop's article \cite{HanSie-15} is that the projected Dirac-Fock problem behaves as
\begin{multline}
\inf_{\substack{0\leq\gamma\leq 1\\ P_\kappa^-\gamma=0\\ \tr(\gamma)\leq N}}\left\{\tr(D_\kappa-1)\gamma+\frac{\alpha}{2}D(\rho_\gamma,\rho_\gamma)\right\}\\
 =e_{\rm TF}(1)\, \alpha^2 N^{7/3} + c_{\textrm{Scott}}(\kappa) + \mathcal{O}(\alpha^2 N^{47/24}),
 \label{eq:behavior_projected}
\end{multline}
for $\kappa=\nu$. We quickly describe the argument but refer to~\cite{HanSie-15} for details.

Let $\rho_{N}^{\textrm{TF}}$ be the minimizer of the Thomas-Fermi functional and define
\begin{equation*}
\chi(x) = \alpha^4 \int_{|x-y|>\alpha^{-1}R_Z(\alpha x)} \frac{\rho_{N}^{\textrm{TF}}(\alpha y)}{|x-y|} \, dy
\end{equation*}
where $R_Z(x)$ is such that
\begin{equation*}
\int_{|x-y|>R_Z(x)} \rho_{N}^{\textrm{TF}}(y) \dd y = \frac{1}{2}.
\end{equation*}
Using 
\begin{equation*}
\frac{1}{2}D(\rho_{\gamma},\rho_\gamma) \geq D(\rho_{N}^{\textrm{TF}},\rho_\gamma) - \frac{1}{2}D(\rho_{N}^{\textrm{TF}},\rho_{N}^{\textrm{TF}})
\end{equation*}
and that $\chi \leq \rho_{N}^{\textrm{TF}}\ast \frac{1}{|x|}$, one finds that 
\begin{align*}
\tr(D_\kappa-1)\gamma+\frac{\alpha}{2}D(\rho_\gamma,\rho_\gamma)&\geq \tr (D_\kappa + \chi -1)\gamma - \frac{1}{2}D(\rho_{N}^{\textrm{TF}},\rho_{N}^{\textrm{TF}})\\
&\geq \sum_{j=1}^N\lambda_j\left(P_\kappa^+(D_\kappa + \chi -1)P_\kappa^+\right)- \frac{1}{2}D(\rho_{N}^{\textrm{TF}},\rho_{N}^{\textrm{TF}}).
\end{align*}
It is proved in~\cite[Sec.~3]{HanSie-15} that the right hand side behaves as 
$$e_{\rm TF}(1)\, \alpha^2 N^{7/3} + c_{\textrm{Scott}}(\kappa) + \mathcal{O}(\alpha^2 N^{47/24})$$ 
and this gives the lower bound in~\eqref{eq:behavior_projected}.

To obtain the upper bound, the authors of~\cite{HanSie-15} construct a state $d_N$ which satisfies the assumptions of Proposition~\ref{prop:upper_bound} and is such that
\begin{equation}\label{ineq_trial_state}
\tr(D_\kappa -1) d_N + \frac{\alpha}2 D(\rho_{d_N},\rho_{d_N})  = e_{\rm TF}(1)\, \alpha^2 N^{7/3} + c_{\textrm{Scott}}(\kappa) + \mathcal{O}(\alpha^2 N^{47/24}).
\end{equation}
Hence the final result follows, for $\kappa=\nu$ from our lower bound in Proposition~\ref{prop:lower_bound} and from the construction of the trial state $\gamma_N$ from $d_N$ in Proposition~\ref{prop:upper_bound}. This concludes the proof of Theorem~\ref{thm:Scott}.\qed

\begin{remark}
Should the limit be proven for the Dirac-Fock projected energy with $P_\kappa^+$ for $\nu\neq\kappa$, our result would immediately apply to the unprojected Dirac-Fock theory, with the same value of $\kappa$ and $\nu$. 
\end{remark}

% \bibliographystyle{my-alpha}
% \bibliography{biblio}

\begin{thebibliography}{{{\O}s}05}

\bibitem[ADV13]{ArrDuoVeg-13}
{\sc N.~Arrizabalaga, J.~Duoandikoetxea, and L.~Vega}, {\em Self-adjoint
  extensions of {D}irac operators with {C}oulomb type singularity}, J. Math.
  Phys., 54 (2013), p.~041504.

\bibitem[Bac89]{Bach-89}
{\sc V.~Bach}, {\em A proof of {S}cott's conjecture for ions}, Rep. Math.
  Phys., 28 (1989), pp.~213--248.

\bibitem[Bac92]{Bach-92}
\leavevmode\vrule height 2pt depth -1.6pt width 23pt, {\em Error bound for the
  {H}artree-{F}ock energy of atoms and molecules}, Commun. Math. Phys., 147
  (1992), pp.~527--548.

\bibitem[Bac93]{Bach-93}
\leavevmode\vrule height 2pt depth -1.6pt width 23pt, {\em Accuracy of mean
  field approximations for atoms and molecules}, Comm. Math. Phys., 155 (1993),
  pp.~295--310.

\bibitem[BLS94]{BacLieSol-94}
{\sc V.~Bach, E.~H. Lieb, and J.~P. Solovej}, {\em Generalized {H}artree-{F}ock
  theory and the {H}ubbard model}, J. Statist. Phys., 76 (1994), pp.~3--89.

\bibitem[BNP{\etalchar{+}}18]{BenNamPorSchSei-18}
{\sc N.~{Benedikter}, P.~T. Nam, M.~{Porta}, B.~{Schlein}, and R.~{Seiringer}},
  {\em Optimal upper bound for the correlation energy of a {F}ermi gas in the
  mean-field regime}, ArXiv e-prints,  (2018), p.~arXiv:1809.01902.

\bibitem[BR51]{BroRav-51}
{\sc G.~E. Brown and D.~G. Ravenhall}, {\em On the interaction of two
  electrons}, Proc. Roy. Soc. London Ser. A., 208 (1951), pp.~552--559.

\bibitem[CPWS13]{CalPahWorSch-13}
{\sc F.~Calvo, E.~Pahl, M.~Wormit, and P.~Schwerdtfeger}, {\em Evidence for
  low-temperature melting of mercury owing to relativity}, Angewandte Chemie
  International Edition, 52 (2013), pp.~7583--7585.

\bibitem[CS06]{CasSie-06}
{\sc R.~Cassanas and H.~Siedentop}, {\em The ground-state energy of heavy atoms
  according to {B}rown and {R}avenhall: absence of relativistic effects in
  leading order}, J. Phys. A, 39 (2006), pp.~10405--10414.

\bibitem[Der12]{Derezinski-12}
{\sc J.~Derezi\`nski}, {\em Open problems about many-body {D}irac operators},
  Bulletin of International Association of Mathematical Physics,  (2012).

\bibitem[Des73]{Desclaux-73}
{\sc J.~Desclaux}, {\em Relativistic {D}irac-{F}ock expectation values for
  atoms with $z = 1$ to $z = 120$}, Atomic Data and Nuclear Data Tables, 12
  (1973), pp.~311 -- 406.

\bibitem[DES00]{DolEstSer-00}
{\sc J.~Dolbeault, M.~J. Esteban, and {\'E}.~S{\'e}r{\'e}}, {\em On the
  eigenvalues of operators with gaps. {A}pplication to {D}irac operators}, J.
  Funct. Anal., 174 (2000), pp.~208--226.

\bibitem[Dir30]{Dirac-28b}
{\sc P.~A.~M. Dirac}, {\em Note on exchange phenomena in the {T}homas atom},
  Proc. Camb. Philos. Soc., 26 (1930), pp.~376--385.

\bibitem[DO19]{DecOel-19}
{\sc D.-A. Deckert and M.~Oelker}, {\em Distinguished self-adjoint extension of
  the two-body {D}irac operator with {C}oulomb interaction}, Ann. Henri
  Poincar\'{e}, 20 (2019), pp.~2407--2445.

\bibitem[ELS08]{EstLewSer-08}
{\sc M.~J. Esteban, M.~Lewin, and {\'E}.~S{\'e}r{\'e}}, {\em Variational
  methods in relativistic quantum mechanics}, Bull. Amer. Math. Soc. (N.S.), 45
  (2008), pp.~535--593.

\bibitem[ELS19a]{EstLewSer-19b_ppt}
\leavevmode\vrule height 2pt depth -1.6pt width 23pt, {\em {D}irac eigenvalues
  with multi-center {C}oulomb potentials}.
\newblock In preparation, 2019.

\bibitem[ELS19b]{EstLewSer-19}
\leavevmode\vrule height 2pt depth -1.6pt width 23pt, {\em Domains for
  {D}irac-{C}oulomb min-max levels}, Rev. Mat. Iberoam., 35 (2019),
  pp.~877--924.

\bibitem[ES84a]{EngSch-84a}
{\sc B.-G. Englert and J.~Schwinger}, {\em New statistical atom: a numerical
  study}, Phys. Rev. A (3), 29 (1984), pp.~2353--2363.

\bibitem[ES84b]{EngSch-84c}
\leavevmode\vrule height 2pt depth -1.6pt width 23pt, {\em Statistical atom:
  handling the strongly bound electrons}, Phys. Rev. A (3), 29 (1984),
  pp.~2331--2338.

\bibitem[ES84c]{EngSch-84b}
\leavevmode\vrule height 2pt depth -1.6pt width 23pt, {\em Statistical atom:
  some quantum improvements}, Phys. Rev. A (3), 29 (1984), pp.~2339--2352.

\bibitem[ES99]{EstSer-99}
{\sc M.~J. Esteban and {\'E}.~S{\'e}r{\'e}}, {\em Solutions of the
  {D}irac-{F}ock equations for atoms and molecules}, Commun. Math. Phys., 203
  (1999), pp.~499--530.

\bibitem[ES01]{EstSer-01}
\leavevmode\vrule height 2pt depth -1.6pt width 23pt, {\em Nonrelativistic
  limit of the {D}irac-{F}ock equations}, Ann. Henri Poincar{\'e}, 2 (2001),
  pp.~941--961.

\bibitem[ES02]{EstSer-02a}
\leavevmode\vrule height 2pt depth -1.6pt width 23pt, {\em A max-min principle
  for the ground state of the {D}irac-{F}ock functional}, in Mathematical
  results in quantum mechanics (Taxco, 2001), vol.~307 of Contemp. Math., Amer.
  Math. Soc., Providence, RI, 2002, pp.~135--141.

\bibitem[Fer27]{Fermi-27}
{\sc E.~Fermi}, {\em Un metodo statistico per la determinazione di alcune
  priorieta dell'atome}, Rend. Accad. Naz. Lincei, 6 (1927), pp.~602--607.

\bibitem[FLS18]{FouLewSol-18}
{\sc S.~{Fournais}, M.~{Lewin}, and J.~P. {Solovej}}, {\em The semi-classical
  limit of large fermionic systems}, Calc. Var. Partial Differ. Equ.,  (2018),
  pp.~57--105.

\bibitem[FS89]{FefSec-89}
{\sc C.~L. Fefferman and L.~A. Seco}, {\em An upper bound for the number of
  electrons in a large ion}, Proc. Nat. Acad. Sci. U.S.A., 86 (1989),
  pp.~3464--3465.

\bibitem[FS90]{FefSec-90}
\leavevmode\vrule height 2pt depth -1.6pt width 23pt, {\em On the energy of a
  large atom}, Bull. Amer. Math. Soc. (N.S.), 23 (1990), pp.~525--530.

\bibitem[FS92]{FefSec-92}
\leavevmode\vrule height 2pt depth -1.6pt width 23pt, {\em Eigenvalues and
  eigenfunctions of ordinary differential operators}, Adv. Math., 95 (1992),
  pp.~145--305.

\bibitem[FS93]{FefSec-93}
\leavevmode\vrule height 2pt depth -1.6pt width 23pt, {\em Aperiodicity of the
  {H}amiltonian flow in the {T}homas-{F}ermi potential}, Rev. Mat.
  Iberoamericana, 9 (1993), pp.~409--551.

\bibitem[FS94a]{FefSec-94b}
\leavevmode\vrule height 2pt depth -1.6pt width 23pt, {\em The density in a
  one-dimensional potential}, Adv. Math., 107 (1994), pp.~187--364.

\bibitem[FS94b]{FefSec-94c}
\leavevmode\vrule height 2pt depth -1.6pt width 23pt, {\em The eigenvalue sum
  for a one-dimensional potential}, Adv. Math., 108 (1994), pp.~263--335.

\bibitem[FS94c]{FefSec-94}
\leavevmode\vrule height 2pt depth -1.6pt width 23pt, {\em On the {D}irac and
  {S}chwinger corrections to the ground-state energy of an atom}, Adv. Math.,
  107 (1994), pp.~1--185.

\bibitem[FS95]{FefSec-94d}
\leavevmode\vrule height 2pt depth -1.6pt width 23pt, {\em The density in a
  three-dimensional radial potential}, Adv. Math., 111 (1995), pp.~88--161.

\bibitem[FSW08]{FraSieWar-08}
{\sc R.~L. Frank, H.~Siedentop, and S.~Warzel}, {\em The ground state energy of
  heavy atoms: relativistic lowering of the leading energy correction}, Comm.
  Math. Phys., 278 (2008), pp.~549--566.

\bibitem[FSW09]{FraSieWar-09}
\leavevmode\vrule height 2pt depth -1.6pt width 23pt, {\em The energy of heavy
  atoms according to {B}rown and {R}avenhall: the {S}cott correction}, Doc.
  Math., 14 (2009), pp.~463--516.

\bibitem[GAD10]{GlaAmb-10}
{\sc K.~Glantschnig and C.~Ambrosch-Draxl}, {\em Relativistic effects on the
  linear optical properties of au, pt, pb and w}, New J. Phys., 12 (2010),
  p.~103048.

\bibitem[GS94]{GraSol-94}
{\sc G.~M. Graf and J.~P. Solovej}, {\em A correlation estimate with
  applications to quantum systems with {C}oulomb interactions}, Rev. Math.
  Phys., 06 (1994), pp.~977--997.

\bibitem[HS07]{HubSie-07}
{\sc M.~Huber and H.~Siedentop}, {\em Solutions of the {Dirac-Fock} equations
  and the energy of the electron-positron field}, Arch. Rat. Mech. Anal., 184
  (2007), pp.~1--22.

\bibitem[HS15]{HanSie-15}
{\sc M.~Handrek and H.~Siedentop}, {\em The ground state energy of heavy atoms:
  the leading correction}, Comm. Math. Phys., 339 (2015), pp.~589--617.

\bibitem[Hug90]{Hugues-90}
{\sc W.~Hughes}, {\em An atomic energy lower bound that agrees with {S}cott's
  corrections}, Adv. Math., 79 (1990), pp.~213--270.

\bibitem[ILS96]{IanLieSie-95}
{\sc A.~Iantchenko, E.~H. Lieb, and H.~Siedentop}, {\em Proof of a conjecture
  about atomic and molecular cores related to {S}cott's correction}, J. Reine
  Angew. Math., 472 (1996), pp.~177--195.

\bibitem[IS93]{IvrSig-93}
{\sc V.~J. Ivrii and I.~M. Sigal}, {\em Asymptotics of the ground state
  energies of large {C}oulomb systems}, Ann. of Math. (2), 138 (1993),
  pp.~243--335.

\bibitem[Ivr96]{Ivrii-96}
{\sc V.~Ivrii}, {\em Asymptotics of the ground state energy of heavy molecules
  in a strong magnetic field. {I}}, Russian J. Math. Phys., 4 (1996),
  pp.~29--74.

\bibitem[Ivr97]{Ivrii-97}
\leavevmode\vrule height 2pt depth -1.6pt width 23pt, {\em Asymptotics of the
  ground state energy of heavy molecules in a strong magnetic field. {II}},
  Russian J. Math. Phys., 5 (1997), pp.~321--354 (1998).

\bibitem[Kat83]{Kato-83}
{\sc T.~Kato}, {\em Holomorphic families of {D}irac operators}, Math. Z., 183
  (1983), pp.~399--406.

\bibitem[Lie79]{Lieb-79}
{\sc E.~H. Lieb}, {\em A lower bound for {C}oulomb energies}, Phys. Lett. A, 70
  (1979), pp.~444--446.

\bibitem[Lie81a]{Lieb-81b}
\leavevmode\vrule height 2pt depth -1.6pt width 23pt, {\em Thomas-{F}ermi and
  related theories of atoms and molecules}, Rev. Mod. Phys., 53 (1981),
  pp.~603--641.

\bibitem[Lie81b]{Lieb-81}
\leavevmode\vrule height 2pt depth -1.6pt width 23pt, {\em Variational
  principle for many-fermion systems}, Phys. Rev. Lett., 46 (1981),
  pp.~457--459.

\bibitem[LL01]{LieLos-01}
{\sc E.~H. Lieb and M.~Loss}, {\em Analysis}, vol.~14 of Graduate Studies in
  Mathematics, American Mathematical Society, Providence, RI, 2nd~ed., 2001.

\bibitem[LO80]{LieOxf-80}
{\sc E.~H. Lieb and S.~Oxford}, {\em Improved lower bound on the indirect
  {C}oulomb energy}, Int. J. Quantum Chem., 19 (1980), pp.~427--439.

\bibitem[LS73]{LieSim-73}
{\sc E.~H. Lieb and B.~Simon}, {\em {T}homas-{F}ermi theory revisited}, Phys.
  Rev. Lett., 31 (1973), pp.~681--683.

\bibitem[LS77a]{LieSim-77}
\leavevmode\vrule height 2pt depth -1.6pt width 23pt, {\em The {H}artree-{F}ock
  theory for {C}oulomb systems}, Commun. Math. Phys., 53 (1977), pp.~185--194.

\bibitem[LS77b]{LieSim-77b}
\leavevmode\vrule height 2pt depth -1.6pt width 23pt, {\em The {T}homas-{F}ermi
  theory of atoms, molecules and solids}, Adv. Math., 23 (1977), pp.~22--116.

\bibitem[LS10]{LieSei-09}
{\sc E.~H. Lieb and R.~Seiringer}, {\em The {S}tability of {M}atter in
  {Q}uantum {M}echanics}, Cambridge Univ. Press, 2010.

\bibitem[LT76]{LieThi-76}
{\sc E.~H. Lieb and W.~E. Thirring}, {\em Inequalities for the moments of the
  eigenvalues of the {S}chr{\"o}dinger hamiltonian and their relation to
  {S}obolev inequalities}, Studies in Mathematical Physics, Princeton
  University Press, 1976, pp.~269--303.

\bibitem[Mit81]{Mittleman-81}
{\sc M.~Mittleman}, {\em Theory of relativistic effects on atoms:
  Configuration-space hamiltonian}, Phys. Rev. A, 24 (1981), pp.~1167--1175.

\bibitem[MM17]{MormUl-17}
{\sc S.~Morozov and D.~M\"{u}ller}, {\em Lower bounds on the moduli of
  three-dimensional {C}oulomb-{D}irac operators via fractional {L}aplacians
  with applications}, J. Math. Phys., 58 (2017), pp.~072302, 22.

\bibitem[Nen76]{Nenciu-76}
{\sc G.~Nenciu}, {\em Self-adjointness and invariance of the essential spectrum
  for {D}irac operators defined as quadratic forms}, Commun. Math. Phys., 48
  (1976), pp.~235--247.

\bibitem[{{\O}s}98]{Sorensen-98}
{\sc T.~{{\O}stergaard S{\o}rensen}}, {\em Towards a relativistic {S}cott
  correction}, PhD thesis, Aarhus Universitet, 1998.

\bibitem[{{\O}s}05]{Sorensen-05}
\leavevmode\vrule height 2pt depth -1.6pt width 23pt, {\em The large-{$Z$}
  behavior of pseudorelativistic atoms}, J. Math. Phys., 46 (2005), pp.~052307,
  24.

\bibitem[Pat00]{Paturel-00}
{\sc E.~Paturel}, {\em Solutions of the {D}irac-{F}ock equations without
  projector}, Ann. Henri Poincar{\'e}, 1 (2000), pp.~1123--1157.

\bibitem[Sch80]{Schwinger-80}
{\sc J.~Schwinger}, {\em {T}homas-{F}ermi model: The leading correction}, Phys.
  Rev. A, 22 (1980), pp.~1827--1832.

\bibitem[Sch81]{Schwinger-81}
\leavevmode\vrule height 2pt depth -1.6pt width 23pt, {\em Thomas-{F}ermi
  model: the second correction}, Phys. Rev. A, 24 (1981), pp.~2353--2361.

\bibitem[Sco52]{Scott-52}
{\sc J.~Scott}, {\em The binding energy of the thomas-fermi atom}, Lond. Edinb.
  Dubl. Phil. Mag., 43 (1952), pp.~859--867.

\bibitem[S{\'e}r09]{Sere-09}
{\sc {\'E}.~S{\'e}r{\'e}}, {\em A new definition of the {D}irac-{F}ock ground
  state}.
\newblock Unpublished, 2009.

\bibitem[Sha02]{Shabaev-02}
{\sc V.~Shabaev}, {\em Two-time {G}reen's function method in quantum
  electrodynamics of high-{Z} few-electron atoms}, Phys. Rep., 356 (2002),
  pp.~119--228.

\bibitem[Sim79]{Simon-79}
{\sc B.~Simon}, {\em Trace ideals and their applications}, vol.~35 of London
  Mathematical Society Lecture Note Series, Cambridge University Press,
  Cambridge, 1979.

\bibitem[Sob96]{Sobolev-96}
{\sc A.~V. Sobolev}, {\em Two-term asymptotics for the sum of eigenvalues of
  the {S}chr\"{o}dinger operator with {C}oulomb singularities in a homogeneous
  magnetic field}, Asymptotic Anal., 13 (1996), pp.~393--421.

\bibitem[Sol91]{Solovej-91}
{\sc J.~P. Solovej}, {\em Proof of the ionization conjecture in a reduced
  {H}artree-{F}ock model}, Invent. Math., 104 (1991), pp.~291--311.

\bibitem[Sol16]{Solovej-16}
\leavevmode\vrule height 2pt depth -1.6pt width 23pt, {\em A new look at
  {T}homas--{F}ermi theory}, Mol. Phys., 114 (2016), pp.~1036--1040.

\bibitem[SS03]{SolSpi-03}
{\sc J.~P. {Solovej} and W.~L. {Spitzer}}, {\em A new coherent states approach
  to semiclassics which gives {S}cott's correction}, Commun. Math. Phys., 241
  (2003), pp.~383--420.

\bibitem[SSS10]{SolSorSpi-10}
{\sc J.~P. Solovej, T.~{\O}. S{\o}rensen, and W.~L. Spitzer}, {\em Relativistic
  {S}cott correction for atoms and molecules}, Comm. Pure Appl. Math., 63
  (2010), pp.~39--118.

\bibitem[SW87a]{SieWei-87}
{\sc H.~Siedentop and R.~Weikard}, {\em On the leading energy correction for
  the statistical model of the atom: interacting case}, Comm. Math. Phys., 112
  (1987), pp.~471--490.

\bibitem[SW87b]{SieWei-87b}
\leavevmode\vrule height 2pt depth -1.6pt width 23pt, {\em Upper bound on the
  ground state energy of atoms that proves {S}cott's conjecture}, Phys. Lett.
  A, 120 (1987), pp.~341--342.

\bibitem[SW89]{SieWei-89}
\leavevmode\vrule height 2pt depth -1.6pt width 23pt, {\em On the leading
  correction of the {T}homas-{F}ermi model: lower bound}, Invent. Math., 97
  (1989), pp.~159--193.
\newblock With an appendix by A. M. Klaus M{\"u}ller.

\bibitem[Tha92]{Thaller}
{\sc B.~Thaller}, {\em The {D}irac equation}, Texts and Monographs in Physics,
  Springer-Verlag, Berlin, 1992.

\bibitem[Tho27]{Thomas-27}
{\sc L.~H. Thomas}, {\em The calculation of atomic fields}, Proc. Camb. Philos.
  Soc., 23 (1927), pp.~542--548.

\bibitem[Tix98]{Tix-98}
{\sc C.~Tix}, {\em Strict positivity of a relativistic {H}amiltonian due to
  {B}rown and {R}avenhall}, Bull. London Math. Soc., 30 (1998), pp.~283--290.

\bibitem[W{\"u}s77]{Wust-77}
{\sc R.~W{\"u}st}, {\em Dirac operations with strongly singular potentials.
  {D}istinguished self-adjoint extensions constructed with a spectral gap
  theorem and cut-off potentials}, Math. Z., 152 (1977), pp.~259--271.

\bibitem[ZEP11]{ZalPyy-11}
{\sc P.~Zaleski-Ejgierd and P.~Pyykk\"o}, {\em Relativity and the mercury
  battery}, Phys. Chem. Chem. Phys., 13 (2011), pp.~16510--16512.

\end{thebibliography}

\newcommand{\etalchar}[1]{$^{#1}$}

\end{document}